\documentclass[bj]{imsart}

\RequirePackage[OT1]{fontenc}
\RequirePackage{amsthm,amsmath}
\usepackage[numbers]{natbib}
\RequirePackage[colorlinks,citecolor=blue,urlcolor=blue]{hyperref}

\arxiv{1609.04967}

\usepackage{color}
\usepackage{amssymb}
\usepackage{amsfonts}
\usepackage{comment}
\usepackage[small,bf,figurewithin=section]{caption}
\usepackage{graphicx}
\usepackage{graphics}
\usepackage{nicefrac}
\usepackage{multicol}
\usepackage[shortlabels]{enumitem}
\usepackage{dsfont}
\usepackage{bbm}
\usepackage{url}
\usepackage{capt-of}
\usepackage[english]{babel}
\usepackage[]{inputenc}
\usepackage{float}
\usepackage[caption=false]{subfig}
\captionsetup[subfigure]{labelformat=empty}
\usepackage{accents}
\newlength{\dtildeheight}

\startlocaldefs
\numberwithin{equation}{section}
\theoremstyle{plain}
\newtheorem{theorem}{Theorem}[section]
\theoremstyle{remark}
\newtheorem{remark}{Remark}[section]

\newtheorem{lemma}[theorem]{Lemma}
\newtheorem{proposition}[theorem]{Proposition}
\newtheorem{definition}[theorem]{Definition}
\newtheorem{corollary}[theorem]{Corollary}

\newtheorem{condition}[theorem]{Condition}
\newtheorem{fig}[theorem]{Figure}
\newtheorem{example}[theorem]{Example}
\endlocaldefs

\newcommand{\trans}{{}^{^{\intercal}}}

\newcommand{\bthe}{\begin{theorem}}
\newcommand{\ethe}{\end{theorem}}

\newcommand{\ben}{\begin{enumerate}}
\newcommand{\een}{\end{enumerate}}

\newcommand{\beq}{\begin{equation}}
\newcommand{\eeq}{\end{equation}}

\newcommand{\ble}{\begin{lemma}}
\newcommand{\ele}{\end{lemma}}

\newcommand{\bde}{\begin{definition}}
\newcommand{\ede}{\end{definition}}

\newcommand{\bco}{\begin{corollary}}
\newcommand{\eco}{\end{corollary}}

\newcommand{\bpr}{\begin{proposition}}
\newcommand{\epr}{\end{proposition}}

\newcommand{\bproof}{\begin{proof}}
\newcommand{\eproof}{\end{proof}}

\newcommand{\bexam}{\begin{example}\rm}
\newcommand{\eexam}{\halmos\end{example}}

\newcommand{\brem}{\begin{remark}\rm}
\newcommand{\erem}{\halmos\end{remark}}

\newcommand{\bfi}{\begin{fig}}
\newcommand{\efi}{\end{fig}}

\newcommand{\btab}{\begin{tab}}
\newcommand{\etab}{\end{tab}}

\newcommand{\beao}{\begin{eqnarray*}}
\newcommand{\eeao}{\end{eqnarray*}\noindent}

\newcommand{\beam}{\begin{eqnarray}}
\newcommand{\eeam}{\end{eqnarray}\noindent}

\newcommand{\barr}{\begin{array}}
\newcommand{\earr}{\end{array}}

\newcommand{\bdis}{\begin{displaymath}}
\newcommand{\edis}{\end{displaymath}\noindent}
\newcommand{\bs}{\boldsymbol}

\newcommand{\bbr}{\mathbb{R}}
\newcommand{\N}{\mathbb{N}}

\newcommand{\R}{\mathbb{R}}

\def\cals{{\mathcal{S}}}

\def\calu{{\mathcal{U}}}
\def\calo{{\mathcal{O}}}
\def\calv{{\mathcal{V}}}

\newcommand{\std}{\stackrel{d}{\rightarrow}}

\newcommand{\stv}{\stackrel{v}{\rightarrow}}

\newcommand{\nto}{n\to\infty}
\newcommand{\kto}{k\to\infty}

\newcommand{\al}{{\alpha}}

\newcommand{\ga}{{\gamma}}

\newcommand{\eps}{\epsilon}
\newcommand{\var}{\mathbb{V}{\rm ar}}
\newcommand{\cov}{\mathbb{C}{\rm ov}}

\newcommand{\eqqcolon}{=\mathrel{\mathop:}}

\newcommand{\ov}{\overline}
\newcommand{\wh}{\widehat}
\newcommand{\wt}{\widetilde}

\DeclareMathOperator*{\diag}{diag}

\DeclareMathOperator*{\argmin}{arg\,min}

\newcommand{\iso}{{\rm iso}}

\newcommand{\halmos}{\quad\hfill\mbox{$\Box$}}  

\allowdisplaybreaks[4]

\begin{document}

\begin{frontmatter}
\title{Semiparametric estimation for isotropic max-stable space-time processes}
\runtitle{Semiparametric estimation for isotropic max-stable space-time processes}

\begin{aug}
\author{\fnms{Sven} \snm{Buhl}\thanksref{a,e1}\ead[label=e1,mark]{sven.buhl@tum.de}}
\author{\fnms{Richard A.} \snm{Davis}\thanksref{b,e2}\ead[label=e2,mark]{rdavis@stat.columbia.edu}%
\ead[label=u2,url]{http://www.stat.columbia.edu/\textasciitilde rdavis}}
\author{\fnms{Claudia} \snm{Kl{\"u}ppelberg}\thanksref{a,e3}\ead[label=e3,mark]{cklu@tum.de}}
\and
\author{\fnms{Christina} \snm{Steinkohl}\thanksref{a,e4}%
\ead[label=e4,mark]{christina.steinkohl@gmail.com}%
\ead[label=u1,url]{http://www.statistics.ma.tum.de}}

\address[a]{Center for Mathematical Sciences and TUM Institute of Advanced Study, Technische Universit{\"a}t M{\"u}nchen, Boltzmannstr. 3, 85748 Garching, Germany.
\printead{e1,e3,e4},
\printead{u1}}

\address[b]{Department of Statistics, Columbia University, 1255 Amsterdam Avenue, New York, NY 10027, USA.
\printead{e2},
\printead{u2}}

\runauthor{S. Buhl et al.}


\end{aug}

\begin{keyword}
\kwd{Brown-Resnick process}
\kwd{extremogram}
\kwd{max-stable process}
\kwd{regular variation}
\kwd{semiparametric estimation}
\kwd{space-time process}
\kwd{subsampling}
\kwd{mixing}
\end{keyword}

\end{frontmatter}


\noindent
Regularly varying space-time processes have proved  useful to study extremal dependence in space-time data.
We propose a semiparametric estimation procedure based on a closed form expression of the extremogram  to estimate parametric models of extremal dependence functions.
We establish the asymptotic properties of the resulting parameter estimates and propose subsampling procedures to obtain asymptotically correct confidence intervals.
A simulation study shows that the proposed procedure works well for moderate sample sizes and is robust to small departures from the underlying model.  
Finally, we apply this estimation procedure to fitting a max-stable process to radar rainfall measurements in a region in Florida. Complementary results and some proofs of key results are presented together with the simulation study in the supplement~\citet{BDKSsupp}.

\section{Introduction}\label{Introduction}

Regularly varying processes provide a useful framework for modeling extremal dependence in continuous time or space. They have been investigated in \citet{HL2005,HL}.
A prominent class of examples consists of max-stable processes. A key example in this paper is the max-stable Brown-Resnick process which was introduced in a time series framework in \citet{Brown}, in a spatial setting in \citet{Schlather2}, and extended to a space-time setting in \citet{Steinkohl}.

In the literature, various dependence models and 
estimation procedures have been proposed for extremal data. 
For the Brown-Resnick process with parametrized dependence structure, inference has been based on composite likelihood methods. 
In particular, pairwise likelihood estimation has been found useful  to estimate parameters in a max-stable process. 
A description of this method can be found in \citet{Ribatet} for the spatial setting, and \citet{Huser} in a space-time setting. 
Asymptotic results for  pairwise likelihood estimates and detailed analyses in the space-time setting for the model analysed in this paper are given in \citet{Steinkohl2}.
Unfortunately, parameter estimation using composite likelihood methods can be laborious, since the computation and subsequent optimization of the objective function is time-consuming. 
Also the choice of good initial values for the optimization of the composite likelihood is essential.

In this paper we introduce a new semiparametric estimation procedure for regularly varying processes which is based on the extremogram as a natural extremal analog of the correlation function for stationary processes. 
The extremogram was introduced in \citet{Davis2}  for time series (also in \citet{Fasen}), and they show consistency
and asymptotic normality of an empirical extremogram estimate under weak mixing conditions.
The empirical extremogram and its asymptotic properties in a spatial setting have been investigated in  \citet{buhl3} and \citet{Cho}.
It can serve as a useful graphical tool for assessing extremal dependence structures in spatial and space-time processes that provides clues about potential parametric models, a critical step in the model building paradigm.  For example, compatibility with various assumptions such as isotropy and stationarity (see \citet{buhl1} and \citet{Steinkohl2} for some examples), can be assessed by examining invariance of the empirical extremogram when computed over specially chosen subsets of the data.  Ultimately, a number of families of proposed parametric models are often fitted before deciding on a particular class of models.  Therefore it is of interest to be able to not only have a procedure that can compute estimates rapidly, but also to serve as a check on the efficacy of model choice. Additionally, the new estimation procedure  allows one to provide parameter estimates that can be used as initial values in more refined procedures, such as composite likelihood.

Our semiparametric estimation method assumes a spatially isotropic and additively separable dependence structure for regularly varying space-time processes.
We first estimate the extremogram nonparametrically by its empirical version, where we can hence separate space and time. 
Weighted linear regression is then applied in order to produce parameter estimates. 
Asymptotic normality of these semiparametric estimates requires asymptotic normality of the empirical extremogram, and we apply the CLT with mixing conditions as provided in \cite{buhl3}. The rate of convergence can be improved by a bias correction term, a fact which we explain in detail.  
The proofs of the asymptotic properties of semiparametric spatial and temporal parameter estimates are analogous, and we present the details on the spatial parameters only, referring to \citet{buhlphd}, Chapter~3, for details about the asymptotic properties of the semiparametric temporal parameter. 

In a second step we establish asymptotic normality of the weighted least squares parameter estimates. When the dependence parameters have bounded support, as for the Brown-Resnick process in Section~\ref{sec7:Fr}, constrained optimization has to be applied. Then also the limit law differs depending whether the true parameters lie on the boundary or not.
Since the asymptotic covariance matrix in the normal limit is difficult to access, we apply subsampling procedures to obtain pointwise confidence intervals for the parameters.

The semiparametric estimates converge at a slower rate than the square root rate of a fully parametric procedure such as pairwise likelihood estimation. 
However, it is known that likelihood-based estimates may be inefficient and even not consistent if the model is slightly misspecified.  
The semiparametric estimates, however, are often unaffected by slight deviations in the model.  This is proved in Section~9 and illustrated in Section~10 of the supplement~\cite{BDKSsupp}, where data are generated from a Brown-Resnick process, but with observational noise.  The semiparametric estimates clearly outperform pairwise likelihood estimates in this case.  On the other hand, the semiparametric estimates perform admirably well relative to the pairwise likelihood estimates when the underlying process is in fact a Brown-Resnick process. 

Our paper is organized as follows.
Section~\ref{Desmodel} defines regularly varying processes in space and time and their extremogram. Based on gridded data, the nonparametric extremogram estimation is derived and used for parametric model fitting.
Asymptotic normality of the parameter estimates is established in Section~\ref{Asymptotics}.
Section~\ref{spacesection1} is dedicated to the asymptotic normality of the empirical extremogram; and
Section~\ref{spacesection2} deals with the asymptotic properties of the parameter estimates.
The subsampling procedure -- as well as results and proofs for our setting -- is given in Section~7 of the supplement~\cite{BDKSsupp}.
In Section~\ref{sec7:Fr} we apply the semiparametric method to the Brown-Resnick process and verify the required conditions. Here we also calculate the bias corrected estimator. 
{We test our new semiparametric estimation procedure in a simulation study presented in the supplement~\cite{BDKSsupp}
and compare it to pairwise likelihood estimation, both when applied to data generated by a Brown-Resnick process and when the data are affected by observational noise.  In the latter, our  procedure produces estimates with less bias than those based on pairwise likelihood (see Section~10 of the supplement~\cite{BDKSsupp}).
The paper concludes with an analysis of daily rainfall maxima in a region in Florida in Section~\ref{sec:Florida}, where we also compare the semiparametric estimates with previously obtained pairwise likelihood estimates.
The supplement~\cite{BDKSsupp}} contains four sections, on subsampling, on $\al$-mixing of the Brown-Resnick process, a robustness result for the bias corrected estimator, and a simulation study.

\section{Model description and semiparametric estimates}\label{Desmodel}

In this paper we consider strictly stationary \textit{regularly varying processes} in space and time $\{\eta(\bs s,t): \bs s \in \mathbb{R}^{d-1}, t \in [0,\infty)\}$ for $d\in\N$, where all finite-dimensional distributions are regularly varying (cf. \citet{HL} for definitions and results in a general framework and \citet{Resnick3} for details about multivariate regular variation).
Throughout, $f(n)\sim g(n)$ means that $\lim_{\nto}\frac{f(n)}{g(n)} =1$.
As a prerequisite, we define for every finite set $\mathcal{I} \subset \mathbb{R}^{d-1} \times [0,\infty)$ with cardinality $|\mathcal{I}|$  the vector 
$$\eta_{\mathcal{I}}:= (\eta(\bs s,t) : (\bs s,t) \in \mathcal{I})\trans.$$ Let furthermore $\| \cdot \|$ be the Euclidean norm on $\mathbb{R}^{d-1}$.

\begin{definition}[Regularly varying stochastic process] \label{def:reg_var}
A strictly stationary stochastic space-time process $\{\eta(\bs s,t): (\bs s,t) \in \mathbb{R}^{d-1} \times [0,\infty)\}$ is called {\em regularly varying}, if there exists some normalizing sequence $0<a_n \to \infty$ such that $\mathbb{P}(|\eta(\bs 0,0)| > a_n) \sim n^{-d}$ as $n \to \infty$, and if for every finite set $\mathcal{I} \subset \mathbb{R}^{d-1}\times [0,\infty)$,
\begin{align} \label{regvar}
n^d \mathbb{P}\Big(\frac{\eta_{\mathcal{I}}}{a_n} \in \cdot\Big) \stv \mu_{\mathcal{I}}(\cdot), \quad n \rightarrow \infty,
\end{align}
for some non-null Radon measure $\mu_{\mathcal{I}}$ on the Borel sets in $\overline{\mathbb{R}}^{|\mathcal{I}|}\backslash\{\bs 0\}$. In that case, 
$$\mu_{\mathcal{I}}(xC)=x^{-\beta}\mu_{\mathcal{I}}(C), \quad x>0,$$ 
for every Borel set $C$ in $\overline{\mathbb{R}}^{|\mathcal{I}|}\backslash\{\bs 0\}$.
The notation $\stv$ stands for vague convergence, and $\beta>0$ is called the {\em index of regular variation}.
\end{definition}
For every $(\bs s,t) \in \mathbb{R}^{d-1}\times [0,\infty)$ and $\mathcal{I}=\{(\bs s,t)\}$ we set $\mu_{\{(\bs s,t)\}}(\cdot)=\mu_{\{(\bs 0,0)\}}(\cdot)=:\mu(\cdot),$ which is justified by stationarity.
Throughout we furthermore consider the space-time process $\{\eta(\bs s,t): (\bs s,t) \in \mathbb{R}^{d-1} \times [0,\infty)\}$ to be spatially isotropic. Together with the assumption of strict stationarity, this means that extremal dependence between two space-time points $(\bs s_1,t_1)$ and $(\bs s_2,t_2)$ is only driven by the spatial and temporal lags $v:=\|\bs s_1-\bs s_2\|$ and $u:=|t_1-t_2|$, respectively, and we can define the extremogram only as a function of $v$ and $u$. The extremogram was introduced 
for spatial and space-time processes by \citet{buhl3} and \citet{Cho}, based on \citet{steinkohlphd}, and can be regarded as a correlogram for extreme events.

\begin{definition}[The extremogram]
For a regularly varying strictly stationary isotropic space-time process $\{\eta(\bs{s},t): (\bs{s},t)\in\bbr^{d-1}\times [0,\infty)\}$ we define the \emph{space-time extremogram} for two $\mu$-continuous Borel sets $A$ and $B$ in $\ov{\mathbb{R}} \backslash\{0\}$ (i.e. $\mu(\partial A)=\mu(\partial B)=0$) such that $\mu(A)>0$ by
\begin{equation}
\rho_{AB}(v,u) = \lim_{n\to \infty}\frac{\mathbb{P}\left(\eta(\bs{s}_1,t_1)/a_n\in A,\eta(\bs{s}_2,t_2)/a_n\in B\right)}{\mathbb{P}\left(\eta(\bs{s}_1,t_1)/a_n \in A\right)},
\label{extremogramst}
\end{equation}
where $v=\|\bs s_1-\bs s_2\|$ and $u=|t_1-t_2|$.
Setting $A=B=(1,\infty)$, this reduces to the \textit{tail dependence coefficient} $\chi(v,u) = \rho_{(1,\infty)(1,\infty)}(v,u)$. 
\end{definition}

In what follows we propose a two-step semiparametric estimation procedure of a parametric model of the extremogram. In particular, we assume that the model is additively separable such that setting either the temporal lag $u$ or the spatial lag $v$ equal to $0$, it can be linearly parametrized as 
\begin{align}
T_1(\chi(v,0))=T_1(\chi(v,0;C_1,\alpha_1))=C_1+\alpha_1 v, \quad (C_1,\alpha_1) \in \Theta_{\mathcal{S}}, \quad v \geq 0, \label{eq:param_space}
\end{align}
and
\begin{align} 
 T_2(\chi(0,u))=T_2(\chi(0,u;C_2,\alpha_2))=C_2+\alpha_2 u, \quad (C_2,\alpha_2) \in \Theta_{\mathcal{T}} \quad u\geq 0,\label{eq:param_time}
\end{align}
  where $T_1$ and $T_2$ are known suitable strictly monotonous continuously differentiable transformations and the parameters $(C_1,\alpha_1)$ and $(C_2,\alpha_2)$ lie in appropriate parameter spaces $\Theta_{\mathcal{S}}$ and $\Theta_{\mathcal{T}}$. We refer to $(C_1,\alpha_1)$ as the {\em spatial parameter} and to $(C_2,\alpha_2)$ as the {\em temporal parameter}.
Equations~\eqref{eq:param_space} and \eqref{eq:param_time} are the basis for parameter estimates. We replace the extremogram on the left hand side in both of these equations by nonparametric estimates sampled at different lags. 
Then we use constrained weighted least squares estimation in a linear regression framework to obtain parameter estimates. 

For better understanding, we stick to the $2$-dimensional spatial case $d-1=2$; however, the method can directly be generalized and applied to higher dimensions.
The  estimation procedure is based on the following observation scheme for the space-time data. 
\begin{condition}\label{grid}
(1) \, The locations lie on a regular grid
$$\cals_n= \big \{(i_1,i_2): i_1,i_2 \in \left\{1,\ldots,n\right\}\big \} 
= \big \{\bs{s}_i: i=1,\ldots,n^2\big \} .$$
(2) \, The time points are equidistant, given by the set $\{t_1,\ldots,t_T\}$.
\end{condition}

\brem
The assumption of a regular grid can be relaxed in various ways. A simple, but notationally more involved extension is the generalization to rectangular grids, cf. \citet{buhl3}, Section~3. Furthermore, it is possible to assume that the observation area consists of random locations given by points of a Poisson process, see for instance \citet{Cho}, Section~2.3, or \citet{steinkohlphd}, Section~4.5.2.
 Also  deterministic, but irregularly spaced locations, could be considered as treated in \cite{steinkohlphd} in Section~4.5.1 in the context of pairwise likelihood estimation.
{In order to make our method transparent we focus on observations on a regular grid.}
\erem

The following scheme provides the semiparametric estimation procedure in detail. \\
Denote by $\calv$ and $\mathcal{U}$ finite sets of spatial and temporal lags, on which the estimation is based. 
Concerning their choice, we generally include those lags which show clear extremal dependence between locations or time points.
Larger lags should not be considered, since they may introduce a bias in the least squares estimates, similarly as in pairwise likelihood estimation; cf. \citet{buhl1}, Section~5.3.
One way to determine the range of clear extremal dependence are permutation tests, which we describe at the end of Section~\ref{sec:Florida}. \\

\noindent (1) Nonparametric estimates for the extremogram:\\
{Summarize all pairs of $\cals_n$ which give rise to the same spatial lag $v\in\calv$ into 
$$N(v) = \{(i,j)\in\{1,\dots,n^2\}^2: \ \|\bs{s}_i-\bs{s}_j\|=v\}.$$}
For all $t\in\left\{t_1,\ldots,t_T\right\}$ estimate the {\em spatial extremogram} by
\begin{equation}\label{extremogramspace}
\wh{\chi}^{(t)}(v,0) = \frac{\dfrac{1}{|N(v)|}
\sum\limits_{i=1}^{n^2}\sum\limits_{{j=1\atop \|\bs{s}_i-\bs{s}_j\|=v}}^{n^2} \mathds{1}_{\left\{\eta(\bs{s}_i,t)>q,\eta(\bs{s}_j,t)>q\right\}}}
{\dfrac1{n^2}\sum\limits_{i=1}^{n^2} \mathds{1}_{\left\{\eta(\bs{s}_i,t)>q\right\}}}, \quad v \in \calv,
\end{equation}
where 
$q$ is a large quantile (to be specified) of the standard unit Frech\'et distribution. \\[2mm]
For all $\bs{s}\in\cals_n$ estimate the {\em temporal extremogram} by
\begin{equation}\label{extremogramtime}
\wh{\chi}^{(\bs{s})}(0,u) = \frac{\frac{1}{T-u}\sum\limits_{k=1}^{T-u} \mathds{1}_{\left\{\eta(\bs{s},t_k)>q,\eta(\bs{s},t_k+u)>q\right\}}}{\frac{1}{T}\sum\limits_{k=1}^T\mathds{1}_{\left\{\eta(\bs{s},t_k)>q\right\}}}, \quad u\in\mathcal{U}\,,
\end{equation}
where again $q$ is a large (possibly different) quantile of the standard unit Frech\'et distribution. \\[2mm]
\noindent (2) The overall ``spatial'' and ``temporal'' extremogram estimates are defined as averages over the temporal and spatial locations, respectively; i.e.,
\beam
\wh{\chi}(v,0) &=& \frac{1}{T}\sum_{k=1}^T \wh{\chi}^{(t_k)}(v,0), \quad v \in \calv,\label{extrspacemean}\\
\wh{\chi}(0,u) &=& \frac{1}{{n^2}}\sum_{i=1}^{n^2}\wh{\chi}^{(\bs{s}_i)}(0,u), \quad u\in \mathcal{U}. \label{extrtimemean}
\eeam
(3)  Parameter estimates for $C_1,\alpha_1, C_2$ and $\alpha_2$ are found by using 
 weighted least squares estimation:
\begin{align}\label{minspace}
\begin{pmatrix}
\wh{C}_1 \\ 
\wh{\alpha}_1
\end{pmatrix} 
= \argmin_{(C_1,\alpha_1) \in \Theta_{\mathcal{S}}} 
&\sum_{v\in \calv}  w_v  \Big(T_1(\wh{\chi}(v,0))-\big(C_1 + \alpha_1 v\big)\Big)^2,
\end{align} 
\begin{align}\label{mintime}
\begin{pmatrix}\wh{C}_2 \\ 
\wh{\alpha}_2\end{pmatrix}
=\argmin_{(C_2,\alpha_2) \in \Theta_{\mathcal{T}}} 
&\sum_{u\in \mathcal{U}}w_u \Big(T_2(\wh{\chi}(0,u))-\big(C_2 + \alpha_2 u\big)\Big)^2,
\end{align}
with weights $w_u>0$ and $w_v>0$. 

We call the estimates $(\wh{C}_1,\wh{\alpha}_1)$ and $(\wh{C}_2,\wh{\alpha}_2)$ {\em weighted least squares estimates} (WLSE).
{This approach bears similarity with that proposed by \citet{EKS}, who suggest semiparametric weighted least squares estimation of the parameters of parametric models of the stable tail dependence function based on iid random vector observations.}

\section{Asymptotic properties of the WLSE}\label{Asymptotics}

In this section we investigate aymptotic properties of the WLSE $(\wh{C}_1,\wh{\alpha}_1)$ and $(\wh{C}_2,\wh{\alpha}_2)$. Recall from \eqref{minspace} and \eqref{mintime} that they are functions of the averaged empirical extremogram $\wh{\chi}(\cdot,\cdot)$. Its definition is given in~\eqref{extrspacemean} and \eqref{extrtimemean} and implies that we first need CLTs of the pointwise empirical extremograms  $\wh{\chi}^{(t)}$ and $\wh{\chi}^{(\bs{s})}$ for a fixed time point $t$ and a fixed location $\bs{s}$, respectively. 
 Sections~\ref{spacesection1} and \ref{spacesection2} focus on the {spatial parameters}. 
 {The corresponding results for the temporal case can be derived similarly by replacing $n$ with $\sqrt{T}$ and can be found with full details in \citet{buhlphd}, Chapter~3 for the Brown-Resnick space-time process. 
We use several results for the extremogram provided in Section~8 of the supplement~\cite{BDKSsupp} 
and in \citet{buhl3}.

\subsection{Asymptotics of the empirical spatial extremogram}\label{spacesection1}

We show a CLT for the empirical spatial extremogram of regularly varying space-time processes, which is defined in~\eqref{regvar} and based on a finite set of observed spatial lags 
$$\calv= \{v_1,\ldots, v_p\},$$
which show clear extremal dependence as explained in Section~\ref{Desmodel}.
First we state conditions under which the empirical extremogram centred by the pre-asymptotic version is asymptotically normal.

\begin{theorem}\label{extspace2} 
For a fixed time point $t\in \{t_1,\ldots,t_T\}$, consider a regularly varying spatial process $\left\{\eta(\bs{s},t): \bs{s}\in \bbr^2\right\}$ as defined in Definition~\ref{def:reg_var}. Let $a_n$ be a sequence as in~\eqref{regvar}. Assume that there  exists $\ga>0$ that satisfies $\max\{v_1,\ldots,v_p\} \leq \ga$, such that the following conditions are satisfied:
\begin{enumerate}[(M1)]
\item 
$\{\eta(\bs{s},t): \bs{s}\in\bbr^2\}$ is $\alpha$-mixing with $\alpha$-mixing coefficients $\alpha_{k,\ell}(\cdot)$. 
\end{enumerate}
There exist sequences $m=m_n, r=r_\nto $ with $m_n/n \to 0$ and $r_n/m_n \to 0$ as $\nto$ such that the following hold:
\begin{enumerate}[(M1)]
\setcounter{enumi}{1}
\item 
$m_n^{2}r_n^{2}/n \to 0$.
\label{cond0}
\item 
\label{cond1}
For all $\epsilon>0$: 
\begin{align*}
&\lim\limits_{\kto } \limsup\limits_{\nto }
\sum\limits_{\bs h \in \mathbb{Z}^{2} :  k< \Vert{\bs h}\Vert \leq r_n}
 m_n^2 \\
& \quad \quad\quad  \mathbb{P}\Big(\max\limits_{\bs s \in B(\bs 0,\gamma)} 
|\eta(\bs s,t)|>\epsilon a_m, \max\limits_{\bs s' \in B(\bs h,\gamma)} |\eta(\bs s',t)|>\epsilon a_m\Big)=0,
\end{align*} 
where $B(\bs h,\ga):=\{\bs s\in\mathbb{Z}^2 : \|\bs s-\bs h\|\le\ga\}$ {for $\bs h\in\R^2$}.
\item 
\begin{enumerate}[(i)]
\item 
$\lim\limits_{\nto } m_n^2 \sum\limits_{\bs h \in \mathbb{Z}^{2} : \Vert{\bs h}\Vert > r_n} \alpha_{1, 1}(\|\bs h\|)=0$, \label{cond2} 
\item 
$\sum\limits_{\bs h \in \mathbb{Z}^{2}} \alpha_{p,q}(\Vert{\bs h}\Vert) < \infty$ \ \mbox{ for } $2\le p+q\le 4$, \label{cond3}
\item 
$\lim\limits_{\nto }  m_n n \ \alpha_{1,{n^2}}(r_n)=0$, \label{cond4} 
\end{enumerate}
\end{enumerate}

Then the empirical spatial extremogram $\wh{\chi}^{(t)}(v,0)$ defined in \eqref{extremogramspace} with the quantile $q=a_m$ satisfies
\begin{align}
\frac{n}{m_n}\big(\wh{\chi}^{(t)}(v,0) - \chi_n(v,0)\big)_{v \in \calv}\stackrel{d}{\to} \mathcal{N}(\bs 0,\Pi^{(\iso)}_1), \quad \nto, \label{CLT_nonave}
\end{align}
where the covariance matrix $\Pi^{(\iso)}_1$ is specified in equation~\eqref{isomat} below, and $\chi_n$ is the pre-asymptotic spatial extremogram,
\begin{equation}
\chi_n(v,0) = \frac{\mathbb{P}(\eta(\bs{0},0)>{a_m},\eta(\bs{h},0)>a_m)}{\mathbb{P}(\eta(\bs{0},0)>a_m)}, \quad v=\|\bs h\|\in \calv\,.
\label{preasymptotic1}
\end{equation}
\end{theorem}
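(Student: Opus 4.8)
The plan is to treat the empirical extremogram as a smooth (ratio) transformation of a vector of exceedance-frequency statistics, to prove a joint central limit theorem for that vector by the large-block--small-block method for $\alpha$-mixing random fields, and then to pass to the ratio by the delta method. First I would rewrite \eqref{extremogramspace} as $\wh\chi^{(t)}(v,0)=\wh p(v)/\wh q$, where
\[
\wh p(v)=\frac{1}{|N(v)|}\sum_{(i,j)\in N(v)}\mathds{1}_{\{\eta(\bs s_i,t)>a_m,\,\eta(\bs s_j,t)>a_m\}}, \qquad \wh q=\frac{1}{n^2}\sum_{i=1}^{n^2}\mathds{1}_{\{\eta(\bs s_i,t)>a_m\}}.
\]
By strict stationarity and spatial isotropy every summand in $\wh p(v)$ has expectation $p_n(v):=\mathbb P(\eta(\bs 0,0)>a_m,\eta(\bs h,0)>a_m)$ with $\|\bs h\|=v$, and every summand in $\wh q$ has expectation $q_n:=\mathbb P(\eta(\bs 0,0)>a_m)$, so that $\E\wh p(v)=p_n(v)$, $\E\wh q=q_n$ and $\chi_n(v,0)=p_n(v)/q_n$. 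Thus the centring in \eqref{CLT_nonave} is exactly at the means of numerator and denominator, and no bias term enters at this stage; the gap between $\chi_n$ and the limit $\chi$ is a separate matter handled by the bias correction.

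The core step is the joint CLT
\[
\frac{n}{m_n}\Big(\big(\wh p(v)-p_n(v)\big)_{v\in\calv},\ \wh q-q_n\Big)\std \mathcal N(\bs 0,\Sigma),
\]
which I would obtain by the blocking technique. I would partition $\{1,\dots,n\}^2$ into large blocks of side $m_n$ separated by corridors of width $r_n$; $\alpha$-mixing (M1) together with the rate conditions (M4)(i),(iii) renders distinct large blocks asymptotically independent, while (M2) $m_n^2r_n^2/n\to0$ guarantees that the corridor contributions and the pairs straddling two blocks are asymptotically negligible. The anti-clustering condition (M3) controls the within-block second moments so that the normalized block sums have a stable variance, and the summability (M4)(ii) of $\alpha_{p,q}$ for $2\le p+q\le4$ ensures that all limiting (cross-)covariances are finite. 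A Lindeberg--Bernstein CLT applied to the sum over the asymptotically independent large blocks, reduced to the scalar case by the Cram\'er--Wold device, then yields the display; this is precisely the empirical-extremogram CLT of \citet{buhl3}, so the substantive work is to verify that M1--M4 imply its hypotheses and to carry the whole vector of lags together with the marginal term through the argument simultaneously.

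Finally I would pass from these building blocks to the ratio by the delta method. The expansion
\[
\wh\chi^{(t)}(v,0)-\chi_n(v,0)=\frac{1}{q_n}\big(\wh p(v)-p_n(v)\big)-\frac{p_n(v)}{q_n^2}\big(\wh q-q_n\big)+R_n(v),
\]
with $R_n(v)$ of smaller order by the consistency of $\wh q$ and the rate from the joint CLT, shows that $\tfrac{n}{m_n}(\wh\chi^{(t)}(v,0)-\chi_n(v,0))_{v\in\calv}$ is a fixed linear image of the Gaussian vector above and is therefore asymptotically normal. Computing the covariance of that linear image, and in doing so summing the contributions of all pairs $(\bs s_i,\bs s_j)$ realizing a given Euclidean lag $v$ over their possible orientations, which is where isotropy enters, produces the matrix $\Pi^{(\iso)}_1$ of \eqref{isomat}.

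I expect the main obstacle to be the joint CLT: verifying that the stated mixing and anti-clustering conditions are exactly strong enough to decouple the large blocks, to annihilate the corridor and boundary-pair terms, and to force convergence of the covariances, all while the quantile level $q=a_m$ drifts with $n$. Matching the resulting limiting covariance to the isotropic form $\Pi^{(\iso)}_1$ is the additional bookkeeping that the single-lag result in \citet{buhl3} does not already supply.
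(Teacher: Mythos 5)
Your overall strategy is sound and is essentially the standard one, but note that the paper's own proof contains almost none of it: the joint CLT for the vector of lag-vector-indexed empirical extremograms is simply quoted as Theorem~4.2 of \citet{buhl3} (conditions (M1)--(M4) are that theorem's hypotheses restated, so there is nothing to verify), and the entire content of the paper's proof is the final aggregation step you mention only in passing --- writing $\wh\chi^{(t)}(v_i,0)-\chi_n(v_i,0)=\sum_{\bs h\in L(v_i)}\tfrac{|N(\bs h)|}{|N(v_i)|}\big(\wh\chi^{(t)}(\bs h,0)-\chi_n(\bs h,0)\big)$ using isotropy of the pre-asymptotic extremogram, and reading off $\Pi^{(\iso)}_1=N\Pi^{(\textnormal{space})}_1N\trans$ for the block-diagonal weight matrix $N$ of \eqref{iso_N}. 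So your proposal re-derives the ingredient the paper cites and compresses the part the paper actually proves; what the re-derivation buys is self-containedness, at the cost of reproducing the big-block/small-block machinery of \citet{buhl3}.

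There is, however, one concrete flaw in the re-derivation as written: the displayed joint CLT $\tfrac{n}{m_n}\big((\wh p(v)-p_n(v))_{v\in\calv},\,\wh q-q_n\big)\std\mathcal N(\bs 0,\Sigma)$ has the wrong normalization. Since $q_n=\mathbb P(\eta(\bs 0,0)>a_m)\sim m_n^{-2}$ and $\var(\wh q)=O(q_n/n^2)$ under the summability of the mixing coefficients, one has $\tfrac{n}{m_n}(\wh q-q_n)=O_p(m_n^{-2})\to0$, and likewise for $\wh p(v)$; so $\Sigma$ as stated is the zero matrix. Consequently the claim that $\tfrac{n}{m_n}(\wh\chi^{(t)}(v,0)-\chi_n(v,0))_{v\in\calv}$ is a \emph{fixed} linear image of that Gaussian vector fails: the delta-method coefficients $1/q_n$ and $p_n(v)/q_n^2$ diverge like $m_n^2$. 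The cure is exactly the device used in \citet{buhl3} and in the proof of Corollary~\ref{coroll_average}: state the CLT for the rescaled tail empirical measures, i.e. for $m_n^2\wh p(v)$ and $m_n^2\wh q$ (equivalently at rate $nm_n$ for the raw frequencies), whose centering values converge to the finite limit measures $\tau(\cdot)$ and $\mu(A)$; then the limit at rate $n/m_n$ is nondegenerate and the delta-method coefficients converge to $1/\mu(A)$ and $\tau/\mu(A)^2$. With that bookkeeping repaired, your argument goes through and the orientation-summing step at the end coincides with the paper's.
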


\bproof 
Theorem~\ref{extspace2} is a direct application of Theorem~{4.2} of \citet{buhl3} to the process $\{\eta(\bs{s},t): \bs{s}\in \bbr^2\}$ for $d=2$ and $A=B=(1,\infty)$. 
For the specification of the asymptotic covariance matrix we need to adapt that theorem {to the isotropic case,} where each spatial lag $v_i$ arises from a set of different  vectors $\bs h$, all with same Euclidean norm $v_i$.
For $i \in \{1,\ldots,p\}$ such that $v_i \in \calv$, we summarize these into
$$L(v_i):=\{\bs h \in \mathbb{Z}^2: \|\bs h\|=v_i\}=\{\bs h_1^{(i)},\ldots,\bs h_{\ell_i}^{(i)}\},$$
where $\ell_i:=|L(v_i)|$. 
We conclude that 
\begin{align}
&\frac{n}{m_n} \big(\wh{\chi}^{(t)}(\bs h_1^{(i)},0)-\chi_n(\bs h_1^{(i)},0),\ldots,\wh{\chi}^{(t)}(\bs h_{\ell_i}^{(i)},0)-\chi_n(\bs h_{\ell_i}^{(i)},0)\big)\trans_{{i=1,\ldots,p}}
 \std \mathcal{N}(\bs 0,\Pi^{(\textnormal{space})}_1), \nonumber
\end{align}
where $\Pi^{(\textnormal{space})}_1$ is specified in equation~{(4.3)-(4.6)} of \cite{buhl3}.
Note the slight misuse of notation committed here for the sake of simplicity: by $\wh{\chi}^{(t)}(\bs h,0)$ (instead of $\wh{\chi}^{(t)}(v,0)$) we denote the empirical extremogram for each single vector $\bs h \in L(v_i)$  specified above; i.e., 
\begin{equation*}
\wh{\chi}^{(t)}(\bs h,0) = \frac{\dfrac{1}{|N(\bs h)|}
\sum\limits_{i=1}^{n^2}\sum\limits_{{j=1\atop \bs{s}_i-\bs{s}_j=\bs h}}^{n^2} \mathds{1}_{\left\{\eta(\bs{s}_i,t)>q,\eta(\bs{s}_j,t)>q\right\}}}
{\dfrac1{n^2}\sum\limits_{i=1}^{n^2} \mathds{1}_{\left\{\eta(\bs{s}_i,t)>q\right\}}},
\end{equation*}
where $N(\bs h):=\{(i,j)\in\{1,\dots,n^2\} : \bs{s}_i-\bs{s}_j=\bs h\}$ (instead of $N(v)$). Analogously we define the pre-asymptotic extremogram $\chi_n(\bs h,0)$ w.r.t. a vector $\bs h$.

It holds that $|N(v_i)|=\sum_{\bs h \in L(v_i)} |N(\bs h)|$.
Isotropy implies furthermore for the pre-asymptotic extremogram that $\chi_n(v_i,0)=\chi_n(\bs h,0)$ for all $\bs h \in L(v_i)$, such that
\begin{align}
\chi_n(v_i,0)=\sum\limits_{\bs h \in L(v_i)} \frac{|N(\bs h)|}{|N(v_i)|} \chi_n(v_i,0)
=\sum\limits_{\bs h \in L(v_i)} \frac{|N(\bs h)|}{|N(v_i)|} \chi_n(\bs h,0)\label{rhpre}
\end{align}
as well as, by the definition of the estimator in \eqref{extremogramspace},
\begin{align}
\wh{\chi}^{(t)}(v_i,0)
= \sum\limits_{\bs h \in L(v_i)} \frac{|N(\bs h)|}{|N(v_i)|} \wh{\chi}^{(t)}(v_i,0)
= \sum\limits_{\bs h \in L(v_i)} \frac{|N(\bs h)|}{|N(v_i)|} \wh{\chi}^{(t)}(\bs h,0). \label{rhhat}
\end{align}
We conclude by \eqref{rhpre} and \eqref{rhhat} that 
\begin{align*}
\wh{\chi}^{(t)}(v_i,0)-\chi_n(v_i,0)
&=\sum\limits_{\bs h \in L(v_i)}  \frac{|N(\bs h)|}{|N(v_i)|} \big(\wh{\chi}^{(t)}(\bs h,0)-\chi_n(\bs h,0) \big).
\end{align*}
To obtain a concise representation of the asymptotic normal law for the isotropic extremogram, we define row vectors $({|N(\bs h)|}/{|N(v_i)|}: \bs h \in L(v_i))$ for $i=1,\dots,p$.
Set $L:=\sum_{i=1}^p \ell_i$ and define the $p\times L-$matrix 
\begin{align}
N:=
\begin{pmatrix} 
\Big(\frac{|N(\bs h)|}{|N(v_1)|}: \bs h \in L(v_1)\Big) & \bs 0 & \bs 0 & \bs 0\\
 \bs 0 & \Big(\frac{|N(\bs h)|}{|N(v_2)|}: \bs h \in L(v_2)\Big) & \bs 0 & \bs 0\\
\vdots & \vdots & \ddots & \bs 0\\
 \bs 0 & \bs 0 & \bs 0 & \Big(\frac{|N(\bs h)|}{|N(v_p)|}: \bs h \in L(v_p)\Big)
\end{pmatrix}. \label{iso_N}
\end{align}
Then we find 
\beao
&&\frac{n}{m_n} \big(\wh{\chi}^{(t)}( v_i,0)-\chi_n( v_i,0)\big)\trans_{i=1,\ldots,p}\\
&=&\frac{n}{m_n} N \,\big(\wh{\chi}^{(t)}(\bs h_1^{(i)},0)-\chi_n(\bs h_1^{(i)},0),\ldots,\wh{\chi}^{(t)}(\bs h_{\ell_i}^{(i)},0)-\chi_n(\bs h_{\ell_i}^{(i)},0)\big)\trans_{i=1,\ldots,p} \\
&\std & \mathcal{N}(\bs 0, N \Pi^{(\textnormal{space})}_1 N\trans), \quad \nto,
\eeao
such that
\begin{align}
\Pi^{(\iso)}_1:=N\Pi^{(\textnormal{space})}_1N\trans. \label{isomat}
\end{align}
\eproof

\begin{corollary}\label{coroll_average}
Under the conditions of Theorem~\ref{extspace2} the averaged spatial extremogram in \eqref{extrspacemean} satisfies (with covariance matrix $\Pi^{(\textnormal{iso})}_2$ specified in \eqref{defPi} below)
\begin{align}
\frac{n}{m_n} \Big(\frac{1}{T}\sum_{k=1}^T\wh{\chi}^{(t_k)}(v,0) - \chi_n(v,0)\Big)_{v \in \calv}\stackrel{d}{\to} \mathcal{N}(\bs 0,\Pi^{(\textnormal{iso})}_2), \quad \nto. \label{CLT_ave}
\end{align}
\end{corollary}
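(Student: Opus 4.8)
The plan is to reduce the statement to the fixed-time-point result of Theorem~\ref{extspace2}, exploiting that only finitely many time points $t_1,\ldots,t_T$ enter the average and that $T$ stays fixed while $n\to\infty$. Since the pre-asymptotic extremogram $\chi_n(v,0)$ does not depend on the time point (by strict stationarity in time), I would first write
$$\frac{n}{m_n}\Big(\frac1T\sum_{k=1}^T\wh{\chi}^{(t_k)}(v,0)-\chi_n(v,0)\Big)_{v\in\calv}=\frac1T\sum_{k=1}^T\frac{n}{m_n}\big(\wh{\chi}^{(t_k)}(v,0)-\chi_n(v,0)\big)_{v\in\calv}.$$
Thus the scaled centred average is the arithmetic mean of the $T$ scaled centred pointwise extremograms, each of which converges marginally to $\mathcal N(\bs 0,\Pi^{(\iso)}_1)$ by Theorem~\ref{extspace2}. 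Marginal convergence of each summand is not enough, however, since the vectors at different time points are computed from the same space-time process and are therefore dependent; what is needed is their joint limit law.

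To obtain this joint law I would establish a CLT for the stacked $pT$-dimensional vector $\big(\tfrac{n}{m_n}(\wh{\chi}^{(t_k)}(v,0)-\chi_n(v,0))_{v\in\calv}\big)_{k=1,\ldots,T}$. The clean route is to regard the $\bbr^T$-valued spatial random field $\bs s\mapsto(\eta(\bs s,t_1),\ldots,\eta(\bs s,t_T))$ as a single vector-valued regularly varying spatial process indexed by $\bs s\in\bbr^2$. Each empirical pointwise extremogram $\wh{\chi}^{(t_k)}$ is built from the indicator events $\{\eta(\bs s_i,t_k)>a_m\}$, which are events on this augmented field, so the proof of Theorem~4.2 of \citet{buhl3}, which applies to any finite collection of space-indexed functionals of a spatially mixing field, accommodates the present setting: the functionals now additionally involve the finitely many fixed times $t_1,\ldots,t_T$. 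Equivalently, one reduces to the scalar extremogram CLT by the Cramér–Wold device applied to an arbitrary linear combination over the indices $(v,k)$. Here one uses that, by stationarity in time, conditions (M1)--(M4) hold simultaneously for the finitely many slices, so the augmented field is spatially $\alpha$-mixing with coefficients controlled by those in (M1)--(M4). This yields
$$\Big(\frac{n}{m_n}\big(\wh{\chi}^{(t_k)}(v,0)-\chi_n(v,0)\big)_{v\in\calv}\Big)_{k=1}^{T}\std\mathcal N(\bs 0,\Sigma),$$
for some $pT\times pT$ covariance matrix $\Sigma$ whose diagonal blocks are $\Sigma_{kk}=\Pi^{(\iso)}_1$ and whose off-diagonal blocks $\Sigma_{k\ell}$ ($k\neq\ell$) encode the extremal dependence between the two time points $t_k$ and $t_\ell$.

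Finally I would read off the limit of the average as a linear image of this joint limit. Writing $A:=\tfrac1T(\bs 1_T\trans\otimes I_p)$ for the $p\times pT$ averaging matrix, the left-hand side of \eqref{CLT_ave} equals $A$ applied to the stacked vector, so linearity of the Gaussian limit gives
$$\frac{n}{m_n}\Big(\frac1T\sum_{k=1}^T\wh{\chi}^{(t_k)}(v,0)-\chi_n(v,0)\Big)_{v\in\calv}\std\mathcal N\big(\bs 0,A\Sigma A\trans\big),$$
and hence $\Pi^{(\textnormal{iso})}_2=A\Sigma A\trans=\tfrac1{T^2}\sum_{k=1}^T\sum_{\ell=1}^T\Sigma_{k\ell}$.

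The main obstacle is the joint CLT of the second step, and within it the identification of the cross-time blocks $\Sigma_{k\ell}$. The blocking and small-block/large-block arguments of the spatial CLT go through verbatim once the augmented field is seen to be spatially mixing, so the only genuinely new content is the bookkeeping of the covariances between pairs of indicator events living at two \emph{different} time points; extending the single-time covariance computations of \citet{buhl3} to these two-time functionals is precisely what produces the explicit form of $\Pi^{(\textnormal{iso})}_2$.
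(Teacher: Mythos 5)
Your proposal matches the paper's proof in essence: the paper likewise stacks the $T$ time slices into an augmented vector-valued spatial field $\bs Y(\bs s)=(\eta(\bs s+\bs h,t_k):\bs h\in B(\bs 0,\ga))_{k=1,\dots,T}$, derives the joint $TL$-dimensional CLT by extending the empirical-measure covariance computation of Theorem~4.2 of \citet{buhl3} to cross-time pairs of events (the entries $c_{C,D}$ of $\Sigma$), and then obtains $\Pi^{(\iso)}_2$ as the image of that joint covariance under the isotropy matrix $\bs N$ and the $T^{-1}$ time-averaging matrix --- exactly your $A\Sigma A\trans$ step. The only cosmetic difference is order of operations (the paper applies the isotropy reduction after the joint CLT over all lag vectors and times, while you fold it into the per-time marginals), so the proposal is correct and takes essentially the paper's approach.
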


\begin{proof}
For the first part of the proof, we neglect spatial isotropy. This part is similar to the proof of Theorem~{4.2} in \citet{buhl3} and Corollary~3.4 of \citet{Davis2}. We use the notation of the proof of Theorem~\ref{extspace2}. Enumerate the set of spatial lag vectors inherent in the estimation of the extremogram as $\{\bs h_1^{(i)},\ldots,\bs h_{\ell_i}^{(i)}: i=1,\ldots,p\}$ and let $\ga \geq \max\{v_1,\ldots,v_p\}$.
Define the vector process 
\begin{align*}
\{\bs Y(\bs s): \bs s \in \mathbb{R}^2\}=\{(\eta(\bs s + \bs h, t_k): \bs h \in B(\bs 0, \ga))\trans_{k=1,\dots,T}: \bs s \in \mathbb{R}^2\}.
\end{align*}
Let $A=B=(1,\infty)$. 
Consider $i=1,\ldots,p$, $j=1,\ldots,\ell_i$, and $k=1,\ldots,T$. 
Define sets $D^{(i)}_{j,k}$ by 
$$\{\bs Y(\bs{s})\in D^{(i)}_{j,k}\} = \{\eta(\bs{s},t_k)\in A, \eta(\bs{s}',t_k)\in B : \bs{s}-\bs{s}' = \bs h_j^{(i)}\},$$
and the sets $D_{k}$ by
$$\{\bs Y(\bs{s})\in D_{k}\} = \{\eta(\bs{s},t_k)\in A\}.$$ 
For $\bs h \in \mathbb{R}^2$ let $B_T(\bs h,\ga):=B(\bs h,\ga) \times \{t_1,\ldots,t_T\}$. For $\mu_{B_T(\bs 0, \ga)}$-continuous Borel sets $C$ and $D$ in $\ov\R^{T|B(\bs 0,\ga)|}\backslash\{\bs 0\}$, regular variation yields the existence of the limit measures 
\begin{align*}
\mu_{B_T(\bs 0, \ga)}(C) &:=\lim_{\nto} m_n^{2} \mathbb{P}\Big(\frac{\bs Y(\bs 0)}{m_n^{2}} \in C\Big)\\
\tau_{{B_T(\bs 0,\gamma) \times B_T(\bs h,\gamma)}} (C \times D)
&:= \lim_{\nto} m_n^{2}\mathbb{P}\Big(\frac{\bs Y(\bs 0)}{m_n^2} \in C, \frac{\bs Y(\bs h)}{m_n^{2}} \in D\Big).
\end{align*}
By time stationarity we have $\mu_{B_T(\bs 0, \ga)}(D_{k})=\mu(A)$,
\beam\label{aschi}
\wh{\chi}^{(t_k)}(\bs h_j^{(i)},0) \sim \wh{R}_{m_n}(D^{(i)}_{j,k},D_{k}):=\wh{\mu}_{B_T(\bs 0,\ga),m_n}(D^{(i)}_{j,k})/\wh{\mu}_{B_T(\bs 0,\ga),m_n}(D_{k}),\quad \nto,
\eeam
where the $\wh{\mu}_{B_T(\bs 0,\ga),m_n}(\cdot)$ are empirical estimators of $\mu_{B_T(\bs 0, \ga)}(\cdot)$ defined as 
\begin{align}\label{lln}
\widehat{\mu}_{B_T(\bs 0,\ga),m_n}(\cdot) := \Big(\frac{m_n}{n}\Big)^2 \sum_{\bs s\in \cals_n} \mathbbmss{1}_{\{\frac{\bs Y(\bs s)}{m_n^2} \in \cdot\}}.
\end{align} 
Likewise we have for the pre-asymptotic quantities
\beam\label{apre}
\chi_n(\bs h_j^{(i)},0)=R_{m_n}(D^{(i)}_{j,k},D_{k}):=\frac{\mathbb{P}(\bs Y(\bs 0)/m_n^{2} \in D^{(i)}_{j,k})}{\mathbb{P}(\bs Y(\bs 0)/m_n^{2} \in D_{k})}=:\frac{{\mu}_{B_T(\bs 0,\ga),m_n}(D^{(i)}_{j,k})}{{\mu}_{B_T(\bs 0,\ga),m_n}(D_{k})},
\eeam
which are independent of time $t_k$ by stationarity.
For notational ease we abbreviate in the following
$${\mu}_{B_T(\bs 0,\ga)}(\cdot) = {\mu}_{\ga}(\cdot),\quad
{\mu}_{B_T(\bs 0,\ga),m_n}(\cdot) = {\mu}_{\ga,m_n}(\cdot), \quad\mbox{and}\quad
\widehat{\mu}_{B_T(\bs 0,\ga),m_n}(\cdot) = \widehat{\mu}_{\ga,m_n}(\cdot) 
$$
For each $k \in \{1,\ldots,T\}$ we now define the matrices 
$$F^{(k)}=[F_1,F_2^{(k)}]$$
with $F_1 \in \bbr^{L \times L}$ and $F_2^{(k)} \in \bbr^L$ given by
$$F_1=\diag(\mu(A)) \quad\mbox{and}\quad 
F_2^{(k)}:=(-\mu_{\ga}(D^{(1)}_{1,k}),\ldots,-\mu_{\ga}(D^{(1)}_{\ell_1,k}),\ldots,-\mu_{\ga}(D^{(p)}_{\ell_p,k}))^{\top}.$$ 
Although $F_2^{(k)}$ is constant over $k \in \{1,\ldots, T\}$ by time stationarity, we keep the index to clarify the notation. 
Define the $TL \times T(L+1)$-matrix $\bs F$ and the column vector ${\bs{\wh\chi}} - {\bs \chi_n}$ with $TL$ components as
\begin{align*}
{\bs F} :=\begin{pmatrix}
F^{(1)} & \bs 0 & \bs 0 & \bs 0\\ \bs 0 & F^{(2)} & \bs 0 & \bs 0\\ 
\vdots & \vdots & \ddots & \bs 0\\ \bs 0 & \bs 0 & \bs 0 & F^{(T)}
\end{pmatrix} 
\quad\mbox{and}\quad
{\bs{\wh\chi}} - {\bs \chi_n}  := 
\begin{pmatrix}
\wh{\chi}^{(t_1)}(h^{(1)}_1,0)-\chi_n(h^{(1)}_1,0) \\
\vdots \\
\wh{\chi}^{(t_1)}(h^{(1)}_{\ell_1},0)-\chi_n(h^{(1)}_{\ell_1},0) \\
\vdots \\
\wh{\chi}^{(t_1)}(h^{(p)}_{\ell_p},0)-\chi_n(h^{(p)}_{\ell_p},0) \\
\vdots \\
\wh{\chi}^{(t_T)}(h^{(p)}_{\ell_p},0)-\chi_n(h^{(p)}_{\ell_p},0)
\end{pmatrix}. 
\end{align*}
Define the vector $(\bs{\wh R}_{m_n} - \bs{ R}_{m_n} )$ with the quantities from \eqref{aschi} and the corresponding pre-asymptotic quantities from \eqref{apre} exactly in the same way.
Furthermore, define for $k=1,\dots,T$ the vectors in $\R^{L+1}$ 
\begin{align*}
&\bs{\mu}_{\ga,m_n}^{(k)}=\\
&\big(
 \mu_{\ga,m_n}(D_{1,k}^{(1)}),\dots,\mu_{\ga,m_n}(D_{\ell_1,k}^{(1)}),\dots\dots,
 \mu_{\ga,m_n}(D_{1,k}^{(p)}),\dots,\mu_{\ga,m_n}(D_{\ell_p,k}^{(p)}), \mu_{\ga,m_n}(D_k)
 \big)\trans,
\end{align*}
which we stack one on top of the other giving a vector $\bs{\mu}_{\ga,m_n} \in \R^{T(L+1)}$,
and $\bs{\wh{\mu}}_{\ga,m_n} $ analogously. 
Then we obtain
$${\bs{\wh\chi}} - {\bs \chi}_n= (1+o(1)) (\bs{\wh R}_{m_n} - \bs{ R}_{m_n} )= \frac{1+o_p(1)}{\mu(A)^2} \bs F\,  (\bs{\wh{\mu}}_{\ga,m_n} - \bs{\mu}_{\ga,m_n}), \quad \nto,
$$
where the last step follows as in the proof of Theorem~{4.2} of \cite{buhl3} and involves Slutzky's theorem. 
Using ideas of the proof of their Lemma~5.1, we observe that as $\nto$, 
\begin{align*}
\lefteqn{ \cov \Big[\wh{\mu}_{B_T(\bs 0, \ga),m_n}(C),\wh{\mu}_{B_T(\bs 0, \ga),m_n}(D)\Big] }\\[2mm]
& \sim \Big(\frac{m_n}{n}\Big)^2\Big(\mu_{B_T(\bs 0, \ga)}(C \cap D)+\sum_{\bs 0 \neq \bs h \in \mathbb{Z}^2 }\tau_{B_T(\bs{0},\ga)\times B_T(\bs h,\ga)}(C \times D)\Big)
=: \Big(\frac{m_n}{n}\Big)^2 c_{C,D}.
\end{align*}
With $\Sigma \in \bbr^{T(L+1) \times T(L+1)}$ defined as
\begin{align*}
\Sigma =
\begin{pmatrix}
c_{D^{(1)}_{1,1},D^{(1)}_{1,1}}  & \cdots & c_{D^{(1)}_{1,1},D_{1}} & \cdots & c_{D^{(1)}_{1,1},D^{(p)}_{1,T}} & \cdots & c_{D^{(1)}_{1,1},D_{T}} \\
\vdots &  \ddots & \vdots & \ddots & \vdots & \ddots & \vdots \\ c_{D_{T},D^{(1)}_{1,1}} &  \cdots & c_{D_{T},D_{1}} & \cdots & c_{D_{T},D^{(p)}_{1,T}} & \cdots & c_{D_{T},D_T}
\end{pmatrix},
\end{align*}
we thus conclude that 
\begin{align*}
\frac{n}{m_n} \begin{pmatrix}
\wh{\chi}^{(t_1)}(\bs h^{(1)}_1,0)-\chi_n(\bs h^{(1)}_1,0) \\
\vdots \\
\wh{\chi}^{(t_T)}(\bs h^{(p)}_{\ell_p},0)-\chi_n(\bs h^{(p)}_{\ell_p},0)
\end{pmatrix} \std \mathcal{N}(\bs 0,\mu(A)^{-4} \bs F \Sigma (\bs F)^{\top}).
\end{align*}
To obtain the asymptotic covariance matrix in the spatially isotropic case, we proceed as in the proof of Theorem~\ref{extspace2}. 
We define the $Tp\times TL$-matrix 
\begin{align*}
\bs N:=
\begin{pmatrix}
N & \bs 0 & \bs 0 & \bs 0\\
 \bs 0 & N & \bs 0 & \bs 0\\
  \vdots & \vdots & \ddots & \bs 0\\
   \bs 0 & \bs 0 & \bs 0 & N
\end{pmatrix}
\end{align*}
with $N$ given in equation~\eqref{iso_N}. 
Then we have
\begin{align*}
&\frac{n}{m_n} \begin{pmatrix}
\wh{\chi}^{(t_1)}(v_1,0)-\chi_n(v_1,0) \\
\vdots \\
\wh{\chi}^{(t_T)}(v_p,0)-\chi_n(v_p,0)
\end{pmatrix}
=\frac{n}{m_n} \bs N 
\begin{pmatrix}
\wh{\chi}^{(t_1)}(\bs h^{(1)}_1,0)-\chi_n(\bs h^{(1)}_1,0) \\
\vdots \\
\wh{\chi}^{(t_T)}(\bs h^{(p)}_{\ell_p},0)-\chi_n(\bs h^{(p)}_{\ell_p},0)
\end{pmatrix} \\
&\std \quad \mathcal{N}(\bs 0,\mu(A)^{-4} \bs N \bs F \Sigma (\bs N \bs F)^{\top}), \quad \nto,
\end{align*}
and we conclude that for the averaged spatial extremogram  the statement  holds with 
\begin{small}
\begin{align}\label{defPi}
\Pi^{(\iso)}_2 =  &{\mu(A)^{-4}}{T^{-2}} 
\begin{pmatrix}
 				1 & 0 \cdots 0 & 1 & 0 \cdots 0 & \cdots  & 1 & 0 \cdots  0\\
                                0 & 1\cdots 0 & 0 & 1 \cdots 0 & \cdots & 0 & 1 \cdots 0 \\
                                && \ddots & \\
                                0& 0 \cdots 1 & 0 & 0 \cdots 1 & \cdots  & 0 & 0 \cdots 1\end{pmatrix}  \bs N \bs F \Sigma (\bs N \bs F)^{\top}  \nonumber\\
                                & \begin{pmatrix}
                                1 & 0 \cdots 0 & 1 & 0 \cdots 0 & \cdots  & 1 & 0 \cdots  0\\
                                0 & 1\cdots 0 & 0 & 1 \cdots 0 & \cdots & 0 & 1 \cdots 0 \\
                                && \ddots & \\
                                0& 0 \cdots 1 & 0 & 0 \cdots 1 & \cdots  & 0 & 0 \cdots 1
                                \end{pmatrix}^{\top}.
\end{align}
\end{small}
\end{proof}

\begin{condition}\label{remarkextspace}
In the CLTs~\eqref{CLT_nonave} and \eqref{CLT_ave}, the pre-asymptotic extremogram \eqref{preasymptotic1}  can be replaced by the theoretical one (eq.~ \eqref{extremogramst} with $A=B=(1,\infty)$), provided that
\begin{equation}\label{condition5}
\frac{n}{m_n}\left(\chi_n(v,0) - \chi(v,0)\right) \to 0,\quad \nto,
\end{equation}
is satisfied for all spatial {lags} $v\in \calv$. In particular, we then obtain
\begin{align}
\frac{n}{m_n} \big(\wh{\chi}(v,0) - \chi(v,0)\big)_{v \in \calv}\stackrel{d}{\to} \mathcal{N}(\bs 0,\Pi^{(\textnormal{iso})}_2), \quad \nto. \label{CLT_true}
\end{align}

\end{condition}
This bias condition turns out to be central in order to obtain a CLT for the WLSE $(\wh{C}_1,\wh{\alpha}_1)$ in Section~\ref{spacesection2} below. However, even if it is not satisfied, the empirical extremogram keeps its important asymptotic interpretation as a conditional probability of extremal events. Furthermore there are cases where we can resort to a bias correction, ensuring again a CLT for $(\wh{C}_1,\wh{\alpha}_1)$. For examples we refer to Section~\ref{sec7:Fr} below.

\subsection{Asymptotic properties of spatial parameter estimates}\label{spacesection2}

In this section we state conditions that yield asymptotic normality of the {WLSE $(\wh C_1,\wh\alpha_1)$} of Section~\ref{Desmodel}. Recall the weighted least squares optimization problem~\eqref{minspace}; i.e.,
\begin{align*}
\begin{pmatrix}
\wh{C}_1 \\ 
\wh{\alpha}_1
\end{pmatrix} 
= \argmin_{(C_1,\alpha_1) \in \Theta_{\mathcal{S}}} 
&\sum_{v\in \calv}  w_v  \Big(T_1(\wh{\chi}(v,0))-\big(C_1 + \alpha_1 v\big)\Big)^2.
\end{align*} 

To show asymptotic normality of the WLSE, we define the design matrix $X$ and weight matrix $W$ as
$$X = [\bs{1},(v: v \in \calv)\trans]\in\R^{p\times 2} \quad \text{ and } \quad W=\diag\{w_v: v \in \calv\}\in\R^{p\times p},$$
respectively, where $\bs{1}=(1,\ldots,1)\trans \in \mathbb{R}^{p}$.
If neither $C_1$ nor $\alpha_1$ have bounded support, then the WLSE; i.e., the solution to \eqref{minspace}, is given by 
$$\wh{\psi}_1:=\begin{pmatrix}\wh{C}_1 \\ \wh{\alpha}_1
\end{pmatrix}= (X\trans W X)^{-1}X\trans W (T_1(\wh{\chi}(v,0)))\trans_{v \in \calv}.$$
If one of the parameters $C_1$ or $\alpha_1$ does have bounded support, we need to constrain $\wh{\psi}_1$ properly, obtaining a CLT that might differ considerably from that given in Theorem~\ref{asysemispace} below. An important example of this is treated in Section~\ref{sec7:Fr}.

\begin{theorem}\label{asysemispace}
For a fixed time point $t\in \{t_1,\ldots,t_T\}$, consider a regularly varying spatial process $\left\{\eta(\bs{s},t): \bs{s}\in \bbr^2\right\}$ as defined in Definition~\ref{def:reg_var}. Assume that it satisfies the conditions of Theorem~\ref{extspace2}.
Let $\bs{\wh{\psi}}_1 = (\wh{C}_1,\wh{\alpha}_1)\trans$ denote the WLSE resulting from the minimization problem \eqref{minspace} and $\bs{\psi}^{*}_1=(C_1^*,\alpha_1^*)\trans\in \Theta_{\cals}$ the true parameter vector. Assume that the CLT~\eqref{CLT_true} holds, possibly after a bias correction of the empirical extremogram $(\wh{\chi}_v :v \in \calv)$. Then for a suitably chosen scaling sequence $m_n$, we obtain, as $\nto$,
\begin{equation}
\frac{n}{m_n}\Big(\bs{\wh{\psi}}_1 - \bs{\psi}_1^{*}\Big) \stackrel{d}{\to} \mathcal{N}(\bs 0,Q_x^{(w)}G\Pi^{(\textnormal{iso})}_2 G{Q_x^{(w)}}\trans).
\end{equation}
Here $\Pi^{(\textnormal{iso})}_2$ is the covariance matrix given in \eqref{defPi}, 
\begin{align}\label{defG}
Q_x^{(w)} &= (X\trans W X)^{-1}X\trans W  \quad \text{ and} \quad
G = \diag\big \{T_1'(\chi(v,0)): \ v \in \calv\hspace*{0.1cm}\big \},
\end{align}
where $T_1'(x)$ denotes the derivative of $T_1(x)$ with respect to $x$ for $0<x<1$.
\end{theorem}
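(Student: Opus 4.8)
The plan is to exploit the linearity of the (unconstrained) WLSE together with the already-established CLT~\eqref{CLT_true} for the averaged empirical extremogram, transferring the latter through the transformation $T_1$ by the delta method. I would first record the closed-form solution $\bs{\wh{\psi}}_1 = Q_x^{(w)}\,(T_1(\wh{\chi}(v,0)))\trans_{v\in\calv}$ stated above, and note that the matrix $Q_x^{(w)} = (X\trans W X)^{-1}X\trans W$ satisfies the projection identity $Q_x^{(w)} X = I_2$. Since the model is \emph{exactly} linear at the true parameter, $(T_1(\chi(v,0)))\trans_{v\in\calv} = X\bs{\psi}_1^{*}$, and therefore $Q_x^{(w)}(T_1(\chi(v,0)))\trans_{v\in\calv} = \bs{\psi}_1^{*}$. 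Subtracting yields the key linear representation
\[
\bs{\wh{\psi}}_1 - \bs{\psi}_1^{*} = Q_x^{(w)}\big(T_1(\wh{\chi}(v,0)) - T_1(\chi(v,0))\big)\trans_{v\in\calv}.
\]

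Next I would linearize each coordinate by the mean value theorem, writing $T_1(\wh{\chi}(v,0)) - T_1(\chi(v,0)) = T_1'(\xi_v)\,(\wh{\chi}(v,0) - \chi(v,0))$ for an intermediate point $\xi_v$ between $\wh{\chi}(v,0)$ and $\chi(v,0)$. The CLT~\eqref{CLT_true}, together with $m_n/n\to 0$ (so that $\tfrac{n}{m_n}\to\infty$ and hence $\wh{\chi}(v,0)-\chi(v,0)=O_p(m_n/n)$), forces $\wh{\chi}(v,0)\stp\chi(v,0)$, whence $\xi_v\stp\chi(v,0)$ for each $v\in\calv$. Because the $\chi(v,0)$ lie in $(0,1)$ (genuine, nontrivial extremal dependence at the chosen lags) and $T_1'$ is continuous there, the continuous mapping theorem gives $\diag\{T_1'(\xi_v): v\in\calv\}\stp G$ with $G$ as in~\eqref{defG}.

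I would then assemble these pieces and scale by $n/m_n$, obtaining
\[
\frac{n}{m_n}\big(\bs{\wh{\psi}}_1 - \bs{\psi}_1^{*}\big) = Q_x^{(w)}\,\diag\{T_1'(\xi_v): v\in\calv\}\,\frac{n}{m_n}\big(\wh{\chi}(v,0) - \chi(v,0)\big)\trans_{v\in\calv}.
\]
Applying Slutsky's theorem — the random diagonal matrix converges in probability to the constant matrix $G$, while the scaled extremogram vector converges in distribution to $\mathcal{N}(\bs 0,\Pi^{(\textnormal{iso})}_2)$ by~\eqref{CLT_true} — the right-hand side converges to $Q_x^{(w)}G\,\mathcal{N}(\bs 0,\Pi^{(\textnormal{iso})}_2)$, that is, to $\mathcal{N}(\bs 0, Q_x^{(w)} G\,\Pi^{(\textnormal{iso})}_2\,G\,(Q_x^{(w)})\trans)$, where symmetry of $G$ has been used to drop the transpose. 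This is precisely the claimed limit.

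The routine parts (the projection identity and the final Slutsky step) are immediate; the one point requiring genuine care is the delta-method step, where one must guarantee that every $\chi(v,0)$ stays strictly inside the domain $(0,1)$ on which $T_1'$ is continuous, so that the intermediate points $\xi_v$ are eventually bounded away from the endpoints and $T_1'(\xi_v)\stp T_1'(\chi(v,0))$ holds uniformly over the finitely many lags. This is ensured by restricting $\calv$ to lags exhibiting clear extremal dependence, as discussed in Section~\ref{Desmodel}.
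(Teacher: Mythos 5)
Your proposal is correct and follows essentially the same route as the paper: the exact linear representation $\bs{\wh{\psi}}_1 - \bs{\psi}_1^{*} = Q_x^{(w)}\big(T_1(\wh{\chi}(v,0)) - T_1(\chi(v,0))\big)\trans_{v\in\calv}$ via unbiasedness of the WLSE, followed by the delta method applied to the CLT~\eqref{CLT_true}. The only difference is cosmetic — you spell out the delta method via the mean value theorem and Slutsky, whereas the paper invokes it directly — so nothing further is needed.
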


\begin{proof}
Using the multivariate delta method together with the CLT~\eqref{CLT_true} it directly follows that
$$\frac{n}{m_n}\big(T_1(\wh{\chi}(v,0))-T_1(\chi(v,0))\big)_{v \in \calv} \stackrel{d}{\to} \mathcal{N}(\bs{0},G\Pi^{\textnormal{(iso)}}_2G),\quad  n\to \infty,$$
where $G$ is defined in \eqref{defG}. 
Since 
$$ \min\limits_{(C_1,\alpha_1) \in \Theta_{\cals}}\sum_{v \in \calv}  w_v  \big(T_1(\chi(v,0))-\big(C_1 + \alpha_1 v\big)\big)^2=\sum_{v \in \calv}  w_v  \big(T_1(\chi(v,0))-\big(C_1^* + \alpha_1^* v\big)\big)^2,$$
we find the well-known property of unbiasedness of the WLSE, 
\beao
Q_x^{(w)} (T_1(\chi(v,0)))\trans_{v \in \calv} = \argmin\limits_{(C_1,\alpha_1) \in \Theta_{\cals}}\sum_{v \in \calv}  w_v  \big(T_1(\chi(v,0))-\big(\log(\theta_1) + \alpha_1 x_v\big)\big)^2 
= \bs{\psi}_1^*.
\eeao
It follows that, as $\nto$,
$$\frac{n}{m_n}\left(\bs{\wh{\psi}}_1-\bs{\psi}_1^*\right)= \frac{n}{m_n} Q_x^{(w)} \big(T_1(\wh{\chi}(v,0))-T_1(\chi(v,0))\big)_{v \in \calv} \stackrel{d}{\to} \mathcal{N}\left(\bs{0},Q_x^{(w)}G\Pi^{(\textnormal{iso})}_2 G {Q_x^{(w)}}\trans\right).$$
\end{proof}

\section{{Example: the Brown-Resnick process}}\label{sec7:Fr}

We illustrate the results of the previous sections by applying them to a max-stable strictly stationary and isotropic Brown-Resnick space-time process with representation
\begin{align}\label{limitfield}
\eta(\bs{s},t) = \bigvee\limits_{j=1}^\infty \left\{\xi_j \,  e^{W_j(\bs{s},t)-\delta(\|\bs{s}\|,t)} \right\},\quad (\bs{s},t)\in\R^2\times [0,\infty),
\end{align}
where 
$\{\xi_j : j\in\N\}$ are points of a Poisson process on $[0,\infty)$ with intensity $\xi^{-2}d\xi$ and the dependence function $\delta$ is \textit{nonnegative and conditionally negative definite}; i.e., for every $m \in \mathbb{N}$ and every $(\bs s^{(1)},t^{(1)}),\ldots,(\bs s^{(m)},t^{(m)}) \in \mathbb{R}^2 \times [0,\infty)$, it holds that 
$$\sum_{i=1}^m \sum_{j=1}^m a_i a_j \delta(\|\bs s^{(i)}-\bs s^{(j)}\|,|t^{(i)}-t^{(j)}|) \leq 0$$ 
for all $a_1,\ldots,a_m \in \mathbb{R}$ summing up to 0. The processes
$\{W_j(\bs{s},t): \bs{s} \in \mathbb{R}^2, t \in [0, \infty)\}$ are independent replicates of a Gaussian process $\{W(\bs s,t): \bs{s} \in \mathbb{R}^2, t \in [0, \infty)\}$ with stationary increments,
$W(\bs{0},0)=0$, $\mathbb{E} [W(\bs{s},t)]=0$ and covariance function
\begin{align*}
\cov[W(\bs{s}^{(1)},t^{(1)}),&W(\bs{s}^{(2)},t^{(2)})] 
\\
=& \, \delta(\|\bs{s}^{(1)}\|,t^{(1)})+\delta(\|\bs{s}^{(2)}\|,t^{(2)}) -\delta(\|\bs{s}^{(1)}-\bs{s}^{(2)}\|, |t^{(1)}-t^{(2)}|).
\end{align*}
Representation \eqref{limitfield} goes back to \citet{deHaan}, \citet{Gine} and \citet{Schlather2}. 
All finite-dimensional distributions are multivariate extreme value distributions with standard unit Fr\'echet margins, hence they are in particular multivariate regularly varying. Furthermore, they are characterized by the dependence function $\delta$, which is termed the \textit{semivariogram} of the process $\{W(\bs s, t)\}$ in geostatistics: For $(\bs{s}^{(1)},t^{(1)}), (\bs{s}^{(2)},t^{(2)}) \in \R^2\times[0,\infty)$, it is given by
$$\var[W(\bs{s}^{(1)},t^{(1)})-W(\bs{s}^{(2)},t^{(2)})]=2\delta(\|\bs{s}^{(1)}-\bs{s}^{(2)}\|,|t^{(1)}-t^{(2)}|).$$
Since we assume $\delta$ to depend only on the norm of $\bs s^{(1)}-\bs s^{(2)}$,  
the associated  process is (spatially) isotropic.

We assume the dependence function $\delta$ to be given for $v,u \geq 0$ by
\begin{align}\label{delta}
\delta(v,u) = 2\theta_1 v^{\alpha_1}+2\theta_2u^{\alpha_2},
\end{align}
where $0<\alpha_1,\alpha_2 \leq 2$ and $\theta_1,\theta_2>0$. 
This is the fractional class frequently used for dependence modelling, and here defined with respect to space and time.

The bivariate distribution function of $(\eta(\bs{0},0),\eta(\bs{h},u))$ is given for $x_1,x_2>0$ by
\begin{align}\label{bivhuesler}
F(x_1,x_2)=\exp\Bigg\{&-\frac{1}{x_1}\Phi\left(\frac{\log({x_2}/{x_1})}{\sqrt{2\delta(\|\bs{h}\|,|u|)}} + \sqrt{\frac{\delta(\|\bs{h}\|,|u|)}{2}}\right) \nonumber\\
&-\frac{1}{x_2}\Phi\left(\frac{\log({x_1}/{x_2})}{\sqrt{2\delta(\|\bs{h}\|,|u|)}}+ \sqrt{\frac{\delta(\|\bs{h}\|,|u|)}{2}}\right)\Bigg\},
\end{align}
where $\Phi$ denotes the standard normal distribution function (cf. \citet{Steinkohl}).

The parameters of interest are contained in the dependence function $\delta$. 
We refer to $(\theta_1,\alpha_1)$ as the {\em spatial parameter} and to $(\theta_2,\alpha_2)$ as the {\em temporal parameter}.
From the bivariate distribution function in \eqref{bivhuesler}, the pairwise density can be derived and pairwise likelihood methods can be used to estimate the parameters; cf. \citet{Steinkohl2}, \citet{Huser} and  \citet{Ribatet}. Full likelihood inference is virtually intractable in a general multidimensional setting, as the number of terms occurring in the likelihood explode. 
More recently, however, parametric inference methods based on higher-dimensional margins have been proposed that work in specific scenarios, see for instance \citet{Genton2}, who use triplewise instead of pairwise likelihood, \citet{Engelke1}, who propose a threshold-based approach, or \citet{Thibaud_Opitz} and \citet{Wadsworth_Tawn}, who use a censoring scheme for bias reduction.

In the following we apply the estimation method introduced in Section~\ref{Desmodel} based on the extremogram of more general regularly varying processes to the special case of the Brown-Resnick process~\eqref{limitfield}. We make use of the fact that its extremogram possesses a closed-form expression which is characterized by the dependence function $\delta$.

\begin{lemma}[\citet{Steinkohl}, equation (3.1)]\label{extspace}
Let $\{\eta(\bs s,t): (\bs s,t) \in \bbr^2 \times [0,\infty)\}$ be the strictly stationary {isotropic} Brown-Resnick process in $\R^2\times [0,\infty)$ as defined in \eqref{limitfield}  with dependence function 
given in \eqref{delta}.
Then the extremogram of $\eta$ is given by
\begin{equation}\label{chi_BR}
\chi(v,u) = 2\Big(1-\Phi\Big(\sqrt{\frac1{2}\delta(v,u)}\Big) \Big) = 2\big(1-\Phi(\sqrt{\theta_1 v^{\alpha_1}+\theta_2u^{\alpha_2}})\big), \quad v,u\geq 0.
\end{equation}
\end{lemma}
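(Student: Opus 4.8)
The plan is to compute the tail dependence coefficient $\chi(v,u)=\rho_{(1,\infty)(1,\infty)}(v,u)$ directly from the definition of the extremogram, using the bivariate distribution function \eqref{bivhuesler} which already encodes the dependence function $\delta$. Since the margins are standard unit Fr\'echet, the normalizing sequence can be taken as $a_n=n^d$ (with $d=2$ here, so that $\mathbb{P}(\eta(\bs 0,0)>a_n)=1-\exp(-1/a_n)\sim a_n^{-1}\sim n^{-d}$), and the regular variation index is $\beta=1$. First I would write the extremogram as
\begin{align*}
\chi(v,u)=\lim_{x\to\infty}\frac{\mathbb{P}(\eta(\bs 0,0)>x,\ \eta(\bs h,u)>x)}{\mathbb{P}(\eta(\bs 0,0)>x)},\quad v=\|\bs h\|,
\end{align*}
which is legitimate because the ratio defining $\rho_{AB}$ with $A=B=(1,\infty)$ is, after the change of variable $x=a_n t$, a function of the single threshold $x\to\infty$.

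Next I would evaluate the numerator via inclusion-exclusion,
\begin{align*}
\mathbb{P}(\eta(\bs 0,0)>x,\ \eta(\bs h,u)>x)=1-F(x,\infty)-F(\infty,x)+F(x,x),
\end{align*}
where $F(x,\infty)=\exp(-1/x)$ and $F(\infty,x)=\exp(-1/x)$ are the standard Fr\'echet margins, and $F(x,x)$ is obtained from \eqref{bivhuesler} by setting $x_1=x_2=x$. The key simplification is that at $x_1=x_2=x$ the two logarithmic terms $\log(x_2/x_1)$ and $\log(x_1/x_2)$ vanish, so that
\begin{align*}
F(x,x)=\exp\Big\{-\tfrac{2}{x}\,\Phi\big(\sqrt{\delta(\|\bs h\|,|u|)/2}\big)\Big\}.
\end{align*}
Writing $a:=\sqrt{\delta(v,u)/2}$ and expanding each exponential to first order as $x\to\infty$ (so $\exp(-c/x)=1-c/x+o(1/x)$), the constant terms cancel in the numerator and one is left with
\begin{align*}
\mathbb{P}(\eta(\bs 0,0)>x,\ \eta(\bs h,u)>x)=\frac{1}{x}\big(2\Phi(a)-1\big)+o(1/x),\qquad x\to\infty.
\end{align*}
The denominator is $\mathbb{P}(\eta(\bs 0,0)>x)=1/x+o(1/x)$.

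Taking the ratio and the limit, the $1/x$ factors cancel and I obtain
\begin{align*}
\chi(v,u)=2-2\Phi(a)=2\big(1-\Phi\big(\sqrt{\tfrac12\delta(v,u)}\big)\big),
\end{align*}
and substituting the fractional semivariogram \eqref{delta}, $\tfrac12\delta(v,u)=\theta_1 v^{\alpha_1}+\theta_2 u^{\alpha_2}$, yields the stated closed form. I expect no serious obstacle here: the computation is elementary once the two off-diagonal logarithmic arguments are seen to vanish on the diagonal $x_1=x_2$. The only point requiring mild care is justifying that the pre-asymptotic ratio converges to the stated limit uniformly enough to identify it with the extremogram of Definition~2.2 — this follows because the H\"usler-Reiss bivariate law in \eqref{bivhuesler} is exactly the finite-dimensional distribution of the regularly varying process, so the first-order Taylor expansion is valid and the limit in \eqref{extremogramst} exists and equals the computed value.
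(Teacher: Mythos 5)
The paper itself does not prove this lemma: it is imported verbatim from \citet{Steinkohl}, equation (3.1). Your derivation from the bivariate H\"usler--Reiss law \eqref{bivhuesler} is the standard route and is correct in method and in its final conclusion, but it contains one algebraic slip that makes an intermediate display inconsistent with your own last line. By inclusion--exclusion, with $a:=\sqrt{\delta(v,u)/2}$,
\begin{equation*}
\mathbb{P}(\eta(\bs 0,0)>x,\ \eta(\bs h,u)>x)=1-2e^{-1/x}+e^{-2\Phi(a)/x}=\frac{2-2\Phi(a)}{x}+o(1/x),\qquad x\to\infty,
\end{equation*}
whereas you wrote the coefficient as $2\Phi(a)-1$. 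A sanity check at $v=u=0$ (so $a=0$, $\Phi(0)=\tfrac12$) exposes the error: the joint probability must reduce to $\mathbb{P}(\eta>x)\sim 1/x$, which $2-2\Phi(0)=1$ gives and $2\Phi(0)-1=0$ does not. Had you actually divided your stated numerator by the denominator $1/x$ you would have obtained $2\Phi(a)-1$, not the claimed $2(1-\Phi(a))$; so the final formula is right only because you did not propagate the slip. The rest of the argument is sound: the diagonal evaluation $F(x,x)=\exp\{-\tfrac{2}{x}\Phi(a)\}$ is correct since the two logarithmic arguments in \eqref{bivhuesler} vanish at $x_1=x_2$, and replacing the limit along $a_n$ in \eqref{extremogramst} by a limit in a continuous threshold $x\to\infty$ is legitimate because the pre-limit ratio converges. (The choice $d=2$ versus $d=3$ in the normalization is immaterial here, as the extremogram limit does not depend on the particular sequence $a_n\to\infty$.)
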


Solving equation \eqref{chi_BR} for $\delta(v,u)$ leads to
\begin{equation}
\frac{\delta(v,u)}{2} = \theta_1 v^{\alpha_1} + \theta_2 u^{\alpha_2} = \Big(\Phi^{-1}\big(1-\frac{1}{2}\chi(v,u)\big)\Big)^2.
\label{deltachi}
\end{equation}
For temporal lag $0$ and taking the logarithm on both sides we have
$$2\log\Big(\Phi^{-1}\big(1-\frac{1}{2}\chi(v,0)\big)\Big) = \log (\theta_1) + \alpha_1\log v=:\log (\theta_1) + \alpha_1x_v.$$
In the same way, we obtain
$$2\log\Big(\Phi^{-1}\big(1-\frac{1}{2}\chi(0,u)\big)\Big) =: \log (\theta_2) + \alpha_2 x_u.$$
To put this in the context of equations~\eqref{eq:param_space} and \eqref{eq:param_time}, first note that in the weighted linear regression, instead of working with the ``original'' lags $v$ and $u$, we consider their log transformations $x_v=\log(v)$ and $x_u=\log(u)$; hence in particular, we need to exclude the lags $v=0$ and $u=0$. The observation scheme described in Condition~\ref{grid} then yields that $u,v \geq 1$ and thus $x_v,x_u \geq 0$. We furthermore set $C_1=\log(\theta_1)$, $C_2=\log(\theta_2)$ and choose the transformations $T_1$ and $T_2$ defined by $T_1(\chi(v,0))=2\log\big(\Phi^{-1}\big(1-\frac{1}{2}\chi(v,0)\big)\big)$ and $T_2(\chi(0,u))=2\log\big(\Phi^{-1}\big(1-\frac{1}{2}\chi(0,u)\big)\big)$. The parameter spaces are given by $\Theta_{\cals}=\Theta_{\mathcal{T}}=\mathbb{R} \times (0,2]$.

In the following we work out necessary and sufficient conditions for the Brown-Resnick process~\eqref{limitfield} with dependence function~\eqref{delta} to satisfy the conditions of Theorem~\ref{asysemispace}, focusing again on the spatial case; i.e., on the processes $\left\{\eta(\bs{s},t): \bs{s}\in \bbr^2\right\}$ for fixed observed $t\in \{t_1,\ldots,t_T\}$. Furthermore we show how the fact that the model parameter $\alpha_1 \in (0,2]$ has bounded support influences the asymptotics of the WLSE $(\wh{\theta}_1,\wh{\alpha}_1)$.

\subsection{Asymptotics of the empirical spatial extremogram of the Brown-Resnick process}\label{spacesection1_BR}

For a start, we need a sufficiently precise estimate for the extremogram~\eqref{chi_BR} of the Brown-Resnick process, which we give now.

\ble\label{le3.2}
Let $\bs s,\bs h \in \mathbb{R}^2$.
For every sequence $a_n \to \infty$ we have for fixed $t \in [0,\infty),$
\begin{align*}
&\frac{\mathbb{P}(\eta(\bs{s},t)>a_n,\eta(\bs{s}+\bs{h},t)>a_n)}{\mathbb{P}(\eta(\bs{s},t)>a_n)} \\
 =&\chi(\|\bs h \|,0)+ \Big[\frac{1}{2a_n}\big(\chi(\|\bs h \|,0)-2\big)\big(\chi(\|\bs h \|,0)-1\big)\Big](1+o(1)).
\end{align*}
\ele

Lemma~\ref{le3.2} is a direct application of {Lemma~A.1}(b) of \citet{buhl3} for $A=B=(1,\infty)$ and their {equation~(A.4)}.
This applies since $\{\eta(\bs s,t): \bs s \in \bbr^2\}$ has finite-dimensional standard unit Fr{\'e}chet marginal distributions. 
We can choose in the following $a_n = n^2$ in order to satisfy the condition $\mathbb{P}(|\eta(\bs 0,0)| > a_n) \sim n^{-2}$ as $\nto$ from Definition~\ref{def:reg_var}. Recall furthermore that we have to choose a finite set $\calv=\{v_1,\ldots,v_p\}$ of observed lags, which show clear extremal dependence as explained in Section~\ref{Desmodel}.

\begin{theorem}\label{extspace2_BR} 
Consider the spatial Brown-Resnick process $\left\{\eta(\bs{s},t): \bs{s}\in \bbr^2\right\}$ as defined in \eqref{limitfield} {with dependence function given in \eqref{delta}}.
 Set $m_n=n^{\beta_1}$ for $\beta_1\in(0,1/2)$.
Then the empirical spatial extremogram $\wh{\chi}^{(t)}(v,0)$ defined in \eqref{extremogramspace} with the quantile $q=a_{m_n}=m_n^2$ satisfies
\begin{align}
\frac{n}{m_n}\big(\wh{\chi}^{(t)}(v,0) - \chi_n(v,0)\big)_{v \in \calv}\stackrel{d}{\to} \mathcal{N}(\bs 0,\Pi^{(\iso)}_1), \quad \nto, \label{CLT_BR}
\end{align}
where the covariance matrix $\Pi^{(\iso)}_1$ is specified in equation~\eqref{isomat}, and $\chi_n$ is the pre-asymptotic spatial extremogram as in \eqref{preasymptotic1}. 

Furthermore, for the averaged empirical extremogram $\wt{\chi}(v,0)=T^{-1}\sum_{k=1}^T\wt{\chi}^{{(t_k})}(v,0)$ defined in~\eqref{extrspacemean}, we obtain  (with covariance matrix $\Pi^{(\iso)}_2$ given in equation~\eqref{defPi})
\begin{align}
\frac{n}{m_n}\big(\wh{\chi}(v,0) - \chi_n(v,0)\big)_{v \in \calv}\stackrel{d}{\to} \mathcal{N}(\bs 0,\Pi^{(\iso)}_2), \quad \nto. \label{CLT_ave_BR}
\end{align}
\end{theorem}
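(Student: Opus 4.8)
The plan is to verify that the general abstract results of the previous section, namely Theorem~\ref{extspace2} and Corollary~\ref{coroll_average}, apply to the concrete Brown-Resnick process. Thus the task reduces to checking that the mixing conditions (M1)--(M4) are satisfied for the specific choice $m_n=n^{\beta_1}$ with $\beta_1\in(0,1/2)$ and $a_{m_n}=m_n^2$. Once these conditions are confirmed, the CLTs~\eqref{CLT_BR} and \eqref{CLT_ave_BR} follow immediately from Theorem~\ref{extspace2} and Corollary~\ref{coroll_average}, respectively, with the same covariance matrices $\Pi^{(\iso)}_1$ and $\Pi^{(\iso)}_2$.

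First I would address condition (M1), the $\alpha$-mixing of the Brown-Resnick process. This is a qualitative structural property of the process and, as the introduction indicates, $\alpha$-mixing of the Brown-Resnick process is established in the supplement~\cite{BDKSsupp}; I would invoke that result together with explicit bounds on the mixing coefficients $\alpha_{p,q}(\cdot)$ in terms of the dependence function $\delta$ from~\eqref{delta}. The key quantitative input is that these coefficients decay at a controlled (typically polynomial or faster) rate in the spatial lag $\|\bs h\|$, driven by the growth of $\delta(v,0)=2\theta_1 v^{\alpha_1}$ as $v\to\infty$, which forces the bivariate extremal dependence in~\eqref{bivhuesler} to vanish.

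Next I would verify the growth condition (M2), $m_n^2 r_n^2/n\to 0$, and conditions (M3) and (M4) by an explicit choice of the auxiliary sequence $r_n$. With $m_n=n^{\beta_1}$, $\beta_1<1/2$, there is ample room: I would pick $r_n$ growing slowly enough that $m_n^2 r_n^2/n\to 0$ (so that $r_n=o(n^{(1-2\beta_1)/2})$) yet fast enough that the tail sums of the mixing coefficients in (M4)(i), (M4)(iii) vanish. The summability requirement (M4)(ii), $\sum_{\bs h}\alpha_{p,q}(\|\bs h\|)<\infty$ for $2\le p+q\le 4$, follows from the decay rate established for the mixing coefficients, since in $\mathbb{Z}^2$ the number of lattice points at norm $\approx\|\bs h\|$ grows only linearly, so a sufficiently fast (e.g. summable after the two-dimensional volume factor) decay suffices. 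Condition (M3), the anti-clustering condition, I would check directly from Lemma~\ref{le3.2} and the bivariate law~\eqref{bivhuesler}: the joint exceedance probability $\mathbb{P}(\max_{B(\bs 0,\gamma)}|\eta|>\epsilon a_m,\ \max_{B(\bs h,\gamma)}|\eta|>\epsilon a_m)$ behaves like $m_n^{-2}\chi(\|\bs h\|,0)$ up to bounded factors, so after multiplying by $m_n^2$ and summing over $k<\|\bs h\|\le r_n$ the contribution is controlled by $\sum_{\|\bs h\|>k}\chi(\|\bs h\|,0)$, which tends to $0$ as $k\to\infty$ because $\chi(\|\bs h\|,0)=2(1-\Phi(\sqrt{\theta_1\|\bs h\|^{\alpha_1}}))$ decays rapidly.

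The main obstacle, and the step requiring the most care, is establishing sharp enough decay rates for the $\alpha$-mixing coefficients $\alpha_{p,q}(\|\bs h\|)$ to simultaneously satisfy the competing requirements in (M2) and (M4): the sequence $r_n$ must be large enough to suppress the mixing tail sums in (M4)(i) and (M4)(iii) after multiplication by the diverging factors $m_n^2$ and $m_n n$, yet small enough to respect the growth constraint~(M2). Everything hinges on whether the mixing coefficients of the Brown-Resnick process decay fast enough (ideally faster than any polynomial, which is plausible given the Gaussian tail in~\eqref{bivhuesler}) to leave a nonempty window of admissible $r_n$. I would therefore devote the bulk of the argument to extracting an explicit decay bound for $\alpha_{p,q}$ from the supplement's mixing result and then exhibiting one concrete admissible pair $(m_n,r_n)$, after which all conditions of Theorem~\ref{extspace2} and Corollary~\ref{coroll_average} hold and the two displayed CLTs follow at once.
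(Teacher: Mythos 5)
Your proposal follows essentially the same route as the paper's proof: verify (M1)--(M4) for $m_n=n^{\beta_1}$ together with a polynomially growing $r_n=n^{\beta_2}$, $0<\beta_2<\min\{\beta_1,\,1/2-\beta_1\}$, check the anti-clustering condition (M3) via the closed form \eqref{chi_BR} and the Gaussian tail bound, and then invoke Theorem~\ref{extspace2} and Corollary~\ref{coroll_average}. The tension you flag between (M2) and (M4) as the ``main obstacle'' dissolves immediately once the supplement's mixing bound is used, since it gives stretched-exponential decay $\alpha_{p,q}(v)\le 4e^{-\theta_1 v^{\alpha_1}/2}$, so the factor $e^{-\theta_1 r_n^{\alpha_1}/2}=e^{-\theta_1 n^{\alpha_1\beta_2}/2}$ annihilates every polynomial prefactor and any $\beta_2$ in the stated range is admissible.
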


\bproof
We need to verify the conditions of Corollary~\ref{coroll_average}; i.e., conditions (M1)-(M4) of Theorem~\ref{extspace2} for $a_{m_n}=m_n^2$, and apply results of Section~8 of the supplement~\cite{BDKSsupp}. \\[2mm]
Condition~{(M1)} is satisfied by equation~(8.2).\\[2mm] 
To show conditions~{(M2)-(M4)} we choose sequences $m_n=n^{\beta_1}$ and $r_n=n^{\beta_2}$ for $0<\beta_1<1/2$ and $0<\beta_2<\beta_1.$ 
For this choice $m_n$ and $r_n$ increase to infinity with $m_n={o(n)}$ and $r_n=o(m_n)$ as required.\\[2mm]
Condition~{(M2)}; i.e., $m_n^2r_n^2/n=n^{2(\beta_1+\beta_2)-1} \to 0$ holds if and only if $\beta_2 \in (0,\min\{\beta_1,(1/2-\beta_1)\})$.\\[2mm]
We now show condition~{(M3)}. Choose $\gamma > 0$, such that all lags in $\calv$  lie in $B(\bs 0,\ga)=\{\bs s\in\mathbb{Z}^2 : \|\bs s\|\le\ga\}$. 
{For $\eps>0$, like in Example~4.6 of \citet{buhl3}, we have for $\bs s, \bs s' \in \bbr^2$ by a Taylor expansion, 
\begin{align*}
\mathbb{P}(\eta(\bs s,t)>\eps m_n^2,\,&\eta(\bs s',t)>\eps m_n^2)\\&=1-2\mathbb{P}(\eta(\bs 0,0) \leq \eps m_n^2)+\mathbb{P}(\eta(\bs s,t)\leq \eps m_n^2,\eta(\bs s',t)\leq \eps m_n^2) \\
&=1-2\exp\Big\{-\frac{1}{x}\Big\}+\exp\Big\{-\frac{2-\chi(\|\bs s-\bs s'\|,0)}{\eps m_n^2}\Big\} \\
&=\frac{1}{\eps m_n^2}\chi(\|\bs s-\bs s'\|,0)+\mathcal{O}\Big(\frac{1}{m_n^4}\Big),\quad\nto.
\end{align*}
Therefore}, for $\|\bs h\| \geq 2\ga$, 
\begin{align}
\lefteqn{\mathbb{P}(\max\limits_{\bs s \in B(\bs 0, \ga)} \eta(\bs s,t) > \eps m_n^2,\max\limits_{\bs s' \in B(\bs h, \ga)} \eta(\bs s',t) > \eps m_n^2)} \nonumber\\
\leq & \sum\limits_{\bs s \in B(\bs 0, \ga)}\sum\limits_{\bs s' \in B(\bs h, \ga)}\mathbb{P}(\eta(\bs s,t) > \eps m_n^2, \eta(\bs s',t) > \eps m_n^2) \nonumber\\
= & \sum\limits_{\bs s \in B(\bs 0, \ga)}\sum\limits_{\bs s' \in B(\bs h, \ga)}\Big\{\frac1{\eps m_n^2}\chi(\|\bs s-\bs s'\|,0) +\calo\Big(\frac{1}{m_n^4}\Big)\Big\}\nonumber\\
\leq &\frac{2|B(\bs 0, \ga)|^2}{\eps m_n^2} \big(1-\Phi(\sqrt{\theta_1(\|\bs h\|-2\ga)^{\alpha_1}}\big) + \mathcal{O}\Big(\frac{1}{m_n^4}\Big), \label{spat_cond3}
\end{align}
as $\nto$, where we have used \eqref{chi_BR}. 
Summarize $V:=\{v=\|\bs h\|: \bs h \in \mathbb{Z}^2\}$ and note that $|\{\bs h \in \mathbb{Z}^2: \|\bs h\|=v\}|=\mathcal{O}(v)$.
Therefore, for $k\ge 2\ga$,
\begin{align*}
L_{m_n} :=& \, \limsup_{\nto} m_n^2 \sum_{\bs h \in \mathbb{Z}^{2} \atop k< \|\bs h\| \leq r_n} 
\mathbb{P}\Big(\max\limits_{\bs s \in B(\bs 0, \ga)} \eta(\bs s,t)>\eps m_n^2, \max\limits_{\bs s' \in B(\bs h, \ga)} \eta(\bs s',t)>\eps m_n^2\Big) \\
\leq & \, 2 |B(\bs 0, \ga)|^2\limsup_{\nto} \bigg\{\sum_{\bs h \in \mathbb{Z}^{2} \atop k< \|\bs h\| \leq r_n} 
\Big\{ \frac{1}{\eps} (1-\Phi(\sqrt{\theta_1(\|\bs h\|-2\ga)^{\alpha_1}}))\Big\}+\mathcal{O}\Big(\Big(\frac{r_n}{m_n}\Big)^2\Big)\bigg\}\\
\leq &\,  K_1 
\limsup\limits_{\nto} \sum\limits_{v\in V: \atop k< v \leq r_n} \Big\{  \frac{v}{\eps} 2\big(1-\Phi(\sqrt{\theta_1(v-2\ga)^{\alpha_1}})\big)\Big\},
\end{align*}
for some constant $K_1>0$.  
For the term $\mathcal{O}((r_n/m_n)^2)$ we use that $r_n/m_n \rightarrow 0$.
From Lemma~8.3 
and the fact that $1-\Phi(x) \leq \exp\{-x^2/2\}$ for $x>0$, we find for $K_2>0$, 
\begin{align*}
L_{m_n} \le &  {K_2 } k^2 \exp\big\{-\frac12 \theta_1(k-2\ga)^{\alpha_1}\big\}.
\end{align*}
Since $\alpha_1>0$, the right hand side converges to 0 as $k \rightarrow \infty$ ensuring condition~{(M3)}.\\[2mm]
Now we turn to the mixing conditions~{(M4)}.\\
We start with~{(M4i)}.
 With $V$ as before, and with equation~(8.2), 
 we estimate, recalling from above that the number of lags $\|\bs h\|=v$ is of oder $\mathcal{O}(v)$,
\begin{align*}
m_n^2\sum\limits_{\bs h \in \mathbb{Z}^{2} : \Vert{\bs h}\Vert > r_n} \alpha_{1, 1}(\|\bs h\|)
&\leq K_1 m_n^2\sum\limits_{v \in V : v > r_n} v\,\alpha_{1,1}(v)
 \leq 4 K_1  m_n^2 \sum\limits_{v \in V : v > r_n}  v\,e^{-{\theta_1 v^{\alpha_1}}/2}.
\end{align*}
By Lemma~8.3 
we find 
$$m_n^2\sum\limits_{v \in V : v > r_n} v \, e^{-{\theta_1 v^{\alpha_1}}/2} 
\leq c m_n^2 r_n^2 \ e^{-{\theta_1r_n^{\alpha_1}}/2}
=c m_n^2 r_n^2 \ e^{- {\theta_1n^{\alpha_1\beta_2}}/2}\to 0,\quad \nto.$$
By the same arguments condition~{(M4ii)} is satisfied. \\
Condition~{(M4iii)} holds by equation~(8.2), 
since
\begin{align*}
{m_n \, n\,  \alpha_{1,n^2}(r_n)} \leq 4 n^3 m_n \, e^{- {\theta_1r_n^{\alpha_1}}/2}\to 0,\quad \nto. 
\end{align*}
\eproof

\brem\label{remarkextspace_Fr}
We want to examine for which choices of $\beta_1$, introduced with the sequence $m_n=n^{\beta_1}$ in Theorem~\ref{extspace2_BR}, we can replace the pre-asymptotic extremogram by the theoretical one in the CLTs~\eqref{CLT_BR} and \eqref{CLT_ave_BR}; that is, the bias condition~\eqref{condition5},
\begin{equation*}
\frac{n}{m_n}\left(\chi_n(v,0) - \chi(v,0)\right) \to 0,\quad \nto,
\end{equation*}
is satisfied for all spatial {lags} $v\in \calv$.
For the Brown-Resnick process \eqref{limitfield} we obtain from Lemma~\ref{le3.2}, 
\begin{align*}
&\frac{n}{m_n}\left(\chi_n(v,0)- \chi(v,0)\right) \\
&{} = \frac{n}{m_n}\left(\frac{\mathbb{P}(\eta(\bs{s},t)>m_n^{2},\eta(\bs{s}+\bs{h},t)>m_n^{2})}{\mathbb{P}(\eta(\bs{s},t)>m_n^{2})} - \chi(v,0)\right) \\
&{} \sim  \frac{n}{2m_n^3}\big(\chi(v,0)-2\big)\big(\chi(v,0)-1\big)\\
&{} = n^{1-3\beta_1} \frac{1}{2}\big(\chi(v,0)-2\big)\big(\chi(v,0)-1\big)\quad
 \to \quad 0 \quad\text{  if and only if } \beta_1 >1/3;
 \end{align*}
{cf. Theorem~4.4 of \citet{buhl3}.}
 Thus we have to distinguish two cases:
 \begin{enumerate}
 \item[(I)]
For  $\beta_1 \leq 1/3$ we cannot replace the pre-asymptotic extremogram by the theoretical version, but can resort to a bias correction, which is decribed in \eqref{biascorrection} below.
\item[(II)]
 For $1/3<\beta_1<1/2$ we obtain indeed
 \beam\label{ohnebias}
 n^{1-\beta_1}\big(\wh{\chi}^{(t)}(v,0) - \chi(v,0)\big)_{v \in \calv}\stackrel{d}{\to} \mathcal{N}(\bs 0,\Pi^{(\textnormal{iso})}_1), \quad \nto,
 \eeam 
and likewise for the averaged empirical extremogram,
 \beam\label{ohnebias_2}
 n^{1-\beta_1}\big(\wh{\chi}(v,0) - \chi(v,0)\big)_{v \in \calv}\stackrel{d}{\to} \mathcal{N}(\bs 0,\Pi^{(\textnormal{iso})}_2), \quad \nto.
 \eeam
 \end{enumerate}
\erem

We now turn to the bias correction needed in case (I).  
By Lemma~\ref{le3.2} the pre-asymptotic extremogram has representation 
\begin{align}
\chi_n(v,0)&=\chi(v,0)+ \Big[\frac{1}{2m_n^{2}}\big(\chi(v,0)-2\big)\big(\chi(v,0)-1\big)\Big](1+o(1)) \nonumber\\
&=\chi(v,0) + \frac{1}{2m_n^{2}}\nu(v,0)(1+o(1)),\quad \nto, \label{chim}
\end{align}
where $\nu(v,0):=\big(\chi(v,0)-2\big)\big(\chi(v,0)-1\big).$
Consequently, we propose for fixed $t \in \{t_1,\ldots,t_T\}$ {and all $v\in\calv$}  the {\em bias corrected empirical  spatial extremogram } 
\begin{align*}
\wh{\chi}^{(t)}(v,0)-\frac{1}{2m_n^{2}}\big(\wh{\chi}^{(t)}(v,0)-2\big)\big(\wh{\chi}^{(t)}(v,0)-1\big)
=:\wh{\chi}^{(t)}(v,0)-\frac{1}{2m_n^{2}}\wh{\nu}^{(t)}(v,0), 
\end{align*}
and set 
\begin{align}
\wt{\chi}^{(t)}(v,0) := 
\begin{cases}
\wh{\chi}^{(t)}(v,0)-\dfrac{1}{2m_n^{2}}\wh{\nu}^{(t)}(v,0)\quad & \mbox{if }  m_n=n^{\beta_1} \text{ with } \beta_1 \in (\frac1{5},\frac1{3}], \\
\wh{\chi}^{(t)}(v,0)\quad & \mbox{if } m_n=n^{\beta_1} \text{ with } \beta_1 \in (\frac1{3},\frac1{2}). \label{biascorrection}
\end{cases} 
\end{align}
Theorem~\ref{biastheorem} below shows asymptotic normality of the bias corrected extremogram centred by the true one and, in particular, why $\beta_1$ has to be larger than $1/5$.

\begin{theorem}\label{biastheorem}
For a fixed time point $t\in\{t_1,\ldots,t_T\}$ consider the spatial Brown-Resnick process $\left\{\eta(\bs{s},t),\bs{s}\in \bbr^2\right\}$ defined in \eqref{limitfield} {with dependence function given in \eqref{delta}}. 
Set $m_n=n^{\beta_1}$ for $\beta_1\in\big(\frac{1}{5},\frac{1}{3}\big]$.
Then the bias corrected empirical spatial extremogram \eqref{biascorrection} satisfies
\beam\label{mitbias}
\frac{n}{m_n}\big(\wt{\chi}^{(t)}(v,0) - \chi(v,0)\big)_{v \in \calv}\stackrel{d}{\to} \mathcal{N}(\bs 0,\Pi^{(\textnormal{iso})}_1), \quad \nto,
\eeam
where $\Pi^{(\textnormal{iso})}_1$ is the covariance matrix as given in equation~\eqref{isomat}.
Furthermore, the corresponding bias corrected averaged version $\wt{\chi}(v,0)=T^{-1}\sum_{k=1}^T\wt{\chi}^{{(t_k})}(v,0)$ satisfies
$$\frac{n}{m_n} \Big(\wt{\chi}(v,0) - \chi(v,0)\Big)_{v \in \calv}\stackrel{d}{\to} \mathcal{N}(\bs 0,\Pi^{(\textnormal{iso})}_2), \quad \nto,$$
with covariance matrix $\Pi^{(\textnormal{iso})}_2$ specified in \eqref{defPi}.
\end{theorem}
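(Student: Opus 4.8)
The plan is to deduce the bias-corrected CLT from the centred CLT of Theorem~\ref{extspace2_BR} by checking that, after subtracting the estimated bias, re-centring at the true $\chi(v,0)$ instead of the pre-asymptotic $\chi_n(v,0)$ costs only an $o_P(1)$ error at the rate $n/m_n$. Writing $\nu(v,0)=(\chi(v,0)-2)(\chi(v,0)-1)$ and $\wh\nu^{(t)}(v,0)=(\wh\chi^{(t)}(v,0)-2)(\wh\chi^{(t)}(v,0)-1)$, I would start from the decomposition
\begin{align*}
\frac{n}{m_n}\big(\wt\chi^{(t)}(v,0)-\chi(v,0)\big)
&=\frac{n}{m_n}\big(\wh\chi^{(t)}(v,0)-\chi_n(v,0)\big)\\
&\quad+\frac{n}{m_n}\Big(\chi_n(v,0)-\chi(v,0)-\frac{1}{2m_n^{2}}\nu(v,0)\Big)\\
&\quad-\frac{n}{2m_n^{3}}\big(\wh\nu^{(t)}(v,0)-\nu(v,0)\big).
\end{align*}
By \eqref{CLT_BR} the first term converges jointly over $v\in\calv$ to $\mathcal N(\bs 0,\Pi^{(\iso)}_1)$, so by Slutsky's theorem it remains to show that the last two terms are $o_P(1)$.

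For the second term I would sharpen the expansion \eqref{chim} of Lemma~\ref{le3.2} to second order, so that its relative remainder $o(1)$ is in fact $\mathcal O(m_n^{-2})$; equivalently $\chi_n(v,0)-\chi(v,0)-\tfrac{1}{2m_n^2}\nu(v,0)=\mathcal O(m_n^{-4})$. This is available because the bivariate distribution \eqref{bivhuesler} is smooth in $1/a_n$, so the Taylor expansion underlying \eqref{chim} has a genuine $\mathcal O(a_n^{-2})=\mathcal O(m_n^{-4})$ next-order term. Multiplying by $n/m_n$ produces a contribution of order $n/m_n^{5}=n^{1-5\beta_1}$, which tends to $0$ precisely because $\beta_1>1/5$; this is the step that pins down the lower endpoint of the admissible range for $\beta_1$.

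For the third term I would use the exact algebraic identity $\wh\nu^{(t)}(v,0)-\nu(v,0)=(2\chi(v,0)-3)\,\epsilon_v+\epsilon_v^2$ with $\epsilon_v:=\wh\chi^{(t)}(v,0)-\chi(v,0)$. Since $\epsilon_v=\mathcal O_P(m_n/n)+\mathcal O(m_n^{-2})$ — the first order from \eqref{CLT_BR}, the second from the pre-asymptotic bias — the two pieces $\tfrac{n}{2m_n^3}(2\chi-3)\epsilon_v$ and $\tfrac{n}{2m_n^3}\epsilon_v^2$ are $o_P(1)$: their dominant contributions are of order $m_n^{-2}$, $n^{1-5\beta_1}$, $(nm_n)^{-1}$ and $n^{1-7\beta_1}$, all of which vanish for $1/5<\beta_1\le 1/3$. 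Hence the last two terms in the decomposition are jointly $o_P(1)$, and Slutsky's theorem yields \eqref{mitbias}.

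Finally, the averaged statement follows by the identical decomposition applied to $\wt\chi(v,0)=T^{-1}\sum_{k=1}^T\wt\chi^{(t_k)}(v,0)$: the leading term now converges to $\mathcal N(\bs 0,\Pi^{(\iso)}_2)$ by \eqref{CLT_ave_BR}, while each of the finitely many ($T$ fixed) bias residuals is $o_P(1)$ by the same estimates, using that $\chi_n(v,0)$ is independent of $t_k$ by stationarity. I expect the main obstacle to be the second paragraph: one must control the pre-asymptotic bias to \emph{genuine} second order, because it is exactly the magnitude of this second-order remainder — relative to the first-order correction $\tfrac{1}{2m_n^2}\wh\nu^{(t)}$ that the estimator removes — which forces $\beta_1>1/5$.
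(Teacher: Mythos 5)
Your proposal is correct and follows essentially the same route as the paper's proof: both reduce \eqref{mitbias} to the centred CLT of Theorem~\ref{extspace2_BR} plus the claim that the bias residuals are $o_P(1)$ at rate $n/m_n$, both use the expansion \eqref{chim} of the pre-asymptotic extremogram and an algebraic factorization of $\wh{\nu}-\nu$ in terms of the extremogram error, and both locate the constraint $\beta_1>1/5$ in a term of order $n/m_n^5=n^{1-5\beta_1}$. The only difference is bookkeeping: you isolate the second-order remainder $\chi_n(v,0)-\chi(v,0)-\tfrac{1}{2m_n^2}\nu(v,0)$ explicitly and sharpen Lemma~\ref{le3.2} to an $\mathcal{O}(m_n^{-4})$ bound (which is indeed available from the explicit bivariate law \eqref{bivhuesler}, and which the paper's ``$\sim$'' leaves implicit), whereas the paper reaches the same $n^{1-5\beta_1}$ rate through the term $A_2=\tfrac{n}{2m_n^3}(\nu_n(v,0)-\nu(v,0))$ after splitting $\wh{\nu}-\nu$ via the pre-asymptotic quantity $\nu_n$; your version is, if anything, slightly more careful on this point.
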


\begin{proof}
For simplicity we suppress the time point $t$ in the notation.
By \eqref{chim} and \eqref{biascorrection} we have as $\nto$, $$\frac{n}{m_n}(\wt{\chi}(v,0) - \chi(v,0)) \sim \frac{n}{m_n}(\wh{\chi}(v,0)-\chi_n(v,0)) - \frac{n}{2m_n^3}(\wh{\nu}(v,0)-\nu(v,0)).$$
By Theorem \ref{extspace2_BR} it suffices to show that $(n/(2m_n^3))(\wh{\nu}(v,0)-\nu(v,0)) \stackrel{P}{\to} 0$. 
Setting $\nu_n(v,0) := \big(\chi_n(v,0)-2\big)\big(\chi_n(v,0)-1\big)$ we have
\begin{align*}
\frac{n}{2m_n^3}(\wh{\nu}(v,0)-\nu(v,0))=\frac{n}{2m_n^3} (\wh{\nu}(v,0)-\nu_{n}(v,0)) + \frac{n}{2m_n^3}(\nu_{n}(v,0)-\nu(v,0))\eqqcolon A_1+A_2.
\end{align*}
We calculate 
\begin{align*}
&\frac{n}{m_n (2\chi(v,0)-3)}\Big(\wh{\nu}(v,0)-\nu_{n}(v,0)\Big)\\
&=\frac{n}{m_n(2\chi(v,0)-3)}\Big(\wh{\chi}^2(v,0)-3\wh{\chi}(v,0)-(\chi_n^2(v,0)-3\chi_n(v,0))\Big)\\
&= \frac{n}{m_n(2\chi(v,0)-3)}
\Big((\wh{\chi}(v,0) -\chi_n(v,0))(\wh{\chi}(v,0) +\chi_n(v,0)) -3(\wh{\chi}(v,0)-\chi_n(v,0))\Big)\\
&=\frac{n}{m_n}
\Big(\wh{\chi}(v,0) -\chi_n(v,0)\Big) \frac{\wh{\chi}(v,0) +\chi_n(v,0) -3}{2\chi(v,0)-3}.
\end{align*}
The first term
converges by Theorem \ref{extspace2_BR} weakly to a normal distribution, and
the second term, 
together with the fact that $\wh{\chi}(v,0)\stackrel{P}{\to} \chi(v,0)$ and $\chi_n(v,0)\stackrel{P}{\to} \chi(v,0)$, converges to 1 in probability.
Hence, it follows from Slutzky's theorem that $A_1\stackrel{P}{\to} 0$.
Now we turn to $A_2$ and calculate
\begin{align*}
&\nu_{n}(v,0) =\chi_n^2(v,0)-3\chi_n(v,0)+2 \\
 &\quad \sim \Big(\chi(v,0)+\frac{1}{2m_n^{2}}\nu(v,0)\Big)^2 -3\Big( \chi(v,0)+\frac{1}{2m_n^{2}}\nu(v,0)\Big)+2 \\
&\quad= \chi^2(v,0)- 3\chi(v,0) +2+\frac{1}{m_n^{2}}\chi(v,0)\nu(v,0)+ \frac{1}{4m_n^{4}}\nu(v,0)^2 -\frac{3}{2m_n^{2}}\nu(v,0)\\
&\quad= \big(\chi(v,0)-2\big)\big(\chi(v,0)-1\big)+\frac{1}{m_n^{2}}\chi(v,0)\nu(v,0)+ \frac{1}{4m_n^{4}}\nu(v,0)^2 -\frac{3}{2m_n^{2}}\nu(v,0)\\
&\quad= \nu(v,0) + \frac{\nu(v,0)}{m_n^{2}}\Big(\chi(v,0) +\frac{1}{4m_n^{2}}\nu(v,0)-\frac{3}{2}\Big),
\end{align*}
where we have used \eqref{chim}.
Therefore, $A_2$ converges to $0$, if ${n}/{m_n^5}\to 0$ as $\nto$.
With $m_n=n^{\beta_1}$ it follows that $\beta_1>\frac{1}{5}$.  
Finally, the last statement follows as Corollary~\ref{coroll_average}.
\end{proof}

\brem\label{rates}
{Note that in~\eqref{ohnebias} and \eqref{ohnebias_2} the rate of convergence is of the order $n^a$ for $a\in(1/2, 2/3)$. } 
{On the other hand, after bias correction in \eqref{mitbias} we obtain convergence of the order $n^a$ for $a\in [2/3,4/5)$; i.e. a better rate. } 
\erem

\bexam
We generate 100 realizations of the Brown-Resnick process in \eqref{limitfield} using the \texttt{R}-package \texttt{RandomFields} \cite{Schlather5} and the exact method {via extremal functions} proposed in \citet{Dombry2}, {Section~2}. 
We then compare the empirical estimates  of the spatial extremogram $\wh{\chi}(v,0)$ in \eqref{extremogramspace} and the bias corrected ones $\wt{\chi}(v,0)$ in \eqref{biascorrection} with the true theoretical extremogram $\chi(v,0)$ {for lags $v \in \{1,\sqrt{2},2,\sqrt{5},\sqrt{8},3,\sqrt{10},\sqrt{13},4,\sqrt{17}\}$}. 
We choose the parameters $\theta_1 =0.4$ and $\alpha_1=1.5$. 
The grid size and the number of time points are given by $n=70$ and $T=10$. 
The results are summarized in Figure \ref{Bias}. 
We see that the bias corrected extremogram is closer to the true one.
\eexam

\begin{figure}[t!] 
\centering
\subfloat[]{\includegraphics[scale=0.22]{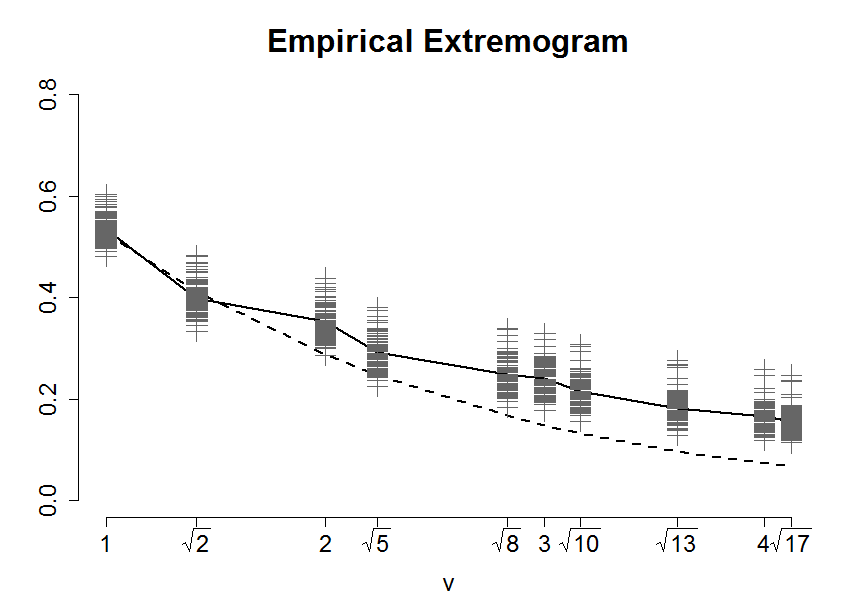}}
\subfloat[]{\includegraphics[scale=0.22]{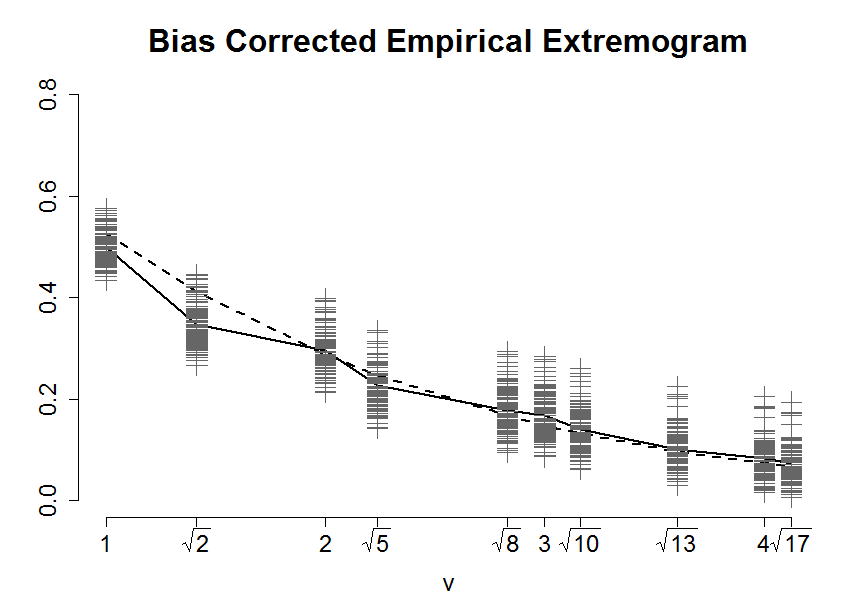}}
\caption{Empirical spatial extremogram (left) and its bias corrected version  (right) for 100 simulated max-stable random fields in \eqref{limitfield} with $\delta(v,0) = 2\cdot 0.4 v^{1.5}$. The dashed line represents the theoretical spatial extremogram and the solid line is the mean over all 100 replicates.}
\label{Bias}
\end{figure}

\subsection{Asymptotic properties of spatial parameter estimates of the Brown-Resnick process}\label{spacesection2_BR}
In this section we prove asymptotic normality of the {WLSE $(\wh\theta_1,\wh\alpha_1)$}. We proceed as in the more general setting in Section~\ref{spacesection2}. Recall that in the more specific situation here we have $C_1=\log(\theta_1)$ and choose the transformation $T_1(\chi(v,0))=2\log\big(\Phi^{-1}\big(1-\frac{1}{2}\chi(v,0)\big)\big)$, where the log transformed version of the spatial lag satisfies $x_v=\log(v) \geq 0$ for $v \in \calv$. We set $\wt{\chi}(v,0) = \frac{1}{T}\sum_{k=1}^T \wt{\chi}^{(t_k)}(v,0)$ as in~\eqref{extrspacemean}, possibly after a bias correction, which depends on the two cases described in Remark~\ref{remarkextspace_Fr}. 
The analogue of the weighted least squares optimization problem \eqref{minspace}  then reads as 
\begin{align}\label{extrspaceneu_BR}
\begin{pmatrix}\wh{\theta}_1 \\ 
\wh{\alpha}_1
\end{pmatrix} 
= \argmin_{\stackrel{\theta_1,\alpha_1>0}{\alpha_1\in (0,2] }} 
&\sum_{v \in \calv}  w_v  \big(T_1(\wt{\chi}(v,0))-\big(\log(\theta_1) + \alpha_1 x_v\big)\big)^2.
\end{align}
Note in particular that $\alpha_1$ has bounded support; this has to be treated as a special case in what follows.
To show asymptotic normality of the WLSE in \eqref{extrspaceneu_BR}, we define as before  design and weight matrices $X$ and $W$ as
$$X = [\bs{1},(x_v)\trans_{v \in \calv}]\in\R^{p\times 2} \quad \text{ and } \quad W=\diag\{w_v: v \in \calv\}\in\R^{p\times p},$$
respectively, where $\bs{1}=(1,\ldots,1)\trans \in \mathbb{R}^{p}$.
Let $\bs{\psi}_1 = (\log(\theta_1),\alpha_1)\trans$ be the parameter vector with parameter space $\Theta_{\cals} = \bbr\times(0,2]$. 
Then the WLSE; i.e., the solution to \eqref{extrspaceneu_BR} is given by 
\beam\label{psihat}
\wh{\bs{\psi}}_1=\begin{pmatrix}\log(\wh{\theta}_1) \\ \wh{\alpha}_1
\end{pmatrix}= (X\trans W X)^{-1}X\trans W (T_1(\wt{\chi}(v,0)))\trans_{v \in \calv}.
\eeam
Without any constraints $\wh{\bs{\psi}}_1$ may produce estimates of $\alpha_1$ outside its parameter space $(0,2]$. 
In such cases  we set the parameter estimate equal to 2, and we denote the resulting estimate by $\bs{\wh{\psi}}_1^c = (\log(\wh \theta_1^c),\wh{\alpha}_1^c)\trans$.

\begin{theorem}\label{asysemispace_BR}
Let $\bs{\wh{\psi}}_1^c = (\log(\wh \theta_1^c),\wh{\alpha}_1^c)\trans$ denote the WLSE resulting from the constrained minimization problem \eqref{extrspaceneu_BR} and $\bs{\psi}^{*}_1=(\log(\theta_1^*),\alpha_1^*)\trans\in \Theta_{\cals}$ the true parameter vector. 
Set $m_n = n^{\beta_1}$ for $\beta_1\in(1/5,1/2)$.
Then as $\nto$,
\begin{equation}
\frac{n}{m_n}\Big(\bs{\wh{\psi}}_1^c - \bs{\psi}_1^{*}\Big) \stackrel{d}{\to} 
\begin{cases}
\bs Z_1\quad & \mbox{if } \alpha_1^{*}<2, \\
\bs Z_2\quad & \mbox{if }  \alpha_1^{*}=2,
\end{cases} 
\end{equation}
where $\bs Z_1 \sim \mathcal{N}(\bs 0,\Pi^{(\textnormal{iso})}_3)$, and the distribution of $\bs Z_2$ is given by
\begin{align}
& \mathbb{P}\left(\bs Z_2 \in B\right)
= \int_{B\cap \left\{(b_1,b_2)\in \bbr^2:b_2<0\right\}} \varphi_{\bs{0},\Pi^{(\textnormal{iso})}_3}(z_1,z_2)dz_1dz_2\label{DistZ2}\\
&\qquad \qquad+ \int_{0}^{\infty}\int_{\{b_1\in\bbr: (b_1,0)\in B\}} \varphi_{\bs{0},\Pi^{(\textnormal{iso})}_3}\Big (z_1-\frac1{\sum_{v \in \calv} w_v}\sum_{v \in \calv} (w_vx_v) \  z_2,\,z_2\Big)dz_1dz_2\nonumber
\end{align}
for every Borel set $B$ in $\R^2$, and $\varphi_{\bs 0,\Sigma}$ denotes the bivariate normal density with mean vector $\bs 0$ and covariance matrix $\Sigma$. In particular, the joint  distribution function of $\bs Z_2$ is given for $(p_1,p_2)\trans\in \mathbb{R}^2$ by
\begin{align}
&\mathbb{P}\left(\bs Z_2 \leq (p_1,p_2)\trans\right)= \int\limits_{-\infty}^{\min\{0,p_2\}} \int\limits_{-\infty}^{p_1} \varphi_{\bs{0},\Pi^{(\textnormal{iso})}_3}(z_1,z_2)dz_1dz_2 \label{cdfZ2}\\
&\qquad \qquad+ \mathbbmss{1}_{\{p_2 \geq 0\}} \int\limits_{0}^{\infty}\int\limits_{-\infty}^{p_1} \varphi_{\bs{0},\Pi^{(\textnormal{iso})}_3}\Big (z_1-\frac1{\sum_{v \in \calv} w_v}\sum_{v \in \calv} (w_vx_v) \  z_2,\,z_2\Big)dz_1dz_2. \nonumber
\end{align}
The covariance matrix of $\bs Z_1$ has representation
\begin{equation}
\Pi^{(\textnormal{iso})}_3 = Q_x^{(w)}G\Pi^{(\textnormal{iso})}_2 G{Q_x^{(w)}}\trans,
\label{covspace3}
\end{equation}
where $\Pi^{(\textnormal{iso})}_2$ is the covariance matrix given in \eqref{defPi}, 
\begin{align*}
Q_x^{(w)} &= (X\trans W X)^{-1}X\trans W \quad \text{ and}\quad
G = \diag\bigg \{ \sqrt{\frac{2\pi}{\theta_1^*v^{\alpha_1^*}}}  \exp\Big\{\frac1{2}\theta_1^*v^{\alpha_1^*}\Big\}: \ v \in \calv\hspace*{0.1cm}\bigg \}.
\end{align*}
\end{theorem}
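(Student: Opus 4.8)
The plan is to reduce everything to the unconstrained CLT of Theorem~\ref{asysemispace} and then treat the boundary constraint $\alpha_1\le 2$ as a metric projection. First I would combine the CLT of Theorem~\ref{biastheorem} for $\frac{n}{m_n}(\wt\chi(v,0)-\chi(v,0))_{v\in\calv}$ with the multivariate delta method for $T_1$ to obtain
\[
\frac{n}{m_n}\big(T_1(\wt\chi(v,0))-T_1(\chi(v,0))\big)_{v\in\calv}\std\mathcal N(\bs 0,G\Pi^{(\iso)}_2G),
\]
where $G=\diag\{T_1'(\chi(v,0)):v\in\calv\}$. Differentiating $T_1(\chi)=2\log\Phi^{-1}(1-\tfrac12\chi)$ and using $\Phi^{-1}(1-\tfrac12\chi(v,0))=\sqrt{\theta_1^*v^{\alpha_1^*}}$ from~\eqref{chi_BR} gives $|T_1'(\chi(v,0))|=\sqrt{2\pi/(\theta_1^*v^{\alpha_1^*})}\exp\{\tfrac12\theta_1^*v^{\alpha_1^*}\}$, the stated diagonal of $G$ (the sign is immaterial inside $G\Pi^{(\iso)}_2G$). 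Writing the unconstrained minimiser as $\wh{\bs\psi}_1=Q_x^{(w)}(T_1(\wt\chi(v,0)))\trans_{v\in\calv}$ and using that the model holds exactly at the truth (so $(T_1(\chi(v,0)))_{v\in\calv}=X\bs\psi_1^*$ and $Q_x^{(w)}X=I$, whence $Q_x^{(w)}(T_1(\chi(v,0)))\trans_{v\in\calv}=\bs\psi_1^*$), I obtain
\[
\bs U_n:=\frac{n}{m_n}\big(\wh{\bs\psi}_1-\bs\psi_1^*\big)=Q_x^{(w)}\frac{n}{m_n}\big(T_1(\wt\chi(v,0))-T_1(\chi(v,0))\big)\trans_{v\in\calv}\std\bs Z\sim\mathcal N(\bs 0,\Pi^{(\iso)}_3),
\]
with $\Pi^{(\iso)}_3$ as in~\eqref{covspace3}; write $\bs Z=(Z_1,Z_2)\trans$.

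In the interior case $\alpha_1^*<2$, consistency gives $\wh\alpha_1\stp\alpha_1^*<2$, so $\mathbb P(\wh\alpha_1\le 2)\to 1$; on this event the constraint is inactive, $\wh{\bs\psi}_1^c=\wh{\bs\psi}_1$, and the limit is $\bs Z_1\sim\mathcal N(\bs 0,\Pi^{(\iso)}_3)$. For the boundary case $\alpha_1^*=2$ the key observation is that \eqref{extrspaceneu_BR} minimises a quadratic form: completing the square gives $\sum_{v\in\calv}w_v(T_1(\wt\chi(v,0))-(X\bs\psi)_v)^2=\mathrm{const}+(\bs\psi-\wh{\bs\psi}_1)\trans M(\bs\psi-\wh{\bs\psi}_1)$ with $M:=X\trans WX$, so $\wh{\bs\psi}_1^c$ is the $M$-orthogonal projection of $\wh{\bs\psi}_1$ onto the half-space $\{\alpha_1\le 2\}$. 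Passing to the rescaled coordinate $\bs u=\frac{n}{m_n}(\bs\psi-\bs\psi_1^*)$ with $\bs\psi_1^*=(\log\theta_1^*,2)\trans$, the constraint becomes $\{u_2\le 0\}$ and the $\frac{m_n}{n}$-factors cancel in the quadratic form, so $\bs U_n^c:=\frac{n}{m_n}(\wh{\bs\psi}_1^c-\bs\psi_1^*)=\Psi(\bs U_n)$, where $\Psi$ is the fixed continuous $M$-projection onto $\{u_2\le 0\}$: $\Psi(\bs u)=\bs u$ if $u_2\le 0$ and $\Psi(\bs u)=(u_1+\tfrac{S_1}{S_0}u_2,\,0)\trans$ if $u_2>0$, with $S_0=\sum_{v\in\calv}w_v=M_{11}$ and $S_1=\sum_{v\in\calv}w_vx_v=M_{12}$.

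Since $\Psi$ is continuous and $\bs U_n\std\bs Z$, the continuous mapping theorem yields $\bs U_n^c\std\Psi(\bs Z)=:\bs Z_2$. To identify this law I split on the sign of $Z_2$ (the event $\{Z_2=0\}$ is null because $\Pi^{(\iso)}_3$ is nondegenerate): on $\{Z_2<0\}$ we have $\bs Z_2=\bs Z$, contributing $\int_{B\cap\{b_2<0\}}\varphi_{\bs 0,\Pi^{(\iso)}_3}(z_1,z_2)\,dz_1dz_2$; on $\{Z_2>0\}$ we have $\bs Z_2=(Z_1+\tfrac{S_1}{S_0}Z_2,\,0)\trans$, so the linear change of variables $w_1=z_1+\tfrac{S_1}{S_0}z_2$, $w_2=z_2$ (Jacobian one) turns the density of $(Z_1,Z_2)$ into $\varphi_{\bs 0,\Pi^{(\iso)}_3}(w_1-\tfrac{S_1}{S_0}w_2,w_2)$, giving the second integral in~\eqref{DistZ2} once one notes $\tfrac{S_1}{S_0}=\tfrac{1}{\sum_{v\in\calv}w_v}\sum_{v\in\calv}w_vx_v$. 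Specialising $B$ to lower-left quadrants $\{b_1\le p_1,\,b_2\le p_2\}$ then yields the distribution function~\eqref{cdfZ2}.

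The main obstacle is the boundary case: one must recognise the constrained WLSE as an $M$-metric projection of the unconstrained estimator, check that rescaling by $n/m_n$ turns the constrained minimisation into the single fixed, non-smooth map $\Psi$ applied to $\bs U_n$, and then apply the continuous mapping theorem to $\Psi$; the explicit density of $\bs Z_2$ then drops out of the change of variables above.
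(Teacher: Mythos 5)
Your proposal is correct and follows essentially the same route as the paper: the unconstrained CLT via the delta method and linearity of the WLSE, the interior case by consistency, and the boundary case by recognising the constrained estimator as the $X\trans WX$-metric projection of the unconstrained one (the paper writes this projection explicitly following Andrews, arriving at the same map $\Psi$ with slope $p_{wx}=\sum_v w_vx_v/\sum_v w_v$), followed by positive homogeneity, the continuous mapping theorem and the same change of variables for the law of $\bs Z_2$. Your explicit remark that $\{Z_2=0\}$ is a null set under the nondegenerate limit is a small point the paper leaves implicit, but otherwise the two arguments coincide.
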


\begin{proof}
For the first part of the proof, we neglect the constraints on $\alpha_1$. Then we can directly use Theorem~\ref{asysemispace}, observing that the derivative of $T_1$ is given by
$$T_1'(x) = -\Big(\Phi^{-1}(1-\frac{x}{2})\,\varphi(\Phi^{-1}(1-\frac{x}{2}))\Big)^{-1}, \quad 0<x<1,$$
where $\varphi$ is the univariate standard normal density. Thus,
$$T_1'(\chi(v,0)) = -\Big(\sqrt{\theta_1^*v^{\alpha_1^*}}\,\varphi\big(\sqrt{\theta_1^*v^{\alpha_1^*}}\big)\Big)^{-1}=
 -\sqrt{\frac{2\pi}{\theta_1^*v^{\alpha_1^*}}}  \exp\Big\{\frac1{2}\theta_1^*v^{\alpha_1^*}\Big\}.$$
Hence, as $\nto$,
$$\frac{n}{m_n}\left(\bs{\wh{\psi}}_1-\bs{\psi}_1^*\right)= \frac{n}{m_n} Q_x^{(w)} \big(T_1(\wt{\chi}(v,0))-T_1(\chi(v,0))\big)_{v \in \calv} \stackrel{d}{\to} \mathcal{N}\left(\bs{0},Q_x^{(w)}G\Pi^{(\textnormal{iso})}_2 G {Q_x^{(w)}}\trans\right).$$ Note that we can define the diagonal matrix $G$ unsigned, since signs cancel out.
We now turn to the constraints on $\alpha_1$.
Since the objective function is quadratic, if the unconstrained estimate exceeds two, the constraint $\alpha_1\in(0,2]$ results in an estimate $\wh{\alpha}_1^c=2$.
We consider separately the cases $\alpha_1^{*}<2$ and $\alpha_1^{*}=2$; i.e., the true parameter lies either in the interior or on the boundary of the parameter space.
The constrained estimator $\bs{\wh{\psi}}_1^c$ can be written as
$$\bs{\wh{\psi}}_1^c = \bs{\wh{\psi}}_1\mathds{1}_{\left\{\wh{\alpha}_1\leq 2\right\}} + (\wh{\theta}_1,2)\trans \mathds{1}_{\left\{\wh{\alpha}_1> 2\right\}}. $$
We calculate the asymptotic probabilities for the events $\{\wh{\alpha}_1 \leq 2\}$ and $\{\wh{\alpha}_1 >2\}$,
\begin{align*}
\mathbb{P}(\wh{\alpha}_1\leq 2) &= \mathbb{P}\Big(\frac{n}{m_n}(\wh{\alpha}_1-\alpha_1^{*}) 
\leq \frac{n}{m_n}(2-\alpha_1^{*})\Big).
\end{align*}
Since for $\alpha_1^{*}<2$ as $\nto$
$$\frac{n}{m_n}\big(\wh{\alpha}_1-\alpha_1^{*}\big) \stackrel{d}{\to} 
\mathcal{N}\left(0,{(0,1)}
\Pi^{(\textnormal{iso})}_3 {(0,1)\trans} \right)\quad\mbox{and}\quad
\frac{n}{m_n}(2-\alpha_1^{*})\to \infty,
$$
it follows that
\begin{equation}\label{alpha1not0}
\mathbb{P}(\wh{\alpha}_1\leq 2) \to 1 \quad \text{and} \quad \mathbb{P}(\wh{\alpha}_1>2) \to 0, \quad \nto.
\end{equation}
Therefore, for $\alpha_{1}^{*}<2$,
$$\frac{n}{m_n}\big(\bs{\wh{\psi}}^c_1 - \bs{\psi}_1^{*}\big) \stackrel{d}{\to} \mathcal{N}(\bs 0,\Pi^{\textnormal{(iso)}}_3), \quad \nto.$$
We now consider the case $\alpha_1^{*} = 2$ and $\wh{\alpha}_1>2$ (the unconstrained estimate exceeds 2).
In this case {\eqref{extrspaceneu_BR} leads to} the constrained optimization problem
\begin{align*}
&\min_{\bs{\psi}_1}\{[W^{1/2}((T_1(\wt{\chi}(v,0)))\trans_{v \in \calv}-X\bs{\psi}_1)]\trans[W^{1/2}((T_1(\wt{\chi}(v,0)))\trans_{v \in \calv}-X\bs{\psi}_1]\}, \\ &\quad \text{s.t.} \quad (0,1)\bs{\psi}_1 =2.
\end{align*}
To obtain asymptotic results for $\bs{\wh{\psi}}^c_1 - \bs{\psi}_1^{*}$, the vector $\bs{\wh{\psi}}_1-\bs{\psi}_1^{*}$ is projected onto the line $\Lambda =\{\bs{\psi}\in \bbr^2, (0,1)\bs{\psi}=0\}$, i.e., denoting by $I_2$ the $2\times 2$-identity matrix, the projection matrix  with respect to the induced norm $\bs \psi \mapsto (\bs \psi\trans X\trans W X \bs \psi)^{1/2}$ is given by (cf. \citet{Andrews}, page 1365)
$$P_{\Lambda} = I_2-(X\trans W X)^{-1}(0,1)\trans((0,1)(X\trans W X)^{-1}(0,1)\trans)^{-1}(0,1).$$ 
For simplicity we use the abbreviation $p_{wx} = \sum_{v \in \calv}w_vx_v/\sum_{v \in \calv}w_v$.
We calculate
\begin{align*}
&(\bs{\wh{\psi}}^c_1-\bs{\psi}^{*}_1)\mathds{1}_{\{\wh{\alpha}_1>2\}} = P_{\Lambda}(\bs{\wh{\psi}}_1-\bs{\psi}_1^{*})\mathds{1}_{\{\wh{\alpha}_1>2\}} \\
 &= (\bs{\wh{\psi}}_1-\bs{\psi}_1^{*})\mathds{1}_{\{\wh{\alpha}_1>2\}} - (X\trans WX)^{-1}(0,1)\trans \left((0,1)(X\trans WX)^{-1}(0,1)\trans\right)^{-1}(\wh{\alpha}_1-2)\mathds{1}_{\{\wh{\alpha}_1>2\}}\\
&= (\bs{\wh{\psi}}_1-\bs{\psi}_1^{*})\mathds{1}_{\{\wh{\alpha}_1>2\}} + \begin{pmatrix}p_{wx} \\-1 \end{pmatrix}(\wh{\alpha}_1-2)\mathds{1}_{\{\wh{\alpha}_1>2\}}.
\end{align*}
For the joint constrained estimator $\bs{\psi}_1^c$ we obtain
\begin{align*}
\bs{\wh{\psi}}^c_1-\bs{\psi}_1^{*} &= (\bs{\wh{\psi}}^c_1-\bs{\psi}_1^{*})\mathds{1}_{\{\wh{\alpha}_1\leq 2\}} + (\bs{\wh{\psi}}^c_1-\bs{\psi}_1^{*})\mathds{1}_{\{\wh{\alpha}_1 >2\}} \\
&= (\bs{\wh{\psi}}_1-\bs{\psi}_1^{*})\mathds{1}_{\{\wh{\alpha}_1\leq 2\}} + (\bs{\wh{\psi}}_1-\bs{\psi}_1^{*})\mathds{1}_{\{\wh{\alpha}_1>2\}} + \begin{pmatrix}p_{wx}\\-1 \end{pmatrix}(\wh{\alpha}_1-2)\mathds{1}_{\{\wh{\alpha}_1>2\}} \\
& = (\bs{\wh{\psi}}_1-\bs{\psi}_1^{*})+ \begin{pmatrix}p_{wx}\\-1 \end{pmatrix}(\wh{\alpha}_1-2)\mathds{1}_{\{\wh{\alpha}_1>2\}}.
\end{align*}
This implies
\begin{align*}
\frac{n}{m_n}(\bs{\wh{\psi}}^c_1-\bs{\psi}_1^{*}) &=
\frac{n}{m_n}\begin{pmatrix}(\log(\wh \theta_1)-\log(\theta_1^{*})) + p_{wx} (\wh{\alpha}_1-2)\mathds{1}_{\{\wh{\alpha}_1>2\}} \\ (\wh{\alpha}_1-2) - (\wh{\alpha}_1-2)\mathds{1}_{\{\wh{\alpha}_1>2\}}\end{pmatrix}.
\end{align*}
Let $f(x_1,x_2) = (x_1+p_{wx}x_2 \mathds{1}_{\{x_2>0\}},x_2-x_2\mathds{1}_{\{x_2>0\}})\trans$ and observe that $f(c(x_1,x_2))=cf(x_1,x_2)$ for $c \geq 0.$
For the asymptotic distribution we calculate, denoting by $f^{-1}$ the inverse image of $f$,
\begin{align*}
&\mathbb{P}\Big(\frac{n}{m_n}(\bs{\wh{\psi}}^c_1-\bs{\psi}_1^{*}) \in B\Big) \\
&= \mathbb{P}\Big(\frac{n}{m_n}f(\bs{\wh{\psi}}_1-\bs{\psi}_1^{*})\in B\Big) = \mathbb{P}\Big(f\big(\frac{n}{m_n}(\bs{\wh{\psi}}_1-\bs{\psi}_1^{*})\big)\in B\Big)\\
& = \mathbb{P}\Big(\frac{n}{m_n}(\bs{\wh{\psi}}_1-\bs{\psi}_1^{*}) \in f^{-1}(B\cap \{(b_1,b_2)\in\bbr^2: b_2<0\}) \cup f^{-1}(B\cap \{(b_1,0): b_1\in\bbr\}) \Big) \\
&=\mathbb{P}\Big(\frac{n}{m_n}(\bs{\wh{\psi}}_1-\bs{\psi}_1^{*}) \in [B\cap \{(b_1,b_2)\in\bbr^2: b_2<0\}]\\
&\quad\quad  \cup  [\{(b_1-p_{wx}b_2,b_2), b_2\geq 0, (b_1,0) \in B\}]\Big) \\
&\to \int_{B\cap \left\{(b_1,b_2)\in \bbr^2,b_2<0\right\}} \varphi_{\bs{0},\Pi^{(\textnormal{iso})}_3}(z_1,z_2)dz_1dz_2 \\
&\qquad \qquad+ \int_{0}^{\infty}\int_{\{b_1\in\bbr, (b_1,0)\in B\}} \varphi_{\bs{0},\Pi^{(\textnormal{iso})}_3}(z_1-p_{wx}z_2,z_2)dz_1dz_2, \quad \nto.
\end{align*}
Plugging in $B=(-\infty,p_1] \times (-\infty,p_2]$ and using the Fubini-Tonelli theorem yields \eqref{cdfZ2}.
\end{proof}

\brem
The asymptotic properties for the constrained estimate are derived as a special case of Corollary 1 in Andrews \cite{Andrews}, who shows asymptotic properties of parameter estimates in a very general setting, when the true parameter is on the boundary of the parameter space.
The asymptotic distribution of the estimates for $\alpha_1^{*} =2 $ results from the fact that approximately half of the estimates lie above the true value and are therefore equal to two.
\erem

\section{Analysis of radar rainfall measurements}\label{sec:Florida}

Finally, we apply the Brown-Resnick space-time process in \eqref{limitfield} and the WLSE to radar rainfall data provided by the Southwest Florida Water Management District (SWFWMD)\footnote{http://www.swfwmd.state.fl.us/}. 
Our objective is to quantify their extremal behaviour 
by using spatial and temporal block maxima {and fitting a Brown-Resnick space-time process to the block maxima.}

The data base consists of radar values in inches measured on a $120\times120$km region containing 3600 grid locations. 
We calculate the spatial and temporal maxima over subregions of size $10\times 10$km and over 24 subsequent measurements of the corresponding hourly accumulated time series in the wet season (June to September) from the years 1999-2004. 
In this way we obtain $12\times 12$ locations on $732$ days of space-time block maxima of rainfall observations.
Taking block maxima  yields a process  consistent with the assumption of a max-stable process, or at least to lie in the domain of attraction of a max-stable process.  Taking daily data, we can furthermore ignore diurnal patterns.

We denote the set of locations by
$\cals=\{(i_1,i_2), i_1,i_2 \in \{1,\ldots,12\}\}$ and the space-time observations by
$\{\eta(\bs{s},t), \bs{s}\in \cals, t\in\{t_1,\ldots,t_{732}\}\}$. 
This setup is also considered in \citet{buhl1}, Section~5, and \citet{steinkohlphd}, Chapter~7. 
To make the results obtained there comparable to ours, we use the the same preprocessing steps; for a precise description cf. \cite{buhl1}, Section~5.1.   

The data do not fail the max-stability check described in Section~5.2 of \cite{buhl1}, such that we assume that $\{\eta(\bs{s},t), \bs{s}\in \cals, t\in\{t_1,\ldots,t_{732}\}\}$ are realizations of a max-stable space-time process with standard unit Fr{\'e}chet margins.
Nevertheless, the assumption that the data are in fact an exact realization from a max-stable process is only approximate.  Hence there is no guarantee that composite likelihood estimation applied to these transformed data outperforms the semiparametric estimation introduced in Section~\ref{Desmodel}; cf. the results obtained in Section~10 of the supplement~\cite{BDKSsupp}
when data have observational noise. 
Here we use this data example to illustrate our new semiparametric methodology. 

We fit the Brown-Resnick process \eqref{limitfield} by estimating 
\eqref{delta} as follows:
\begin{enumerate}[leftmargin=*]
\item[(1)]
We estimate the parameters $\theta_1$, $\alpha_1$, $\theta_2$ and $\alpha_2$ by WLSE as described in Section~\ref{Desmodel} based on the sets $\calv=\{1,\sqrt{2},2,\sqrt{5},\sqrt{8},3,\sqrt{10},\sqrt{13},4,\sqrt{17}\}$ and $\calu=\{1,\ldots,10\}$. 
Permutation tests as described below and visualized in Figure~\ref{testpermutehour} indicate that these lags are sufficient to cover the relevant extremal dependence structure.
We choose as weights for the different spatial and temporal lags $v \in \calv$ and $u \in \calu$  the corresponding estimated averaged extremogram values; i.e., $w_v=T^{-1}\sum_{k=1}^T \wt{\chi}^{(t_k)}(v,0)$ and $w_u=n^{-2}\sum_{i=1}^{n^2} \wt{\chi}^{(\bs s_i)}(0,u)$, respectively.
 Since the so defined weights are random, what follows is conditional on the realizations of these weights.
 
 As the number of spatial points in the analysis is rather small, we cannot choose a very high empirical quantile $q$, since this would in turn result in a too small number of exceedances to get a reliable estimate of the extremogram. 
Hence, we choose $q$ as the empirical $60\%-$quantile, relying on the fact that the block maxima generate at least approximately a max-stable process and on the robustness of the estimates derived in Section~9 of the supplement~\cite{BDKSsupp}.

For the temporal estimation, we choose the empirical $90\%-$quantile for $q$.
\item[(2)]
We perform subsampling by constructing subsets of the observations and estimating on the subsets (see Section~7 of the supplement~\cite{BDKSsupp}) 
to construct $95\%$-confidence intervals for each parameter estimate. 
As subsample block sizes we choose $b_s=12$ (due to the small number of spatial locations)  for the spatial dimensions and $b_t=300$ for the temporal one. 
As overlap parameters we take $e_s=e_t=1$, which corresponds to the maximum degree of overlap.
\end{enumerate}

The results are shown in Figures~\ref{extrainhour}, \ref{testpermutehour} and Table~\ref{Semirainhour}.
Figure~\ref{rainhour} visualizes the daily rainfall maxima for the two grid locations $(1,1)$ and $(5,6)$.
The semiparametric estimates together with subsampling confidence intervals are given in Table~\ref{Semirainhour}. 

For comparison we present the parameter estimates from the pairwise likelihood estimation (for details see \citet{Steinkohl} and \cite{steinkohlphd}, Chapter~7), where we obtained $\wt{\theta}_1 = 0.3485$, $\wt{\alpha}_1 = 0.8858$, $\wt{\theta}_2=2.4190 $ and $\wt{\alpha}_2=0.1973$. 
From Table~\ref{Semirainhour} we recognize that these estimates are close to the semiparametric estimates and even lie in most cases in the $95\%$-subsampling confidence intervals.

Figure~\ref{extrainhour} shows the temporal and spatial mean of empirical temporal (left) and spatial (right) extremograms as described in \eqref{extrspacemean} and \eqref{extrtimemean} together with 95\% subsampling confidence intervals. 
We perform a permutation test to test the presence of extremal independence. 
To this end we randomly permute the space-time data and calculate empirical extremograms as before. 
More precisely, we compute the empirical temporal extremogram as before and repeat the procedure 1000 times.
From the resulting temporal extremogram sample we determine  nonparametric $97.5\%$ and $2.5\%$ empirical quantiles, which gives a $95\%-$confidence region for temporal extremal independence. 
The analogue procedure is performed for the spatial extremogram. 

The results are shown in Figure~\ref{testpermutehour} together with the extremogram fit based on the WLSE. The plots indicate that for time lags larger than~3 there is no temporal extremal dependence,  and for spatial lags larger than~4 no spatial extremal dependence.

\begin{center}
\captionsetup{type=table}
\begin{tabular}{c|c|c||c|c}
\hline
Estimate & $\wh{\theta}_1$ &  0.3611  & $\wh{\alpha}_1$ & 0.9876  \\
Subsampling-CI & & [0.3472,0.3755] & & [0.9482,1.0267]\\
\hline
Estimate & $\wh{\theta}_2$ & 2.3650 & $\wh{\alpha}_2$ & 0.0818 \\
Subsampling-CI && [1.9110,2.7381] && [0.0000,0.2680]\\
\hline
\end{tabular}
\captionof{table}{Semiparametric estimates for the spatial parameters $\theta_1$ and $\alpha_1$ and the temporal parameters $\theta_2$ and $\alpha_2$ of the Brown-Resnick process in \eqref{limitfield} together with 95\% subsampling confidence intervals.}
\label{Semirainhour}
\end{center}

\begin{figure}[h] 
\centering
\hspace*{-0.3cm}
\subfloat[]{\includegraphics[scale=0.3]{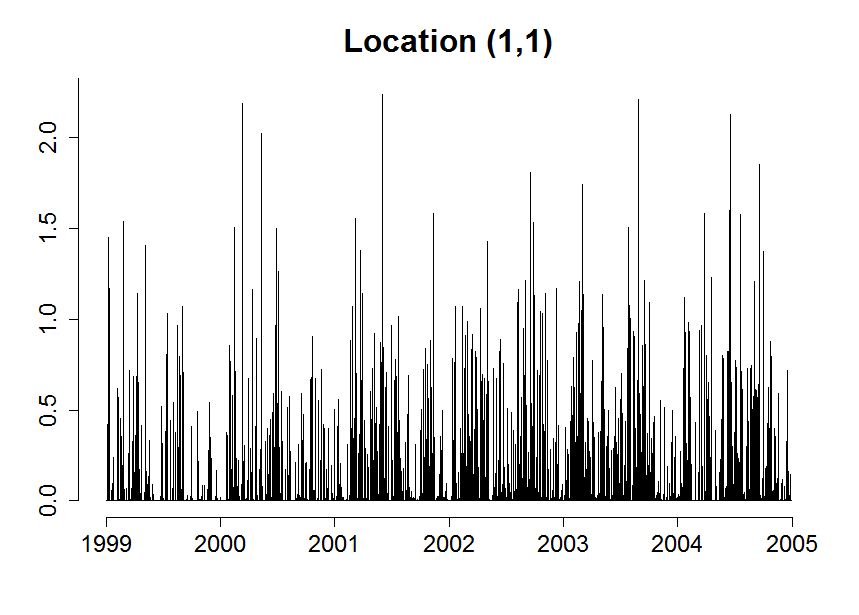}}
\subfloat[]{\includegraphics[scale=0.3]{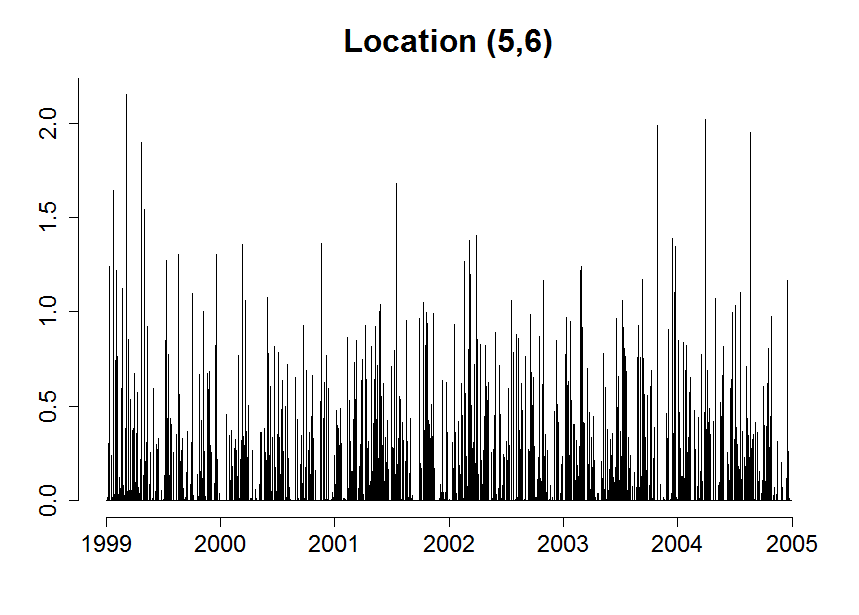}}\\[-7mm]
\caption{Daily rainfall maxima over hourly accumulated measurements from 1999-2004 in inches for two grid locations.}
\label{rainhour}
\end{figure}

\begin{figure}[h] 
\centering
\subfloat[]{\includegraphics[scale=0.22]{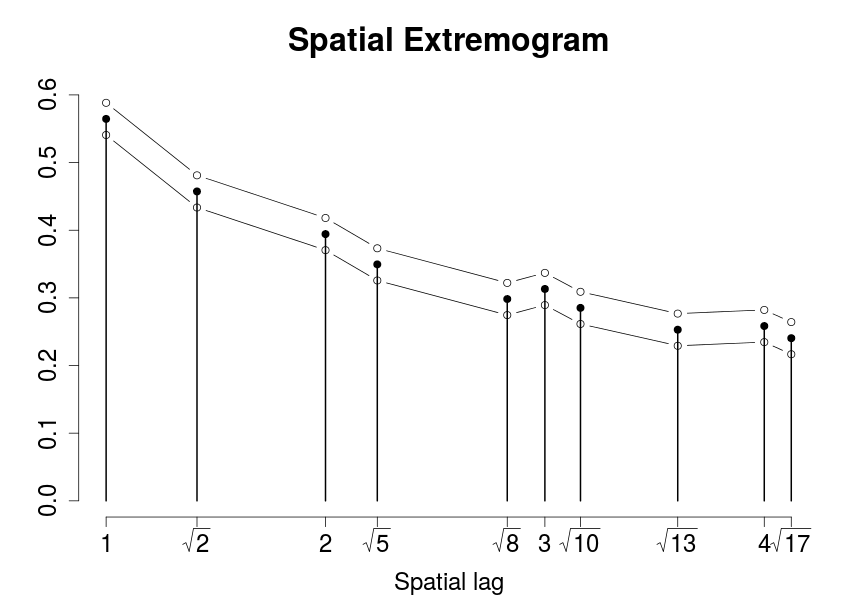}}
\subfloat[]{\includegraphics[scale=0.22]{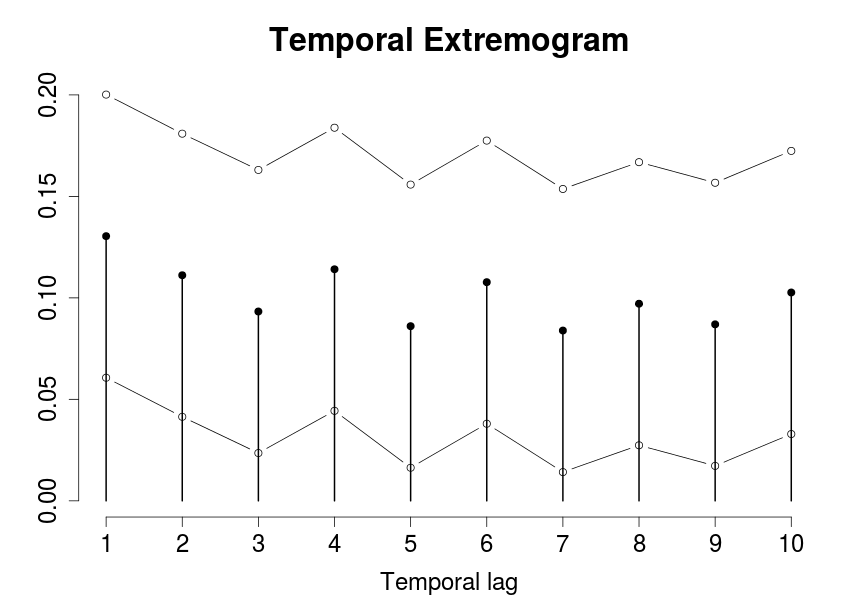}}\\[-5mm]
\caption{Empirical spatial (left) and temporal (right) extremogram based on spatial and temporal means for the space-time observations as given in \eqref{extrspacemean} and \eqref{extrtimemean} together with $95\%-$subsampling confidence intervals.}
\label{extrainhour}
\end{figure}

\begin{figure}[h] 
\centering
\subfloat[]{\includegraphics[scale=0.22]{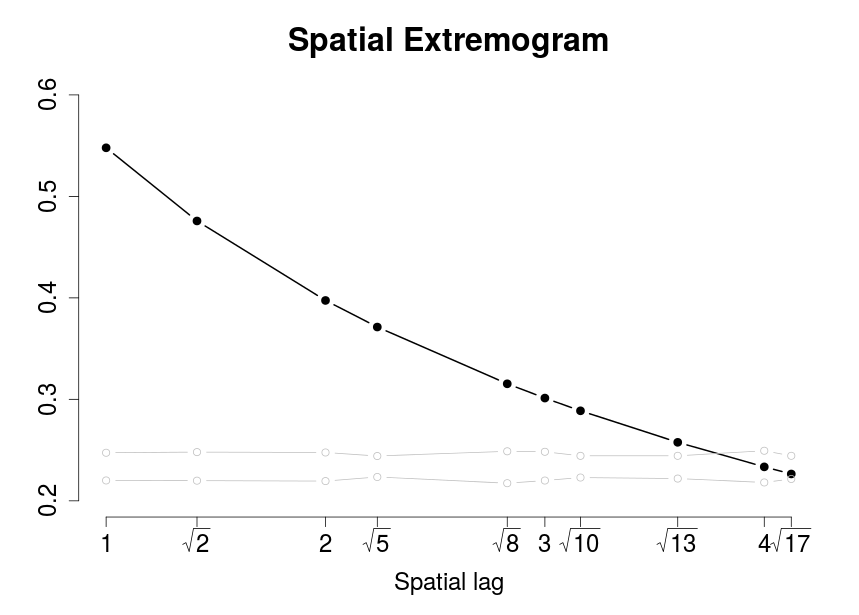}}
\subfloat[]{\includegraphics[scale=0.22]{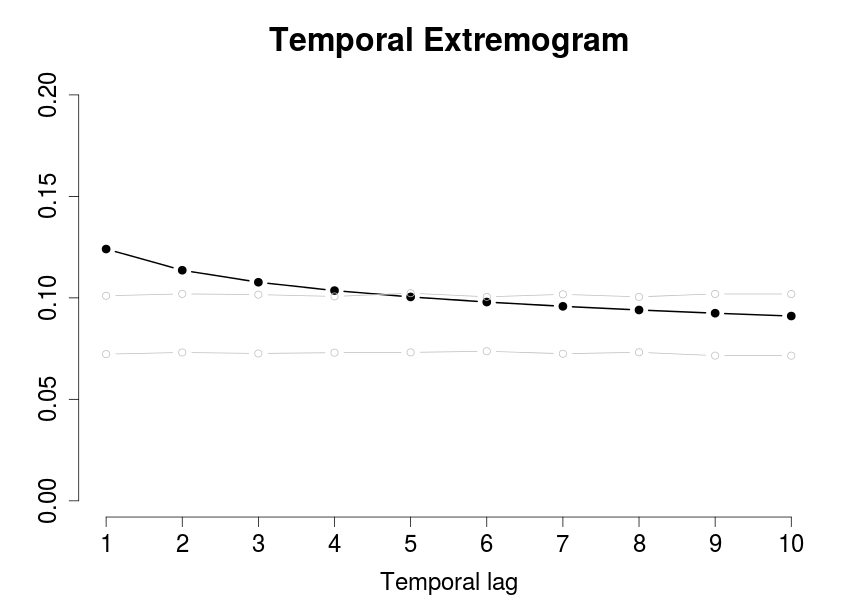}}
\caption{Permutation test for extremal independence: The gray lines show the $97.5\%-$ and $2.5\%-$quantiles of the extremogram estimates for 1000 random space-time permutations for the empirical spatial (left) and the temporal (right) extremogram estimates.}
\label{testpermutehour}
\end{figure}

\section{{Conclusions and Outlook}}

For isotropic strictly stationary regularly-varying space-time processes with additively separable dependence structure we have suggested a new semiparametric estimation method.
The method works remarkably well and produces reliable estimates that are much faster to compute than composite likelihood estimates.
These estimates can also be useful as initial values for a composite likelihood optimization.

Meanwhile, we have generalized the semiparametric method based on extremogram estimation.
The paper \citet{buhl6} is dedicated to the three topics:
\begin{enumerate}
\item
Generalize the dependence function \eqref{delta} to anisotropic and appropriate mixed models and get rid of the assumption of separability.
\item
Generalize the sampling scheme to a fixed (small) number of spatial observations and limit results for the number of temporal observations to tend to infinity.
\item
Generalize the least squares estimation to estimate spatial and temporal parameters simultaneously, also in the situation described in 2.
\end{enumerate}

Another question concerns the optimal choice of the weight matrix $W$, such that the asymptotic variance of the WLSE is minimal. Some ideas can be found in the geostatistics literature in the context of LSE of the variogram parameters; e.g. in \citet{Lahiri2}, Section~4. 
Here the optimal choice of the weight matrix is given by the inverse of the asymptotic covariance matrix of the nonparametric estimates; i.e., of $\big(T^{-1}\sum_{k=1}^T\wt{\chi}^{{(t_k})}(v,0)\big)\trans_{v \in \calv}$ in the spatial case and of $\big(n^{-2}\sum_{i=1}^{n^2} \wt{\chi}^{(\bs{s}_i)}(0,u) - \chi(0,u)\big)\trans_{u \in\calu}$ in the temporal case. 
In our case, however, this involves the matrices $\Pi_2^{(\text{iso})}$ and $\Pi_2^{(\text{time})}$ (given in equations (4.3)-(4.6) of \citet{buhl3}), whose components are infinite sums.

\section*{Acknowledgements}
The three last authors gratefully acknowledge support by the TUM Institute for Advanced Study.
Furthermore, all authors would like to thank Chin Man Mok and Daniel Straub for their help in finding the data and discussions regarding the results. 
We further acknowledge the Southwest Florida Water Management District for providing the data.
We thank Dennis Leber for comparing simulation methods of the BR process, and Ton Dieker and Marius Hofert for improving the simulation code.
SB and CS thank the International Graduate School of Science and Engineering (IGSSE) of the Technical University of Munich for support. The research of RD was supported in part by the National Science Foundation grant DMS-1107031, and ARO MURI grant W11NF-12-1-0385.

\vspace*{-0.35cm}

\begin{supplement}[id=suppA]
\vspace*{-0.25cm}
\stitle{Supplement to ``Semiparametric estimation for isotropic max-stable space-time processes''}
\slink[doi]{COMPLETED BY THE TYPESETTER}
\sdatatype{BDKSsupp.pdf}
\sdescription{We provide additional results on $\alpha-$mixing, subsampling for confidence regions, and a simulation study supporting the theoretical results. Our method is extended to max-stable date with observational noise and applied to both exact realizations of the Brown-Resnick process and to realizations with observational noise, thus verifying the robustness of our approach.}
\end{supplement}

\vspace*{-0.35cm}

\bibliographystyle{plainnat}
\bibliography{bibtex_spacetime}

\end{document}


\begin{frontmatter}
\title{Supplement to the paper ``Semiparametric estimation for isotropic max-stable space-time processes''}
\runtitle{Semiparametric estimation for isotropic max-stable space-time processes: Supplement}

\begin{aug}
\author{\fnms{Sven} \snm{Buhl}\thanksref{a,e1}\ead[label=e1,mark]{sven.buhl@tum.de}}
\author{\fnms{Richard A.} \snm{Davis}\thanksref{b,e2}\ead[label=e2,mark]{rdavis@stat.columbia.edu}%
\ead[label=u2,url]{http://www.stat.columbia.edu/\textasciitilde rdavis}}
\author{\fnms{Claudia} \snm{Kl{\"u}ppelberg}\thanksref{a,e3}\ead[label=e3,mark]{cklu@tum.de}}
\and
\author{\fnms{Christina} \snm{Steinkohl}\thanksref{a,e4}%
\ead[label=e4,mark]{christina.steinkohl@gmail.com}%
\ead[label=u1,url]{http://www.statistics.ma.tum.de}}

\address[a]{Center for Mathematical Sciences and TUM Institute of Advanced Study, Technische Universit{\"a}t M{\"u}nchen, Boltzmannstr. 3, 85748 Garching, Germany.
\printead{e1,e3,e4},
\printead{u1}}

\address[b]{Department of Statistics, Columbia University, 1255 Amsterdam Avenue, New York, NY 10027, USA.
\printead{e2},
\printead{u2}}

\runauthor{S. Buhl et al.}

\begin{keyword}
\kwd{Brown-Resnick process}
\kwd{extremogram}
\kwd{max-stable process}
\kwd{regular variation}
\kwd{semiparametric estimation}
\kwd{space-time process}
\kwd{subsampling}
\kwd{mixing}
\end{keyword}

\end{aug}

\end{frontmatter}


\noindent
This supplementary material provides additional definitions  and results to the paper \cite{Steinkohl3}, where the setting, notation,  equation reference numbers are retained from that paper.
Section~\ref{theoryspatial1} defines $\alpha$-mixing and states results for Brown-Resnick space-time processes used in the proof of Theorem~4.3 and throughout this supplement. 
Within the particular space-time setting considered in the paper, we provide insight into subsampling to obtain adequate confidence regions for the true parameters in Section~\ref{Sec:subsampling}. 
Section~\ref{appendB} states and proves an important result related to the extremogram for the Brown-Resnick process observed with noise.  This result provides the theoretical justification for the robustness of WLSE for space-time data based on small departures from the Brown-Resnick model.  
The simulation study presented in Section~\ref{Simulation} confirms these results and other findings of the paper.

\setcounter{section}{6}

\section{{Subsampling for confidence regions}} \label{Sec:subsampling}

As in Sections~2 and 3 of the paper \cite{Steinkohl3} we consider a strictly stationary regularly varying process in space and time $\{\eta(\bs s,t): \bs s \in \mathbb{R}^{d-1}, t \in [0,\infty)\}$ for $d\in\N$.
We assume additively separable parametric models for its extremogram $\{\chi(v,u), v,u \geq 0\}$, such that setting either the temporal lag $u$ or the spatial lag $v$ equal to $0$, it can be linearly parametrized as 
\begin{align*}
T_1(\chi(v,0)) &=T_1(\chi(v,0;C_1,\alpha_1))=C_1+\alpha_1 v, \quad (C_1,\alpha_1) \in \Theta_{\mathcal{S}}, \quad v \geq 0, \\
 T_2(\chi(0,u)) &=T_2(\chi(0,u;C_2,\alpha_2))=C_2+\alpha_2 u, \quad (C_2,\alpha_2) \in \Theta_{\mathcal{T}} \quad u\geq 0,
\end{align*}
  where $T_1$ and $T_2$ are known suitable strictly monotonous continuously differentiable transformations and the parameters $(C_1,\alpha_1)$ and $(C_2,\alpha_2)$ lie in appropriate parameter spaces $\Theta_{\mathcal{S}}$ and $\Theta_{\mathcal{T}}$.

The estimation method  described in Sections~2 and 3, which is based on the (averaged) empirical extremogram computed by means of space-time observations on the grid $\cals_n \times \{t_1,\ldots,t_T\}$ defined in Condition~2.3, 
yields a consistent and asymptotically normal estimator $\wh{\bs\psi}_1=(\wh C_1,\wh\alpha_1)\trans$ of the true parameter vector $\bs\psi_1^\star=(C_1^\star,\alpha_1^\star)\trans$. The rate of convergence is given by $\tau_n:=n/m_n$, where $m_n$ is an appropriately chosen scaling sequence.

Due to the complicated forms of the covariance matrix of the normal limit distribution (cf. Theorem~3.1 
and Theorem~3.19 of \cite{buhlphd}) we use resampling methods to construct asymptotic confidence regions for $\bs\psi_1^\star$. One appealing method is subsampling (see \citet{Politis4}, Chapter~5), since it works under weak regularity conditions and produces asymptotically correct coverage.
The central assumption is the existence of a continuous weak limit law, which is guaranteed by Theorem~3.1. 
Again we only consider the spatial case, the temporal case is described (again for the example of the Brown-Resnick process) in Section~3.4.2 of \cite{buhlphd}.

We have applied subsampling successfully already for confidence bounds of pairwise likelihood estimates of the max-stable space-time Brown-Resnick process in \citet{buhl1}, Section~4. 
The procedure is as follows: {understanding inequalities between vectors componentwise,} we choose block lengths $\bs{b}=(b_s,b_s,T)$ with $(1,1) \leq (b_s,b_s) \leq (n,n)$ and the degree of overlap $\bs{e}=(e_s,e_s,T)$ with $(1,1)\leq (e_s,e_s) \leq (b_s,b_s)$, where $\bs e=(1,1,T)$ corresponds to maximum overlap and $\bs e=\bs b$ to no overlap. 
The blocks are indexed by $\bs i=(i_1,i_2) \in \mathbb{N}^2$ with $i_j \leq q_s$ for $q_s:=\lfloor \frac{n-b_s}{e_s}\rfloor + 1$ and $j=1,2$. 
This results in a total number of $q= q_s^2$ blocks, which we summarize in the sets
\begin{align*} 
E_{\bs{i},\bs{b},\bs{e}}=\big\{(s_1, s_2) \in \cals_n: &(i_j-1)e_s+1 \leq s_j \leq (i_j-1)e_s+b_s\mbox{ for } j=1,2\big\}\times \{t_1,\ldots,t_T\}.
\end{align*}
We estimate the parameters based on the observations in each block as described in the previous sections. 
This yields 
different estimates, which we denote by
$\wh{\bs\psi}_{1,\bs{i}}$.


Theorem~\ref{subsampling_conf} below provides a basis for constructing asymptotically valid confidence intervals for the true parameters $C_1^\star$ and $\alpha_1^\star$. We define $\tau_{b_s}=b_s/m_{b_s}$ as the analogue of $\tau_n=n/m_n$.

\begin{theorem}\label{subsampling_conf}
Assume that the conditions of Theorem~3.4 hold, and 
\begin{enumerate}[label=(\roman*)]  
\item $b_s \rightarrow \infty$ such that
$b_s=o({n})$ and $\tau_{b_s}/\tau_n \rightarrow 0$ as $\nto$
\item $\bs{e}$ does not depend on $n$,
\item the $\alpha$-mixing coefficients $\alpha_{k,\ell}(\cdot)$ defined in \eqref{alphaBolt} satisfy
\begin{align*}
\frac{1}{n^2} \sum_{r=1}^n r \alpha_{b,b}(r) \rightarrow 0, \quad  \nto,
\end{align*} 
where $b:=b_s^2$.
\end{enumerate}
Define the empirical distribution function {$L_{b_s,s}$}
\begin{align*}
L_{b_s,s}(x):=\frac{1}{q} \sum\limits_{i_1=1}^{q_s} \sum\limits_{i_2=1}^{q_s} \mathds{1}_{\left\{\tau_{b_s}\left\|\wh{\bs\psi}_{1,\bs{i}}-\wh{\bs\psi}_{1}\right\| \leq x\right\}},\quad x\in\R,
\end{align*}
and the empirical quantile function
\begin{align*} 
c_{b_s,s}(1-\alpha):=\inf\left\{x\in\R :L_{b_s,s}(x) \geq 1-\alpha\right\},\quad \alpha\in(0,1).
\end{align*} 
Then \begin{align}
\mathbb{P}\left(\tau_n\|\wh{\bs\psi}_{1}-{\bs\psi}_{1}^\star\| \leq c_{b_s,s}(1-\alpha) \right) \rightarrow 1-\alpha,\quad \nto. \label{subs_asy_ci_space}
\end{align}
\end{theorem}

\bproof
We apply Corollary~5.3.3 of \citet{Politis4}. Their main  Assumption~5.3.3 is the existence of a continuous weak limit distribution of $\tau_n \|\wh{\bs\psi}_{1}-{\bs\psi}_{1}^\star\|$, which holds by Theorem~3.1. 
The remaining assumptions (i)-(iii) are also presumed in \citet{Politis4}. 
\eproof 

As a consequence of equation~\eqref{subs_asy_ci_space}, for $n$ large enough, an approximate $(1-\alpha)$-confidence region for the true parameter vector $\bs\psi_1^\star=(C_1^\star,\alpha_1^\star)$ is given by 
\begin{align}
\{\bs \psi \in \Theta_{\cals}: \|\bs \psi - \wh{\bs \psi}_1 \| \leq c_{b_s,s}(1-\alpha)/\tau_n\}, \label{Subsampl_CI}
\end{align}
where $\Theta_{\cals}$ denotes as before the parameter space.

\brem
Consider the special case of the Brown-Resnick process~(4.1) 
with dependence function $\delta$ given in~(4.2) as
\begin{align}\label{delta}
\delta(v,u) = 2\theta_1 v^{\alpha_1}+2\theta_2u^{\alpha_2}, \quad v,u \geq 0, \quad \theta_1,\theta_2>0,\quad 0<\alpha_1,\alpha_2\leq 2,
\end{align} 
whose parameters $\alpha_1$ and $\alpha_2$ have bounded support. 
Recall from Section~4 that to put this in the context of this section, we set 
$$C_1=\log(\theta_1)\quad\mbox{and}\quad C_2=\log(\theta_2)$$ 
and choose the transformations $T_1$ and $T_2$ defined by 
$$T_1(\chi(v,0))=2\log\big(\Phi^{-1}\big(1-\frac{1}{2}\chi(v,0)\big)\big)\quad\mbox{and}\quad T_2(\chi(0,u))=2\log\big(\Phi^{-1}\big(1-\frac{1}{2}\chi(0,u)\big)\big).$$ 
In the following, we focus on the spatial parameters. The parameter space is given by $\Theta_{\cals}=\mathbb{R} \times (0,2]$. 
Since the parameter space for $\alpha$ is bounded, we use the constrained estimate $\wh{\bs\psi}_{1}^c$ defined before Theorem~4.6. We denote the true parameter by ${\bs\psi}_{1}^\star=(\log(\theta_1^\star),\alpha_1^\star)$. The assumptions of Theorem~\ref{subsampling_conf} for subsampling are satisfied in this setting. Particularly important is the existence of a continuous weak limit distribution of $\tau_n \|\wh{\bs\psi}_{1}^c-{\bs\psi}_{1}^\star\|$, where the scaling sequence is given by $\tau_n=n/m_n$.
By Theorem~4.6,
the continuous mapping theorem and the Fubini-Tonelli theorem we have for $\ga \geq 0$, as $\nto$,
\begin{align*}
\mathbb{P}(\tau_n \|\wh{\bs\psi}_{1}^c-{\bs\psi}_{1}^\star\| \leq \ga) \rightarrow \mathbb{P}(\|\bs Z_1 \| \leq \ga)&=\mathbb{P}(\bs Z_1 \in B(\bs 0,\ga))
= 2\int\limits_{-\ga}^\ga \int\limits_{0}^{\sqrt{\ga^2-r^2}} \varphi_{\bs 0, \Pi_3^{(\textnormal{iso})}}(r,s)dsdr
\end{align*}
if $\alpha_1^\star<2$. For  $\alpha_1^\star=2$ we obtain
\begin{align*}
&\mathbb{P}(\tau_n  \|\wh{\bs\psi}_{1}^c-{\bs\psi}_{1}^\star\| \leq \ga) 
\, \to\, \mathbb{P}(\|\bs Z_2 \| \leq \ga) =\mathbb{P}(\bs Z_2 \in B(\bs 0,\ga))\\
= &\int\limits_{-\ga}^\ga\int\limits_{-\sqrt{\ga^2-r^2}}^0  \varphi_{\bs 0, \Pi^{(\textnormal{iso})}_3}(r,s)dsdr \\
& + \int\limits_{-\ga}^\ga  \int\limits_0^{\infty} \varphi_{\bs 0, \Pi^{(\textnormal{iso})}_3}(r-\frac1{\sum_{v \in \calv} w_v}\sum_{v \in \calv} (w_vx_v)s,s) dsdr \\
= & \int\limits_{-\ga}^\ga \Bigg\{ \int\limits_{-\sqrt{\ga^2-r^2}}^0  \varphi_{\bs 0, \Pi^{(\textnormal{iso})}_3}(r,s) ds + \int\limits_0^{\infty} \varphi_{\bs 0, \Pi^{(\textnormal{iso})}_3}\Big(r-\frac1{\sum_{v \in \calv} w_v}\sum_{v \in \calv} (w_vx_v)s,s\Big)ds\Bigg\}dr .
\end{align*}
In particular, the limiting  distribution function of the scaled norm $\tau_n  \|\wh{\bs\psi}_{1}^c-{\bs\psi}_{1}^\star\|$ is continuous in $\ga$ both for $\alpha_1^\star<2$ and $\alpha_1^\star=2$.

The required condition (iii) on the $\alpha$-mixing coefficients is satisfied, similarly as in the proof of Theorem~3.1, 
by equation~\eqref{mixingspace} below.

As in~\eqref{Subsampl_CI}, for $n$ large enough, an approximate $(1-\alpha)$-confidence region for the true parameter vector $\bs\psi_1^\star=(\log(\theta_1^\star),\alpha_1^\star)$ is given by 
$$\{\bs \psi \in \mathbb{R} \times (0,2]: \|\bs \psi - \wh{\bs \psi}_1^c \| \leq c_{b_s,s}(1-\alpha)/\tau_n\}.$$
The one-dimensional approximate $(1-\alpha)$-confidence intervals for the parameters $\theta_1^\star$ and $\alpha_1^\star$ can be read off from this as
\begin{align*}
\Big[\wh\theta_1^c \exp\Big\{-\frac{c_{b_s,s}(1-\alpha)}{\tau_n}\Big\},\wh\theta_1^c \exp\Big\{\frac{c_{b_s,s}(1-\alpha)}{\tau_n}\Big\}\Big] \text{ and}\\
\Big[\wh\alpha_1^c -\frac{c_{b_s,s}(1-\alpha)}{\tau_n},\wh\alpha_1^c +\frac{c_{b_s,s}(1-\alpha)}{\tau_n}\Big] \cap (0,2].
\end{align*}
\erem

\section{$\alpha$-mixing of the Brown-Resnick space-time process}\label{theoryspatial1}

We define $\alpha$-mixing for spatial processes; see e.g.  Doukhan \cite{Doukhan} or Bolthausen \cite{Bolthausen}.

\begin{definition}\label{mixing}
For $d \in \bbn$, consider a strictly stationary process $\left\{X(\bs{s}): \bs{s}\in \bbr^d\right\}$ and let $d(\cdot,\cdot)$ be some metric induced by a norm on $\mathbb{R}^d$.
For $\Lambda_1, \Lambda_2 \subset \mathbb{Z}^d$ set 
\begin{align*}
d(\Lambda_1,\Lambda_2) := \inf\left\{d(\bs{s}_1, \bs{s}_2): \ \bs{s}_1 \in \Lambda_1, \bs{s}_2 \in \Lambda_2\right\}.
\end{align*}
Further, for $i=1,2$ denote by $\mathcal{F}_{\Lambda_i}= \sigma\left\{X(\bs{s}), \bs{s}\in \Lambda_i\right\}$ the $\sigma$-algebra generated by $\{X(\bs{s}): \ \bs{s}\in \Lambda_i\}$.
\begin{enumerate}[label=(\roman*)]
\item 
The {\em $\alpha$-mixing coefficients} are defined for $k,l \in \bbn \cup \{\infty\}$ and $r \geq 0$ by
\begin{equation}\label{alphaBolt}
\hspace*{-0.7cm}
\alpha_{k,l}(r) = \sup\left\{\left|\mathbb{P}(A_1\cap A_2) - \mathbb{P}(A_1)\mathbb{P}(A_2)\right|: \ A_i \in \mathcal{F}_{\Lambda_i}, |\Lambda_1|\leq k, |\Lambda_2|\leq l, d(\Lambda_1,\Lambda_2) \geq r\right\},
\end{equation}
where $|\Lambda_i|$ is the cardinality of the set $\Lambda_i$ for $i=1,2$.
\item 
The random field is called {\em $\alpha$-mixing}, if $\alpha_{k,l}(r) \to 0$ as $r\to\infty$ for all $k,l\in \bbn$.
\end{enumerate}
\end{definition}

For a strictly stationary max-stable process Corollary~2.2 of Dombry and Eyi-Minko~\cite{Dombry} shows that the $\alpha$-mixing coefficients can be related to the extremogram of the max-stable process. Equations~\eqref{mixingspace} and~\eqref{mixingtime} follow as in the proofs of Proposition~1 and~2 of \citet{buhl1}. 

\begin{proposition}
For all fixed time points $t\in \bbn$ the random field $\left\{\eta(\bs{s},t), \bs{s}\in \bbz^2\right\}$ defined by (4.1) 
is $\alpha$-mixing with mixing coefficients satisfying
\begin{equation}\label{mixingspace}
\alpha_{k,l}(r) \leq 2kl\sup_{s\geq r}\chi(s,0) \leq 4kl e^{-\theta_1 r^{\alpha_1}/2},  \quad k,l \in \bbn, \, r\geq 0.
\end{equation}
For all fixed locations $\bs{s}\in \mathbb{R}^2$ the time series $\left\{\eta(\bs{s},t): t\in [0,\infty)\right\}$ in (4.1) 
is $\alpha$-mixing with mixing coefficients satisfying for some constant $c>0$
\begin{equation}\label{mixingtime}
\alpha(r):=\alpha_{\infty,\infty}(r) \leq c\sum_{u= r}^{\infty} u e^{-\theta_2u^{\alpha_2}/2 }, \quad r\geq 0.
\end{equation}
\end{proposition}

We will make frequent use of the following simple result.

\begin{lemma}\label{remarkalpha}
Let $z\in\N$.
For $(\theta,\alpha) \in \{ (\theta_1,\alpha_1),(\theta_2,\alpha_2)\}$ and sufficiently large $r$ such that the sequence $u^z e^{-\theta u^{\alpha}/2}$ is  decreasing for $u \geq r$, we have
$$g_z(r) = \sum_{u=r}^{\infty}u^z e^{-\theta u^{\alpha}/2}
\leq ce^{-\theta r^{\alpha}/2} r^{z+1},\quad r\in\N.$$
for some constant $c=c(z)>0$.
\end{lemma}

\begin{proof}
An integral bound together with a change of variables yields
\begin{align*}
g_z(r) &= r^ze^{-\theta r^{\alpha}/2}+\sum_{u=r+1}^{\infty}u^ze^{-\theta u^{\alpha}/2} \leq r^ze^{-\theta r^{\alpha}/2}+ \int_{r}^{\infty}u^ze^{-\theta u^{\alpha}/2} du\\
 &= r^ze^{-\theta r^{\alpha}/2}+ \left(\frac{2}{\theta}\right)^{(z+1)/\alpha }\frac{1}{\alpha}\int_{\theta r^{\alpha}/2}^{\infty}t^{(z+1)/\alpha-1}e^{-t}dt \\
& \leq r^ze^{-\theta r^{\alpha}/2}+ c_1\Gamma\left(\left\lceil (z+1)/\alpha\right\rceil,\theta r^{\alpha}/2\right) \\
 & = r^ze^{-\theta r^{\alpha}/2}+ c_1\left(\left\lceil (z+1)/\alpha\right\rceil-1\right)!\ e^{-\theta r^{\alpha}/2}\sum_{k=0}^{\left\lceil (z+1)/\alpha\right\rceil-1}\frac{\theta^k r^{\alpha k}}{2^k k!} \\
&\leq  r^ze^{-\theta r^{\alpha}/2}+c_2e^{-\theta r^{\alpha}/2} r^{\alpha (\lceil (z+1)/\alpha\rceil-1)} \\
&\leq ce^{-\theta r^{\alpha}/2}r^{z+1},
\end{align*}
where $\Gamma(s,r) = \int_r^{\infty}t^{s-1}e^{-t} dt = (s-1)!e^{-r}\sum_{k=0}^{s-1}r^k/k!$, $s \in \mathbb{N}$, is the incomplete gamma function and $c_1,c>0$ are constants depending on $z$. 
\end{proof}

\section{Robustness of the bias corrected estimator} \label{appendB}

As shown in the simulation study in Section~\ref{Simulation} below, the WLSEs are robust with respect to small deviations from the model assumptions. Specifically, if one adds measurement noise to the underlying Brown-Resnick process, the WLSEs still perform well.  This is in contrast to the composite likelihood procedure for which the estimates become biased.  The theoretical foundation for the good performance of the WLSEs is given in Lemma~\ref{le3.2ii}, which is the analogue of Lemma~4.2 for the Brown-Resnick process without noise.
\ble\label{le3.2ii}
Let $\{Z(\bs s,t): (\bs s,t) \in \bbr^2\times [0,\infty)\}$ be i.i.d. random variables which are independent of the space-time process $\{\eta(\bs s,t): (\bs s,t) \in \bbr^2\times [0,\infty)\}$.
Assume the moment condition $E|Z(\bs 0,0)|^{2+\epsilon} < \infty$ for some $\epsilon>0$.
Then for every sequence $a_n \to \infty$ we have for fixed $t \in [0,\infty),$
\begin{align*}
&\frac{\mathbb{P}(\eta(\bs{s},t)+Z(\bs{s},t)>a_n,\eta(\bs{s}+\bs{h},t)+Z(\bs{s}+\bs{h},t)>a_n)}{\mathbb{P}(\eta(\bs{s},t)+Z(\bs{s},t)>a_n)} \\
 =&\chi(\|\bs h \|,0)+ \Big[\frac{1}{2a_n}\big(\chi(\|\bs h \|,0)-2\big)\big(\chi(\|\bs h \|,0)-1\big)\Big](1+o(1)).
\end{align*}
\ele

\begin{proof}
For notational simplicity,  write $\eta_1=\eta(\bs s,t)$, $\eta_2=\eta_(\bs s+\bs h,t),Z_1=Z(\bs s,t), Z_2=Z(\bs s+\bs h,t)$, and $\chi=\chi(\|\bs h\|)$.  Here we assume that $\bs h\neq  \bs 0$, since otherwise, $\chi=1$ and thus 
$$
0=(\chi-2)(\chi-1)=\Bigg(\frac{\mathbb{P}(\eta_1+Z_1 >  a_n)}{\mathbb{P}(\eta_1+Z_1 > a_n)} - \chi\Bigg)\,.
$$
Using (4.3) 
and the independence of  $(\eta_1,\eta_2)$ with $(Z_1,Z_2)$, we have
\beao
&&\mathbb{P}(\eta_1+Z_1>a_n,\eta_2+Z_2>a_n)\\
&=&  1- \mathbb{P}(\eta_1+Z_1\le a_n)-\mathbb{P}(\eta_2+Z_2\le a_n)
 +\mathbb{P}(\eta_1+Z_1\le a_n,\eta_2+Z_2\le a_n)\\
&=&2\mathbb{P}(\eta_1+Z_1> a_n) -(1-\mathbb{P}(\eta_1+Z_1\le a_n,\eta_2+Z_2\le a_n))\\
&=&2\mathbb{P}(\eta_1+Z_1> a_n)-\E\Big[1-\exp\Big\{-\frac{1}{(a_n-Z_1)_+}\Phi_{1,2}^n-\frac{1}{(a_n-Z_2)_+}\Phi_{2,1}^n\Big\}\Big]\,,
\eeao
where $x_+=\max\{0,x\}$, $\Phi_{i,j}^n=\Phi(c \log((a_n-Z_j)_+/(a_n-Z_i)_+) +c/2)$, and $c=\sqrt{2\delta(\|h\|)}$. Set $\Phi^*=\lim_{n\to\infty}\Phi_{i,j}^n=\Phi(c/2)~a.s.$ 

Take $b_n=a_n^{1-\epsilon/4}$, where $\epsilon \in (0,1)$ is specified in the statement of the lemma.  Then it follows that $b_n/a_n\to 0$, $a_n^2/b_n^{2+\epsilon}=a_n^{-(2-\epsilon)\epsilon/4} \to 0$ and hence
\beao 
\mathbb{P}(|Z|\ge b_n)\le \frac{\E|Z|^{2+\epsilon}}{b_n^{2+\epsilon}}=o(a_n^{-2})\,.
\eeao
Writing $\E_n$ for expectation relative to the restriction on the event $\{|Z_1|\vee |Z_2|\le b_n\}$, we have for any bounded sequence of random variables $Y_n$ that $a_n^2(\E Y_n-\E_nY_n)\to 0$. Hence, using a Taylor series approximation, we obtain
\beam\label{etaplusz}
\mathbb{P}(\eta_1+Z_1>a_n)&=&\E\Big[1-\exp\Big\{-\frac{1}{(a_n-Z_1)_+}\Big\}\Big] \nonumber\\
&=&\E_n\frac{1}{(a_n-Z_1)_+}-\E_n\frac{1}{2(a_n-Z_2)^2} +o(a_n^{-2})\nonumber\\
&=&\E_n\frac{1}{(a_n-Z_1)_+}-a_n^{-2}+o(a_n^{-2})
\eeam
and
\beao
I_1 &:=&\E\Big[1-\exp\Big\{-\frac{1}{(a_n-Z_1)_+}\Phi_{1,2}^n-\frac{1}{(a_n-Z_2)_+}\Phi_{2,1}^n\Big\}\Big]\\
&=&\E_n\Big[\frac{1}{(a_n-Z_1)_+}\Phi_{1,2}^n+\frac{1}{(a_n-Z_2)_+}\Phi_{2,1}^n\Big]\\
&& \quad\quad\quad\quad -\frac12
\E_n\Big[\frac{1}{(a_n-Z_1)_+}\Phi_{1,2}^n+\frac{1}{(a_n-Z_2)_+}\Phi_{2,1}^n\Big]^2 +o(a_n^{-2})\\
&=&2\E_n\Big[\frac{1}{(a_n-Z_1)_+}\Phi_{1,2}^n\Big]-\E_n\Big[\frac{1}{(a_n-Z_1)_+}\Phi_{1,2}^n\Big]^2\\
&& \quad\quad\quad\quad -\E_n\Big[\frac{1}{(a_n-Z_1)_+}
\frac{1}{(a_n-Z_2)_+}\Phi_{1,2}^n\Phi_{2,1}^n\Big] +o(a_n^{-2}).
\eeao
In order to complete the proof it suffices to show the following two relations:
\beam\label{relation1}
2\E_n\Big[\frac{1}{(a_n-Z_1)_+}\Phi_{1,2}^n\Big]-2\Phi^*\mathbb{P}(\eta_1+Z_1>a_n) + \frac{\Phi^*}{a_n^{2}} = o(a_n^{-2}),
\eeam
\beam\label{relation2}
\E_n\Big[\frac{1}{(a_n-Z_1)_+}\Phi_{1,2}^n\Big]^2 &+&\E_n\Big[\frac{1}{(a_n-Z_1)_+}
\frac{1}{(a_n-Z_2)_+}\Phi_{1,2}^n\Phi_{2,1}^n\Big]\nonumber\\
&=& 2\frac{(\Phi^*)^2}{a_n^{2}}+o(a_n^{-2}).
\eeam
To see that \eqref{relation1} and \eqref{relation2} yield the result in Lemma \ref{le3.2ii}, observe that
\beao
\frac{\mathbb{P}(\eta_1+Z_1>a_n,\eta_2+Z_2>a_n)}{\mathbb{P}(\eta_1+Z_1> a_n)}&=&
2-\frac{I_1}{\mathbb{P}(\eta_1+Z_1> a_n)}\\
	&=& 2-2\Phi^*+\frac{\Phi^*-2(\Phi^*)^2}{a_n^2\mathbb{P}(\eta_1+Z_1> a_n)}+o_(a_n^{-1})\\
	&=&\chi +\frac12 a_n^{-1} (\chi-2)(\chi-1)+o(a_n^{-1})\,,
\eeao
where we have used the properties $\chi=2-2\Phi^*$ and $a_n\mathbb{P}(\eta_1+Z_1>a_n)\to 1.$.

Using a Taylor series expansion and the relation 
$$
\log\frac{(a_n-Z_2)_+}{(a_n-Z_1)_+}= \frac{Z_1-Z_2}{a_n}+(Z_1-Z_2)^2\cdot O(a_n^{-2}),
$$ 
it follows that on the set $\{|Z_1|\vee |Z_2|\le b_n\}$,
\beao
\Phi_{1,2}^n-\Phi^*&=&c\log\frac{(a_n-Z_2)_+}{(a_n-Z_1)_+}\Phi'(c/2)+O_p(a_n^{-2})\\
&=&c\frac{Z_1-Z_2}{a_n}\Phi'(c/2) +O_p(a_n^{-2})\,.
\eeao
Finally turning to \eqref{relation1} and applying \eqref{etaplusz}, the left-hand side is equal to 
\beao
&&2\E_n\Big[\frac{1}{(a_n-Z_1)_+}(\Phi_{1,2}^n-\Phi^*)\Big]+\frac{\Phi^*}{a_n^2} - o(a_n^{-2})\\
&=&c\E_n\frac{Z_1-Z_2}{a_n(a_n-Z_1)}+ \Phi'(c/2)\frac{\Phi^*}{a_n^2}-o(a_n^{-2})\,,
\eeao
which by multiplying by $a_n^2$ and taking limits gives the desired limit of $\Phi^*$.

Finally, \eqref{relation2} is obtained by multiplying both sides of the equation $a_n^{2}$ and taking limits, where the interchange of limits and expectation are justified by the dominating convergence theorem.  This completes the proof. 
\end{proof}

\section{Simulation study}\label{Simulation}

We examine the performance of the {WLSEs} by simulating a large number of Brown-Resnick processes with dependence function~\eqref{delta}.
Many real data may not follow a Brown-Resnick process precisely. For a more realistic setting, we thus do not perform the simulation study for a Brown-Resnick process only, but also for the sum of a Brown-Resnick process and some noise. This sum is regularly varying and possesses the same dependence function and extremogram for which all results of Sections~2 
and~3 
hold.
As we want to use the bias reduction procedure of Section~4, 
we have to make sure that Lemma~4.2 
extends to this setting; all other results of this section follow from that. 
This is guaranteed by Lemma~\ref{le3.2ii}.

The estimation of the spatial parameters relies on a rather large number of spatial observations and the estimation of the temporal parameters on a rather large number of observed time points. However, simulation of Brown-Resnick space-time processes based on the exact method proposed by \citet{Dombry2} can be time consuming, if both a large number of spatial locations and of time points is taken. 
For a time-saving method we generate the process on two different space-time observation areas, one for examining the performance of the spatial estimates and one for the temporal estimates, which we call $\mathcal{S}^{(1)} \times \mathcal{T}^{(1)}$ and $\mathcal{S}^{(2)} \times \mathcal{T}^{(2)}$, respectively.
The design for the simulation experiment is given in more detail as follows:
\begin{enumerate}[leftmargin=*]
\item We choose two space-time observation areas
\begin{align*}
\mathcal{S}^{(1)} \times \mathcal{T}^{(1)} &= \left\{(i_1,i_2): i_1,i_2 \in \left\{1,\ldots,70\right\}\right\} \times \{1,\ldots,10\}\\
\mathcal{S}^{(2)} \times \mathcal{T}^{(2)} &= \left\{(i_1,i_2): i_1,i_2 \in \left\{1,\ldots,5\right\}\right\} \times \{1,\ldots,300\}
\end{align*}
{and the sets $\calv=\{1,\sqrt{2},2,\sqrt{5},\sqrt{8},3,\sqrt{10},\sqrt{13},4,\sqrt{17}\}$ and $\calu=\{1,\ldots,10\}$.}

\item 
We simulate the Brown-Resnick space-time process (4.1) 
based on the exact method proposed in \citet{Dombry2}, using the \texttt{R}-package \texttt{RandomFields}~\cite{Schlather5}. 
The dependence function $\delta$ is modelled as in~(4.2) (cf. \eqref{delta}); i.e.,
$$\delta(v,u)=2\theta_1 v^{\alpha_1}+2\theta_2u^{\alpha_2}, \quad v,u \geq 0,$$
with parameters 
$$\theta_1 = 0.4, \ \alpha_1=1.5, \quad \theta_2=0.2, \ \alpha_2=1.$$
\item 
The parameters $\theta_1,\alpha_1,\theta_2$ and $\alpha_2$ are estimated.
\begin{itemize}[leftmargin=*]
\item 
For the estimation of the empirical extremograms (cf. equations~(2.5)-(2.8)) 
we have to choose high empirical quantiles $q$. 
In practice, $q$ is chosen from an interval of high quantiles for which the empirical extremogram is robust, see the remarks of \citet{DMZ} after Theorem~2.1. 
We choose the $90\%-$empirical quantile for the estimation of the spatial parameters and the $70\%-$quantile for the temporal part. 
The quantile for the temporal part is lower to ensure reliable estimation of the extremogram, because the number of time points ($300$) used for the estimation of the temporal parameters is much smaller than the number of spatial locations ($70 \cdot 70=4900$)  used for the estimation of the spatial parameters. 
\item 
The weights in the constrained weighted linear regression problem (see (2.9) 
and (2.10)) 
are chosen such that locations and time points which are further apart of each other have less influence on the estimation.
More precisely, we choose
$$w_u = \exp\{-u^2\} \, \mbox{ for } \,  u\in \mathcal{U} \quad\text{ and } \quad w_v = \exp\{-v^2\} \, \mbox{ for } v \in \calv.$$ 
\end{itemize}
{This choice of weights reflects the exponential decay of $\chi(v,0)$ and $\chi(0,u)$ given  in (4.4), 
which are tail probabilities of the standard normal distribution $\Phi$.}
\item 
Pointwise confidence bounds are computed by subsampling as described in Section~\ref{Sec:subsampling} for the spatial parameters of general regularly varying processes and in Section~3.4.2 of \cite{buhlphd} for the temporal parameters of the Brown-Resnick space-time process considered in this section. We choose block lengths $\bs b=(50,50,10)$ and overlap $\bs e=(2,2,10)$ for the space-time process with observation area $\mathcal{S}^{(1)} \times \mathcal{T}^{(1)}$ and $\bs b=(5,5,200)$, $\bs e=(5,5,1)$ for the process with observation area $\mathcal{S}^{(2)} \times \mathcal{T}^{(2)}$.
\item
{Steps 1 - 5 are repeated 100 times.}
\end{enumerate}

Figure \ref{Fig1} shows the WLSEs of the spatial parameters $\theta_1$ and $\alpha_1$ for each of the 100 realizations of the Brown-Resnick space-time process. 
The dashed lines above and below the dots are pointwise confidence intervals based on subsampling. 
Panel (a) of Table~\ref{summarysemi} shows the mean, root mean squared error (RMSE) and mean absolute error (MAE) of both the spatial and the temporal WLSEs based on the 100 simulations.
Altogether, we observe that the estimates are close to the true values.
The spatial estimates are slightly superior to the temporal ones, which is due to the larger number of observations in space than in time.

\begin{center}
\captionsetup{type=table}
\subfloat[]{\begin{tabular}{c|c|c|c}
& MEAN & RMSE & MAE \\
\hline
$\theta_1$ & 0.4033 & 0.0678 & 0.0559 \\
$\alpha_1$ & 1.4984 & 0.0521 & 0.0400 \\
$\theta_2$ & 0.2249 & 0.0649 & 0.0526 \\
$\alpha_2$ & 0.9563 & 0.0939 & 0.0767 
\end{tabular}}
\subfloat[]{\hspace*{0.5cm}\begin{tabular}{c|c|c|c|}
& MEAN & RMSE & MAE \\
\hline
$\theta_1$ & 0.4008 & 0.0668 & 0.0552 \\
$\alpha_1$ & 1.4946 & 0.0525 & 0.0400 \\
$\theta_2$ & 0.2188 & 0.0597 & 0.0489 \\
$\alpha_2$ & 0.9275 & 0.0976 & 0.0799 
\end{tabular} }
\captionof{table}{Mean, RMSE and MAE of the WLSEs when applied to exact realizations from the Brown-Resnick process (a) and to realizations with observational noise (b).}
\label{summarysemi}
\end{center}

\begin{figure} 
\centering
\subfloat[]{\includegraphics[scale=0.29]{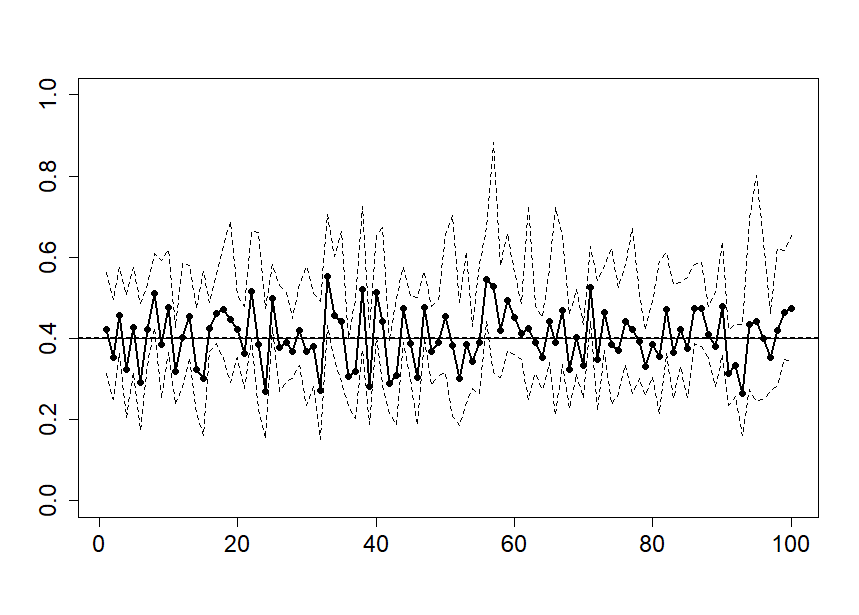}}
\subfloat[]{\includegraphics[scale=0.29]{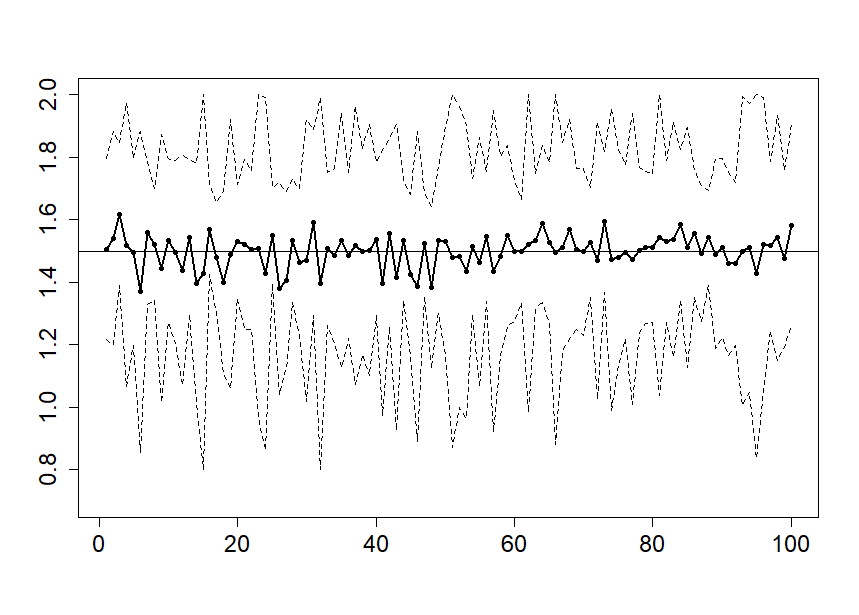}}\\[-7mm]
\caption{ WLSEs of $\theta_1$ (left) and $\alpha_1$ (right) for 100 simulated Brown-Resnick space-time processes together with pointwise $95\%-$subsampling confidence intervals (dashed). The middle solid line is the true value and the middle dashed line represents the mean over all estimates.} \label{Fig1}
\end{figure}

\begin{figure}
\centering
\subfloat[]{\includegraphics[scale=0.29]{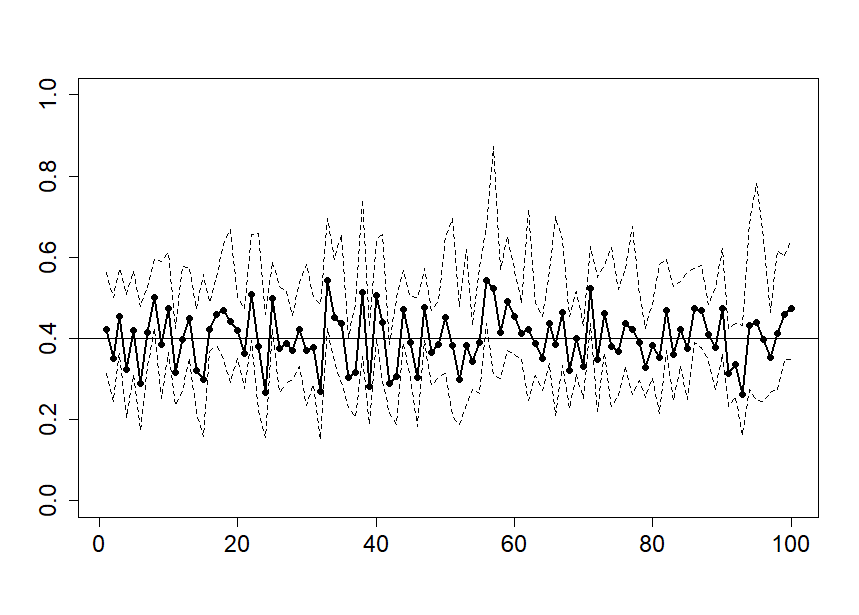}}
\subfloat[]{\includegraphics[scale=0.29]{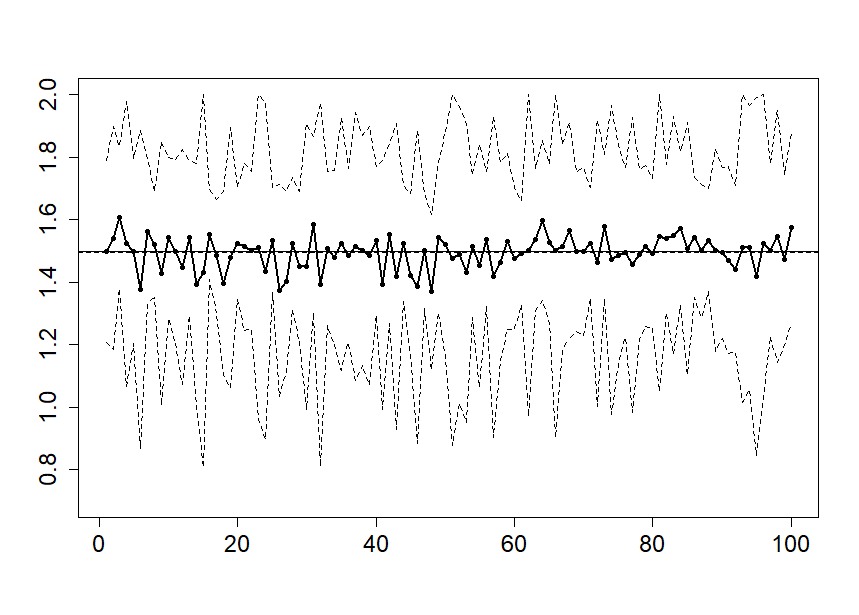}}\\[-7mm]
\caption{WLSEs of $\theta_1$ (left) and $\alpha_1$ (right) for 100 simulated Brown-Resnick space-time processes with noise together with pointwise $95\%-$subsampling confidence intervals (dashed). The middle solid line is the true value and the middle dashed line represents the mean over all estimates.} \label{space_pert}
\end{figure}

\begin{figure}
\centering
\subfloat[]{\includegraphics[scale=0.29]{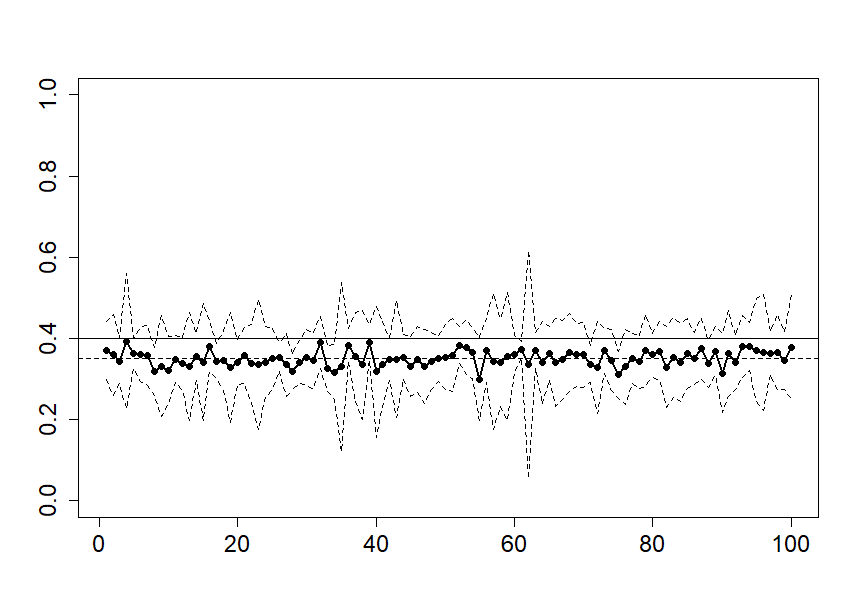}}
\subfloat[]{\includegraphics[scale=0.29]{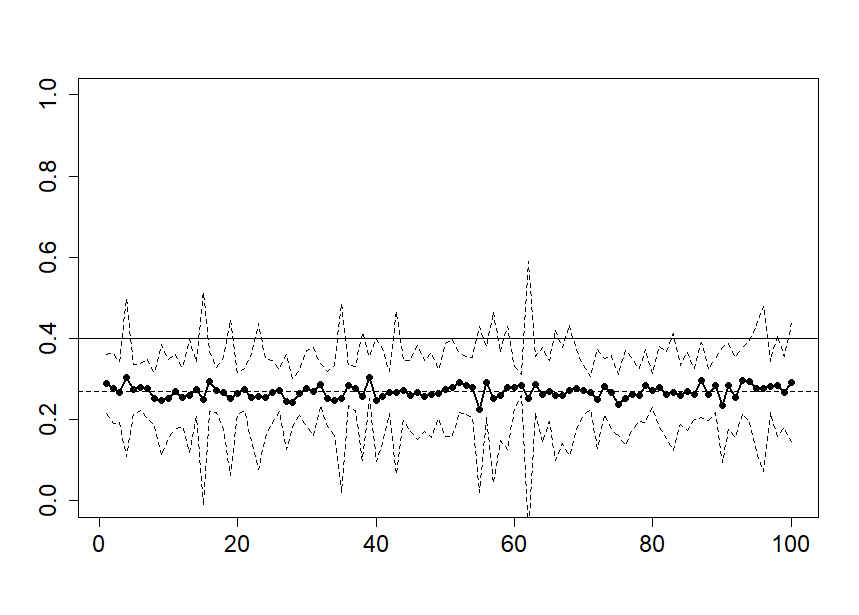}}\\[-7mm]
\caption{Pairwise likelihood estimates of $\theta_1$ for 100 simulated Brown-Resnick space-time processes together with pointwise $95\%-$subsampling confidence intervals (dashed) in the left-hand plot; the corresponding estimates and pointwise $95\%-$subsampling confidence intervals (dashed) for the  simulated processes with noise are presented in the right-hand plot.
The middle solid line is the true value and the middle dashed line represents the mean over all estimates.} \label{space_PLE_unper_pert}
\end{figure}

In comparison with pairwise likelihood estimation in finite samples (cf. \citet{Steinkohl2}), a big advantage of the semiparametric method is the substantial reduction of computation time by about a factor 15.
Moreover, the semiparametric estimation method is much more robust when applied to observations that reveal slight deviations from the model assumptions. 
To illustrate this point, we repeated the simulation study described above with data obtained from the original ones by adding to each measurement the absolute value of an independent $\mathcal{N}(0,0.2)$-distributed error. 
The results of the semiparametric estimation remain practically unaffected; see the summary measures in the panel (b) of Table~\ref{summarysemi} and Figure~\ref{space_pert}. This result can be explained theoretically by Lemma~\ref{le3.2ii}. Adding noise to observations of the Brown-Resnick process does not affect the underlying true extremogram nor the rate of convergence of the empirical extremogram to the true one. 
In contrast, when applying pairwise likelihood estimation to the same simulated data we observe that the estimates are much more sensitive to small disturbations than the semiparametric estimation.
Whereas for the original data the estimates are slightly biased, for the corrupted data the bias increases considerably; their variances, however, remain nearly unaffected and are (not surprisingly) smaller than the corresponding variances of the semiparametric estimates.
Figure~\ref{space_PLE_unper_pert} illustrates the pairwise likelihood estimates of the spatial parameter $\theta_1$ for the simulated Brown-Resnick space-time processes together with 95\%-subsampling confidence intervals and for the data with noise.

\bibliographystyle{plainnat}
\bibliography{bibtex_spacetime}